\documentclass[a4paper,UKenglish,cleveref, autoref, thm-restate]{lipics-v2021} 
%This is a template for producing LIPIcs articles. 
%See lipics-v2021-authors-guidelines.pdf for further information.
%for A4 paper format use option "a4paper", for US-letter use option "letterpaper"
%for british hyphenation rules use option "UKenglish", for american hyphenation rules use option "USenglish"
%for section-numbered lemmas etc., use "numberwithinsect"
%for enabling cleveref support, use "cleveref"
%for enabling autoref support, use "autoref"
%for anonymousing the authors (e.g. for double-blind review), add "anonymous"
%for enabling thm-restate support, use "thm-restate"
%for enabling a two-column layout for the author/affilation part (only applicable for > 6 authors), use "authorcolumns"
%for producing a PDF according the PDF/A standard, add "pdfa"

\newtheorem*{theorem*}{Theorem}

\nolinenumbers

%\EventEditors{Stefan Szeider, Robert Ganian, and Alexandra Silva}
%\EventNoEds{3}
%\EventLongTitle{47th International Symposium on Mathematical Foundations of Computer Science (MFCS 2022)}
\EventShortTitle{Full Version}
%\EventAcronym{MFCS}
%\EventYear{2022}
%\EventDate{August 22--26, 2022}
%\EventLocation{Vienna, Austria}
%\EventLogo{}
%\SeriesVolume{241}
\ArticleNo{39}

%\pdfoutput=1 %uncomment to ensure pdflatex processing (mandatatory e.g. to submit to arXiv)
%\hideLIPIcs  %uncomment to remove references to LIPIcs series (logo, DOI, ...), e.g. when preparing a pre-final version to be uploaded to arXiv or another public repository

%\graphicspath{{./graphics/}}%helpful if your graphic files are in another directory

\bibliographystyle{plainurl}% the mandatory bibstyle

\title{On the Number of Quantifiers as a Complexity Measure} %TODO Please add

%\titlerunning{FILE NO LONGER BEING MAINTAINED} %TODO optional, please use if title is longer than one line

%\author{Ronald Fagin {IBM Research Almaden, USA}}{fagin@us.ibm.com}{[orcid]}

%\author{Ronald Fagin}{IBM Research Almaden, USA}{fagin@us.ibm.com}
\author{Ronald Fagin}{IBM Research-Almaden, USA}{fagin@us.ibm.com}{https://orcid.org/
0000-0002-7374-0347}{}

\author{Jonathan Lenchner\footnote{Corresponding author}}{IBM T.J. Watson Research Center, USA}{lenchner@us.ibm.com}{https://orcid.org/0000-0002-9427-8470}{}

\author{Nikhil Vyas}{MIT EECS, USA}{nikhilv@mit.edu}{0000-0002-4055-7693}{}

\author{Ryan Williams}{MIT CSAIL and EECS, USA}{rrw@umit.edu}{0000-0003-2326-2233}{}

%\author{Ronald Fagin \and Jonathan Lenchner \and Nikhil Vyas \and Ryan Williams}%{
%\author{Author names hidden for double blind review}%{
%\and \url{http://www.myhomepage.edu}}{johnqpublic@dummy\and \url{http://www.myhomepage.edu}}{johnqpublic@dummyuni.org}{https://orcid.org/0000-0002-1825%-0097}{}
%\author{Jane {Open Access}}
%{Dummy University Computing Laboratory, [optional: Address], Country \and My second affiliation, Country \and \url{http://www.myhomepage.edu} }{johnqpublic@dummyuni.org}{https://orcid.org/0000-0002-1825-0097}{(Optional) author-specific funding acknowledgements}}
%TODO mandatory, please use full name; only 1 author per \author macro; first two parameters are mandatory, other parameters can be empty. Please provide at least the name of the affiliation and the country. The full address is optional. Use additional curly braces to indicate the correct name splitting when the last name consists of multiple name parts.

%\authorrunning{Author names hidden for double blind review}
\authorrunning{R. Fagin et al.}
%\authorrunning{FILE NO LONGER BEING MAINTAINED}

%\author{Joan R. Public\footnote{Optional footnote, e.g. to mark corresponding author}}{Department of Informatics, Dummy College, [optional: Address], Country}{joanrpublic@dummycollege.org}{[orcid]}{[funding]}

%\authorrunning{J. Open Access and J.\,R. Public} %TODO mandatory. First: Use abbreviated first/middle names. Second (only in severe cases): Use first author plus 'et al.'

\Copyright{Ronald Fagin, Jonathan Lenchner, Nikhil Vyas and Ryan Williams} %TODO mandatory, please use full first names. LIPIcs license is "CC-BY";  http://creativecommons.org/licenses/by/3.0/

\relatedversiondetails[cite=Fagin21]{``Extended Version''}{http://TBD}  %Must be filled out with correct information once available!

%\ccsdesc[100]{\textcolor{red}{Replace ccsdesc macro with valid one}} %TODO mandatory: Please choose ACM 2012 classifications from https://dl.acm.org/ccs/ccs_flat.cfm 
\begin{CCSXML}
<ccs2012>
<concept>
<concept_id>10003752.10003790.10003799</concept_id>
<concept_desc>Theory of computation~Finite Model Theory</concept_desc>
<concept_significance>500</concept_significance>
</concept>
</ccs2012>
\end{CCSXML}

\ccsdesc[500]{Theory of computation~Finite Model Theory}

\keywords{number of quantifiers, multi-structural games, complexity measure, s-t connectivity, trees, rooted trees}

\category{} %optional, e.g. invited paper

\relatedversion{} %optional, e.g. full version hosted on arXiv, HAL, or other respository/website
%\relatedversiondetails[linktext={opt. text shown instead of the URL}, cite=DBLP:books/mk/GrayR93]{Classification (e.g. Full Version, Extended Version, Previous Version}{URL to related version} %linktext and cite are optional

%\supplement{}%optional, e.g. related research data, source code, ... hosted on a repository like zenodo, figshare, GitHub, ...
%\supplementdetails[linktext={opt. text shown instead of the URL}, cite=DBLP:books/mk/GrayR93, subcategory={Description, Subcategory}, swhid={Software Heritage Identifier}]{General Classification (e.g. Software, Dataset, Model, ...)}{URL to related version} %linktext, cite, and subcategory are optional

%\funding{(Optional) general funding statement \dots}%optional, to capture a funding statement, which applies to all authors. Please enter author specific funding statements as fifth argument of the \author macro.

\acknowledgements{}%optional

%\nolinenumbers %uncomment to disable line numbering

%\usepackage{bm}
\usepackage{soul}

\newtheorem{prop}[lemma]{Proposition}

\newcommand{\AC}{\ensuremath{\mathsf{AC}}}
\newcommand{\NC}{\ensuremath{\mathsf{NC}}}
\newcommand{\FO}{\ensuremath{\mathsf{FO}}}

\newcommand{\bi}{\begin{itemize}}
\newcommand{\ei}{\end  {itemize}}
\newcommand{\bt}{\begin{tabbing}}
\newcommand{\et}{\end  {tabbing}}
\newcommand{\be}{\begin{enumerate}}
\newcommand{\ee}{\end  {enumerate}}

\newcommand{\fraisse}{Fra\"{i}ss\'{e} }

\newcommand{\set}[1]{\{#1\}}
\newcommand{\qr}{\mathit{qr}}

\newcommand{\ef}{Ehrenfeucht-\fraisse}
\newcommand{\edashf}{E-F }
\newcommand{\plog}{\text{polylog}}
\newcommand{\vs}{vs. }

\usepackage{mathtools}

\DeclarePairedDelimiter\ceil{\lceil}{\rceil}

%\relatedversiondetails{Full Version}{https://arxiv.org/abs/TBD}  %possibly refer back to abridged version....

\begin{document}

\maketitle

%\begin{abstract}
%In 1981, Immerman~\cite{Immerman81} introduced a game he called the ``Separability game'', somewhat analogous to \ef games. Unlike \ef games, which characterize the quantifier rank needed to describe a property $P$ in first-order logic~\cite{Ehr61, Fra54}, the Separability game characterizes the total number of quantifiers needed to describe such a property. These games lay unstudied until they were rediscovered in the recent paper of Fagin, Lenchner, Regan and Vyas~\cite{Fagin21}. In that paper, the authors reprove the fundamental equivalence theorem 
%connecting optimal game play with the minimum number of quantifiers needed to express a property of structures in first-order logic, and characterize the minimum number of quantifiers needed to distinguish linear orders of any two different sizes. 
%In this paper, we prove that there are properties of finite structures (in particular, finite graphs) for which the number of quantifiers needed to express the properties in first-order logic can be at least exponentially greater than the quantifier rank.
%We then extend the results in \cite{Fagin21} on linear orders in several directions. 
%First, we characterize the number of quantifiers needed to distinguish rooted trees %and forests 
%of two different depths. We then give lower and upper bounds, which are within a small constant factor of one another, for the number of quantifiers needed to express $s$-$t$ connectivity. 
%\end{abstract}
\begin{abstract}
In 1981, Neil Immerman described a two-player game, which he called the ``separability game'' \cite{Immerman81}, that captures the number of quantifiers needed to describe a property in first-order logic. Immerman's paper laid the groundwork for studying the number of quantifiers needed to express properties in first-order logic, but the game seemed to be too complicated to study, and the arguments of the paper almost exclusively used quantifier rank as a lower bound on the total number of quantifiers. However, last year Fagin, Lenchner, Regan and Vyas \cite{Fagin21} rediscovered the game, provided some tools for analyzing them, and showed how to utilize them to characterize the number of quantifiers needed to express linear orders of different sizes.  In this paper, we push forward in the study of number of quantifiers as a bona fide complexity measure by establishing several new results. First we carefully distinguish minimum number of quantifiers from the more usual descriptive complexity measures, minimum quantifier rank and minimum number of variables. Then, for each positive integer $k$, we give an explicit example of a property of finite structures (in particular, of finite graphs) that can be expressed with a sentence of quantifier rank $k$, but where the same property needs 
$2^{\Omega (k^2)}$ quantifiers to be expressed.
%$f(k)$ quantifiers to be expressed, where 
%$%f(k) = 2^{\Omega (k^2)}$.
%$f(k)$ is more than  exponential in $k$ (it is %$2^{\Omega (k^2)}$).
We next give the precise number of quantifiers needed to distinguish two rooted trees of different depths. Finally, we give a new upper bound on the number of quantifiers needed to express $s$-$t$ connectivity, improving the previous known bound by a constant factor.
\end{abstract}

%\hl{1. write the improved lower bound 2. prior work section 3. think about open problems.}

\newpage

% !TEX root = new_rev.tex

\section{Introduction}
In 1981 Neil Immerman described a two-player combinatorial game, which he called the ``separability game'' \cite{Immerman81}, that captures the number of quantifiers needed to describe a property in first-order logic (henceforth FOL).  In that paper Immerman remarked,

\begin{small}
\begin{quote}
``Little is known about how to play the separability game. We leave it here as a 
jumping off point for further research. We urge others to study it, hoping that the separability game may become a viable tool for ascertaining some of the lower bounds which are `well believed' but have so far escaped proof.''
\end{quote}
\end{small}

Immerman's paper laid the groundwork for studying the number of quantifiers needed to express properties in FOL, but alas, the game seemed too complicated to study and
%, with the exception of one result\footnote{The one result is the following: Let $Quants^S[f(n)]$ denote the family of properties expressible with $f(n)$ quantifiers in a language that includes a successor relation, $S(,)$ on pairs of elements of the structure. Then
%\begin{equation*}
%    NSPACE[f(n)] \subseteq Quants^S\Big[\frac{(f(n))^2}{\log{n}}\Big] %subseteq DSPACE[(f(n))^2].
%\end{equation*}} (which did not make use of games) 
the paper used the surrogate measure of quantifier rank, which provides a lower bound on the number of quantifiers, to make 
%virtually all of 
its arguments.
One of the reasons for the difficulty of directly analyzing the number of quantifiers is that the separability game is played on a pair $(\mathcal{A}, \mathcal{B})$ of \textit{sets} of structures, rather than on a pair of structures as in a conventional Ehrenfeucht-Fra\"{i}ss\'{e} game. However, last year Fagin, Lenchner, Regan and Vyas \cite{Fagin21} %independently 
rediscovered the games, provided some tools for analyzing them, and showed how to utilize them to characterize the number of quantifiers needed to express linear orders of different sizes.  In this paper, we push forward in the study of number of quantifiers as a bona fide complexity measure by establishing several new results, using these rediscovered games as an important, though not exclusive, tool. Although Immerman called his game the ``separability game,'' we keep to the more evocative ``multi-structural game,'' as coined in \cite{Fagin21}.

Given a property $P$ definable in FOL, let $Quants(P)$ denote the minimum number of quantifiers over all FO sentences that express $P$. \textbf{This paper exclusively considers expressibility in FOL.}
%we are generically interested in the expression that distinguishes $P$ from other structures not satisfying $P$, which utilizes a minimum number of quantifiers. We shall denote this quantity by $Quants(P)$.
$Quants(P)$ is related to two more widely studied descriptive complexity measures, the minimum quantifier rank needed to express $P$, and the minimum number of variables needed to express $P$. 
The quantifier rank of an FO sentence $\sigma$ is typically denoted by $qr(\sigma)$. We shall denote the \emph{minimum} quantifier rank over \emph{all} FO sentences describing the property $P$ by $Rank(P)$, and denote the minimum number of variables needed to describe $P$ by $Vars(P)$. When referring to a specific sentence $\sigma$, we shall denote the analogs of $Quants()$ and $Vars()$ by $quants(\sigma)$ and $vars(\sigma)$. (That is, $quants(\sigma), vars(\sigma)$ and $qr(\sigma)$ refer to the number of quantifiers, variables and quantifier rank of the particular sentence $\sigma$.) On the other hand, $Quants(P), Vars(P)$ and $Rank(P)$ refer to the minimum values of these quantities among all expressions describing $P$. Possibly there is one sentence establishing $Quants(P)$, another establishing $Vars(P)$, and a third establishing $Rank(P)$.
%
%How do these measures relate? 
We investigate the \emph{extremal} behavior of $Quants(P)$, via studying concrete properties $P$ for which $Quants(P)$ behaves differently from the other measures.

%First of all, since variables can be reused, note that $Quants(P) \geq Vars(P)$ for every property $P$, and $Quants(P)$ can be much larger than $Vars(P)$. 
%% ryan says: Maybe we can just state the above, and let the reader think about an example...
%Any sentence can be rewritten in prenex normal form, preserving the number of quantifiers, but potentially adding additional variables. For example, the sentence $\sigma = \exists x(\exists y E(x,y) \land \exists y E(y,x))$, of $3$ quantifiers and $2$ variables, which distinguishes the graph $B$ from the graph $A$ in Figure \ref{fig:lo3_vs_lo2}, can be rewritten in prenex normal form using $3$ quantifiers and $3$ variables, as $\sigma' = \exists x\exists y \exists  z(E(x,y) \land E(z,x))$. 
%Besides the obvious fact that $Vars(P) \leq Quants(P)$, 
%The following simple proposition observes that the minimum number of variables can be upper-bounded by minimum rank as well. % In fact, the following simple proposition shows  we have something a bit stronger than the obvious relation , as the following simple proposition shows.
First of all, for every property $P$, since every variable in a sentence describing $P$ is bound to a quantifier, and quantifiers can only be bound to a single variable, it must be that $Vars(P) \leq Quants(P)$. The following simple proposition observes that $Vars(P)$ is also upper bounded by $Rank(P)$.

\begin{prop}\label{vars_vs_rank_thm} For every property $P$: $Vars(P) \leq Rank(P)$.
\end{prop}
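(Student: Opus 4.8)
The plan is to establish a purely syntactic normal-form fact: every first-order sentence of quantifier rank $r$ is logically equivalent to a sentence that uses at most $r$ distinct variable names. Granting this, given a sentence $\sigma$ expressing $P$ with $qr(\sigma) = Rank(P)$, the equivalent sentence produced by the normal form witnesses that $Vars(P) \le Rank(P)$, since logically equivalent sentences define the same property.

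To push an induction through, the statement must be strengthened to keep track of free variables, because at the level of formulas ``quantifier rank $r$'' does not bound the number of variables at all — for instance $x_1 = x_2 \wedge x_3 = x_4 \wedge \cdots \wedge x_{2m-1} = x_{2m}$ has quantifier rank $0$ but $2m$ variables. The claim I would prove, by induction on the structure of $\varphi$, is: \emph{if every free variable of $\varphi$ lies among $x_1,\dots,x_n$ and $qr(\varphi) \le r$, then $\varphi$ is logically equivalent to some $\psi$ with free variables among $x_1,\dots,x_n$ that uses at most $n+r$ distinct variable names}. Atomic formulas are immediate, since then $r = 0$ and $\varphi$ itself works. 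For negation and the other Boolean connectives, apply the inductive hypothesis to the immediate subformulas — all of which have free variables among $x_1,\dots,x_n$ and quantifier rank $\le r$ — and then reuse the \emph{same} batch of at most $r$ fresh bound-variable names across the sibling subformulas; this is sound because those names occur only bound and the siblings are syntactically disjoint, so renaming cannot create unintended captures. For $\varphi = \exists y\,\varphi_1$, note that $\varphi_1$ has free variables among $x_1,\dots,x_n,y$ and $qr(\varphi_1) \le r-1$, so the hypothesis gives an equivalent $\psi_1$ over those free variables using at most $(n+1)+(r-1) = n+r$ names, and then $\exists y\,\psi_1$ is equivalent to $\varphi$, has free variables among $x_1,\dots,x_n$, and still uses at most $n+r$ names.

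Specializing the claim to a sentence gives $n = 0$, so quantifier rank $r$ yields an equivalent sentence with at most $r$ variable names, which is exactly what is needed. The only delicate point — what I would call the main obstacle — is choosing the inductive invariant so that the existential step closes: the variable $y$ introduced by the quantifier must be charged against the same $n+r$ budget, which is why the bound is $n+r$ rather than something like $\max(n,r)$; and the Boolean step genuinely relies on the freedom to rename bound variables so that siblings may share variable names.
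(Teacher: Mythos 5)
Your proof is correct and is essentially the paper's argument: both are structural inductions that rename bound variables so that their number never exceeds the quantifier rank, reusing the same names across Boolean siblings and charging one new name per quantifier step. The only nitpick is that your stated invariant (``at most $n+r$ distinct names in total'') is slightly too weak on its own to drive the Boolean case as you describe it; what you actually use, and should state, is the sharper invariant that the \emph{bound} names number at most $r$ and are drawn from a fixed reserve disjoint from $x_1,\dots,x_n$ --- which is precisely the paper's formulation via $bdvars(\phi)\le qr(\phi)$.
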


\begin{proof} We prove this result by showing that every \textit{formula} $\phi$, possibly with free variables, can be rewritten simply by changing the names of some of the variables, so that the number of \textit{bound} variables does not exceed the quantifier rank. Denote the minimum possible number of bound variables needed to express $\phi$ by $bdvars(\phi)$. %We proceed by induction on the length of $\phi$. 
If $\phi$ is a term, then $bdvars(\phi) = 0$ so $bdvars(\phi) = qr(\phi)$. Inductively, if $\phi$ is a formula satisfying $bdvars(\phi) \leq qr(\phi)$ and $\psi = \neg \phi$ then we also have $bdvars(\psi) \leq qr(\psi)$. Further, if $\phi$ satisfies $bdvars(\phi) \leq \qr(\phi)$ and $\psi$ satisfies $bdvars(\psi) \leq \qr(\psi)$, then consider $\phi \circ \psi$, for $\circ \in \{\vee, \land\}$.  Let $m = \min(bdvars(\phi), bdvars(\psi))$ and without loss of generality assume $bdvars(\phi) \leq bdvars(\psi)$. Then, if $x_1,\ldots,x_m$ are the names of the bound variables in $\phi$, we can use these same variable names for the first $m$ bound variables in $\psi$, and as well in $\phi \circ \psi$, with no change in meaning. Then $bdvars(\phi \circ \psi) = bdvars(\psi) \leq qr(\psi) \leq qr(\phi \circ \psi)$. Lastly, suppose $bdvars(\phi) \leq qr(\phi)$ and we add a quantifier $Q \in \{\forall, \exists\}$ over a free variables $x$ in $\phi$ to form $\psi = Qx\phi$. The variable name $x$ may or may not be distinct from any of the previously bound variable names in $\phi$ so that $bdvars(\psi) \leq bdvars(\phi) + 1$, while $qr(\psi) = qr(\phi) + 1$. Thus again $bdvars(\psi) \leq qr(\psi)$. Since for a sentence $\sigma$, we have $vars(\sigma) = bdvars(\sigma)$, the lemma is established.
\end{proof}

As a corollary, since clearly $Rank(P) \leq Quants(P)$, we have: 
\smallskip
\begin{align}\label{ineq}
Vars(P) \leq Rank(P) \leq Quants(P).
\end{align}
Furthermore, it follows from Immerman~\cite[Prop. 6.15]{Imm99} that $Quants(P)$ and $Rank(P)$ can both be arbitrarily larger than $Vars(P)$. When the property $P$ is $s$-$t$ connectivity up to path length $k$, Immerman shows that $Vars(P) \leq 3$, yet $Rank(P) \geq \log_2(k)$. %can be expressed with a sentence using only $3$ variables and quantifier rank $\log_2(k)$, and $\log_2(k)$ is also the minimum quantifier rank needed to express this property. % regardless of the number of variables. 

\paragraph*{Summary of Results} From equation \eqref{ineq}, we see that the number of quantifiers needed to express a property is lower-bounded by the minimum quantifier rank and number of variables. How much larger can $Quants(P)$ be, compared to the other two measures? It is known (see \cite{dawgrokre+07}) that there \emph{exists} a fixed vocabulary $V$ and an infinite sequence $P_1, P_2,...$ of properties such that $P_k$ is a property of finite structures with vocabulary $V$ such that $Rank(P_k) \leq k$, yet $Quants(P_k)$ is not an elementary function of $k$. However, the existence of such $P_k$ are proved via counting arguments.
We provide an \emph{explicitly computable} sequence of properties $\{P_k\}$ with a high growth rate in terms of the number of quantifiers required. (By ``explicitly computable'', we mean that there is an algorithm $A$ such that, given a positive integer $k$, the algorithm $A$ prints a FO sentence $\sigma_k$ with quantifier rank $k$ defining the property $P_k$, in time polynomial in the length of $\sigma_k$.)

%\begin{theorem}[See  Theorem~\ref{thm:discrepency}, Section~\ref{sec:discrepency}]
%\label{thm:intro1}
%There is an explicitly computable sequence of properties $\{P_k\}$ such that for all $k$ we have $Rank(P_k) \leq k$, yet %$Quants(P_k) \geq 2^{\Omega(k^2)}$.
%\end{theorem}
\begin{theorem*}[Theorem~\ref{thm:discrepency}, Section~\ref{sec:discrepency}]
\label{thm:intro1}
There is an explicitly computable sequence of properties $\{P_k\}$ such that for all $k$ we have $Rank(P_k) \leq k$, yet $Quants(P_k) \geq 2^{\Omega(k^2)}$.
\end{theorem*}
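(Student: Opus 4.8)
The plan is to take $P_k$ to be a property of finite graphs that \emph{encode} a large, explicitly described combinatorial object. Concretely, I would use graph gadgets simulating a linear order together with $\Theta(k)$ marked ``blocks'', so that a single graph encodes a tuple $(a_1,\dots,a_k)$ with each $a_j$ in a range of size $2^{\Theta(k)}$, and let $P_k$ assert that this tuple lies in a fixed, explicitly computable set $S_k$ of size $2^{\Omega(k^2)}$. The bound $Rank(P_k)\le k$ is the routine half: a divide-and-conquer sentence first locates the block separators by binary search and then, inside each block, pins down the relevant element by halving; each coordinate is thus read with quantifier rank $O(k)$, the $k$ coordinates are handled in parallel rather than nested, and rescaling constants keeps the total quantifier rank at most $k$. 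Writing this sentence out also yields explicit computability, since the witness $\sigma_k$ is produced by a direct recursion and has length $2^{O(k^2)}$ (which, incidentally, gives the matching upper bound $Quants(P_k)\le 2^{O(k^2)}$).

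For the lower bound I would use the multi-structural game of \cite{Fagin21}. The first thing to pin down is that the blow-up cannot come from within a block of like-signed quantifiers: $\bigvee_i \exists\vec x\,\psi_i \equiv \exists\vec x\,\bigvee_i \psi_i$ lets one merge such a disjunction into a single existential block (dually for universal blocks), so $S_k$ must be chosen so that the natural defining sentence genuinely has the shape $\exists\vec x\,\bigvee_i \forall\vec y\,(\cdots)$ with the large disjunction \emph{trapped} between the $\exists$ and the $\forall$ and provably impossible to hoist out. I would then set $\mathcal{A}$ to consist of all encodings of tuples in $S_k$ and $\mathcal{B}$ of all encodings of tuples just outside $S_k$, and describe a Duplicator strategy surviving $2^{\Omega(k^2)}-1$ rounds: Duplicator maintains on its (copy-creating) side a large family of encodings still consistent with every pebbled structure on Spoiler's side, together with a potential function — roughly, the number of surviving consistent pairs, or the total number of still-unresolved coordinates over surviving structures — and the heart of the argument is to show that a single Spoiler move, of either $\exists$-type or $\forall$-type and on either side, can decrease this potential by at most a bounded factor, so it remains positive for $2^{\Omega(k^2)}$ rounds.

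The main obstacle is exactly this game analysis over an exponentially long play. In a typical \ef argument the number of rounds is small relative to the structures; here it is far larger than any single structure, so Duplicator's copying is indispensable, and one must track how Spoiler's pebbles gradually ``resolve'' coordinates of the encoded tuple. The delicate point is to show that Spoiler cannot use a pebble in one block — or a clever alternation of which side it plays on — to resolve many coordinates at once; this is precisely what forces $S_k$ to be chosen with enough independence across its $k$ coordinates, each coordinate contributing a factor $2^{\Omega(k)}$ and hence the $k^2$ in the exponent. I expect essentially all the work to be in identifying the right invariant for Duplicator's family and verifying it is preserved under both kinds of Spoiler move; by contrast, the upper bound on the rank and the explicit computability are comparatively mechanical.
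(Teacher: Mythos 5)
Your lower-bound plan cannot work as described, because of a general obstruction you have overlooked: if every structure in $\mathcal{A}$ has at most $N$ elements, then Spoiler wins the $(N+1)$-round multi-structural game on $(\mathcal{A},\mathcal{B})$ for \emph{any} $\mathcal{B}$ containing no isomorphic copy of a member of $\mathcal{A}$. Indeed, a structure with at most $N$ elements is described up to isomorphism by a sentence $\exists x_1\cdots\exists x_N\,(\delta_i\wedge\forall y\,\bigvee_j y=x_j)$ with $N+1$ quantifiers, and the disjunction of these descriptions over the finitely many isomorphism types occurring in $\mathcal{A}$ collapses to a single sentence with $N+1$ quantifiers by exactly the rule $\bigvee_i\exists\vec x\,\psi_i\equiv\exists\vec x\,\bigvee_i\psi_i$ that you yourself invoke (the trailing $\forall y\,\bigvee_j y=x_j$ is common to all disjuncts and factors out of the disjunction). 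Your encodings of tuples $(a_1,\dots,a_k)$ with each $a_j$ ranging over a set of size $2^{\Theta(k)}$ are structures with only $2^{\Theta(k)}$ elements, so the property ``encodes a tuple in $S_k$'' is itself expressible with $2^{O(k)}+O(1)$ quantifiers, and no Duplicator strategy --- potential function or not --- can survive $2^{\Omega(k^2)}$ rounds on such a family. Your explicit remark that the number of rounds will be ``far larger than any single structure'' identifies precisely the impossible part of the plan.

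The paper's proof avoids this by making the witness structure itself of size $2^{\Omega(k^2)}$. The property $P_k$ is: ``for every graph $C$ on $k-1$ vertices, the structure contains a subgraph isomorphic to the $k$-vertex graph $D_C$ obtained from $C$ by adding a vertex adjacent to all of $C$.'' This is a conjunction of $f(k-1)=2^{\Omega(k^2)}$ purely existential sentences, each of quantifier rank $k$, so the rank bound and explicit computability are immediate --- no gadgets, binary search, or rescaling are needed. For the lower bound, $\mathcal{A}=\{A\}$ where $A$ is the disjoint union of all the $D_C$ (so $A$ has $k\,f(k-1)$ points), and $\mathcal{B}$ consists of the graphs $B_p$ obtained by deleting a single point $p$ from $A$; Duplicator simply mirrors Spoiler's move on every $B_p$ that is still ``in harmony'' with $A$, and survives $k\,f(k-1)-1$ rounds because Spoiler cannot select every point of $A$. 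If you wish to rescue your route, the $k$ coordinates must somehow be aggregated into one structure of size $2^{\Omega(k^2)}$ while keeping the rank at $k$ --- which is exactly what the disjoint-union construction accomplishes.
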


%In the proof of Theorem \ref{thm:discrepency} in Section \ref{sec:discrepency}, we give an example of a property $P$ for which $Vars(P) = k$ and $Rank(P) = k$, but for which $Quants(P) = 2^{\Omega(k^2)}$. 
%It remains to find some upper bound on $\#q$ in terms of $qr$ and $\#v$, possibly incorporating something about the number and arities of the relations in a chosen vocabulary. 

%\subsection{Paper Overview}

Next, we give an example of a setting in which one can completely nail down the number of quantifiers that are necessary and sufficient for expressing a property. Building on Fagin \emph{et al.}~\cite{Fagin21}, which gives results on the number of quantifiers needed to distinguish linear orders of different sizes, we study the number of quantifiers needed to distinguish rooted trees of different depths.

Let $t(r)$ be the maximum $d$ such that there is a formula with $r$ quantifiers that can distinguish rooted trees of depth $d$ (or larger) from rooted trees of depth less than $d$. Reasoning about the relevant multi-structural games, we can completely characterize $t(r)$, as follows.

\begin{theorem*}[Theorem~\ref{thm:t_explicit}, Section~\ref{section:rooted_trees_and_forests}]
\label{thm:intro2}
For all $r \geq 1$ we have 
%\begin{align*}
%    t(2r) = \frac{7\cdot4^r}{18} + \frac{4r}{3} - \frac{8}{9}, \\
%    t(2r+1) = \frac{8\cdot4^r}{9} + \frac{4r}{3} - \frac{8}{9}.
%\end{align*}
\begin{equation*}
    t(2r) = \frac{7\cdot4^r}{18} + \frac{4r}{3} - \frac{8}{9},~~~~~~~ 
    t(2r+1) = \frac{8\cdot4^r}{9} + \frac{4r}{3} - \frac{8}{9}.
\end{equation*}
\end{theorem*}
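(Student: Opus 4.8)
The plan is to pass to the multi-structural game of \cite{Fagin21} and distill a recurrence for $t(r)$. By the game characterization of $Quants$, $t(r)$ is the largest $d$ for which Spoiler wins the $r$-round multi-structural game on $(\mathcal{A}_d,\mathcal{B}_d)$, where $\mathcal{A}_d$ is the class of rooted trees of depth at least $d$ and $\mathcal{B}_d$ the class of rooted trees of depth less than $d$. For the lower-bound side (showing $t(r)<d$) it suffices, since ``depth $\geq d$'' is true on the single root-to-leaf path $P_d$ and false everywhere in $\mathcal{B}_d$, to exhibit a Duplicator win on $(\{P_d\},\mathcal{B}_d)$; and on the Duplicator side the full class $\mathcal{B}_d$ can essentially be replaced by one universal shallow structure, the complete rooted tree of depth $d-1$ with branching at least $r$, from whose copies Duplicator can imitate every move available in any shallow tree. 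It is convenient to work one level up, with rooted \emph{forests} in place of trees: when Spoiler pebbles a node on a spine, the material hanging off that spine is naturally a forest, so the recursion closes only at the level of forests. This is why Section~\ref{section:rooted_trees_and_forests} develops the forest version and then specializes.

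The crux is to extract the recurrence by analyzing one (and then two) rounds of optimal play on the canonical instance. When Spoiler pebbles a node $v$ at depth $i$ on the spine, the game splits into two sub-games that proceed in parallel and \emph{share} the remaining quantifiers: an ``upward'' game comparing the root-to-$v$ segment (a path of length $i$, anchored at the root) to Duplicator's root-to-pebble segments, and a ``downward'' game comparing the subtree below $v$ (a path of length $d-i$, unanchored) to Duplicator's below-pebble subtrees. Because the two sub-games share future moves rather than partition them, a single extra quantifier buys roughly a factor of two in depth; an $\exists$-move and a $\forall$-move contribute asymmetrically, and the anchored upward game is slightly more economical than the unanchored downward one. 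Unwinding this over two consecutive rounds — one to place a split node and one to convert an unanchored game into an anchored one (or to flip quantifier type) — collapses to a clean two-step recurrence of the shape
\[
t(r) \;=\; 4\,t(r-2) \;-\; 2r \;+\; c_r,
\]
with $c_r$ a small constant depending only on the parity of $r$; the factor $4=2^2$ is the two-quantifier doubling, and the parity dependence of $c_r$ is exactly the $\exists$-versus-$\forall$ (anchored-versus-unanchored) asymmetry that forces the two cases in the theorem.

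To pin the recurrence down I would prove the two inequalities separately. For $t(r)\geq\cdot$ I would present the explicit Spoiler strategy, equivalently the FO sentence, that implements the recursive ``guess a split node; recurse on the anchored upper half and the unanchored lower half; reuse variables across the two halves'' scheme; this is the easier direction and is also what makes $\{P_k\}$ (hence the sentences $\sigma_k$) explicitly computable. For $t(r)\leq\cdot$ I would give the Duplicator strategy: after each Spoiler split, maintain two families of copies of the shallow forest, one committed to the upward sub-game and one to the downward sub-game, each still ``too long to be caught'' with the quantifiers Spoiler has left, using the large branching to absorb Spoiler's element choices. The hard part will be this lower bound, and specifically getting the lower-order \emph{linear} terms exactly right rather than only the $\Theta(4^{r/2})$ growth: since the $r-1$ remaining moves are not cleanly apportioned between the two sub-games (every future quantifier may act on either), the accounting of how that budget is shared — together with the forest/disjoint-union bookkeeping and the even/odd base cases — is where the delicacy lies.

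Finally, with the recurrence and the base values $t(1),t(2),t(3)$ computed directly, solving is routine. The relation $t(r)=4\,t(r-2)+(\text{affine in }r)$ decouples into the subsequences $\bigl(t(2\rho)\bigr)_\rho$ and $\bigl(t(2\rho+1)\bigr)_\rho$, each obeying a first-order linear recurrence of the form $x_\rho = 4x_{\rho-1}-4\rho+8$; its general solution is $C\cdot 4^{\rho}+\tfrac{4}{3}\rho-\tfrac{8}{9}$, and matching $C$ to the appropriate base case gives $C=\tfrac{7}{18}$ in the even case and $C=\tfrac{8}{9}$ in the odd case, which is precisely the claimed closed form. A sanity check against the small values $t(2)=2,\ t(3)=4,\ t(4)=8,\ t(5)=16,\ t(6)=28,\ldots$ then confirms both the recurrence and the constants.
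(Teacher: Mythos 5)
Your high-level picture is essentially the right one: the paper also splits at a pebbled node $v$ into an ``upward'' segment and a ``downward'' subtree, shares the remaining quantifiers between the two halves by relativization, and tracks an anchored/unanchored (equivalently $\forall$-first versus $\exists$-first) asymmetry; and your closing algebra is correct (the closed forms do satisfy $t(2\rho)=4t(2\rho-2)-4\rho+8$, likewise for the odd subsequence, with the constants $7/18$ and $8/9$ you compute). The genuine gap is that the recurrence $t(r)=4t(r-2)-2r+c_r$ is asserted rather than derived, and it is not the recurrence the decomposition actually produces. After the split, only the \emph{downward} half is again a tree problem; the \emph{upward} half is a pure linear-order problem, and it is the linear orders that carry the $4^{r/2}$ growth. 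The paper accordingly proves $t(r)=g_\forall(r-1)+t(r-2)+2$ (Theorem~\ref{thm:t_rec}, using $t_\forall(r)=t(r-1)+1$ from Lemma~\ref{t_forall_recurrence}) --- a \emph{first-order} recurrence in $t$ with an exponential forcing term --- and separately derives the interleaved recurrences $g_\exists(2r)=2g_\forall(2r-1)+1$ and $g_\forall(2r+1)=2g_\exists(2r)$ for the linear-order quantities (Lemma~\ref{weak_g_induction}), whence $g_\forall(2r+1)=4g_\forall(2r-1)+2$. Your homogeneous-looking ``$4\,t(r-2)$'' is only an after-the-fact algebraic consequence of the closed forms (it amounts to the coincidence $g_\forall(r-1)=3t(r-2)-2r+\mathrm{const}$); no two-round game analysis reduces the tree problem to four independent copies of the $(r-2)$-quantifier tree problem, so I do not see how you would prove your recurrence directly without in effect reintroducing the linear-order quantities and their own recurrence.

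Two further concerns. For the upper bound you propose playing Duplicator on $(\{P_d\},\mathcal{B}_d)$ with $P_d$ a bare root-to-leaf path; this is a legitimate \emph{sufficient} condition, but it is far from clear that Duplicator wins that game --- a bare path on the deep side removes exactly the side branches that the paper's Duplicator strategy leans on (both trees in Figure~\ref{fig:19_vs_8_christmas} carry decorations, and the small subtrees hanging off the long branch of the \emph{deep} tree are used explicitly in the $t(4)\le 8$ analysis). You would need either to prove Duplicator wins from $P_d$ or to switch to decorated trees on both sides. Relatedly, the first nontrivial instance of your recurrence is precisely $t(4)=8$, whose upper bound is the delicate step (the naive argument gives only $t(4)\le 9$); you rightly flag that the linear lower-order terms are where the difficulty lies, but the base-case work at $r=4$ is itself a substantial separate argument, not a routine computation alongside $t(2)$ and $t(3)$.
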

It follows from the above theorem that we can distinguish (rooted) trees of depth at most $d$ from trees of depth greater than $d$ using only $\Theta(\log d)$ quantifiers, and we can in fact pin down the \emph{exact} depth that can be distinguished with $r$ quantifiers. This illustrates the power of multi-structural games, and gives hope that more complex problems may admit an exact number-of-quantifiers characterization.

Next, we consider the question of how many quantifiers are needed to express that two nodes $s$ and $t$ are connected by a path of length at most $n$, in directed (or undirected) graphs. In our notation, we wish to determine $Quants(P)$ where $P$ is the property of $s$-$t$ connectivity via a path of length at most $n$.
Considering the significance of $s$-$t$ connectivity in both descriptive complexity and computational complexity, we believe this is a basic question that deserves a clean answer. It follows from the work of Stockmeyer and Meyer %on $\mathsf{PSPACE}$-completeness~\cite[Theorem 4.3]{StockmeyerM73}, 
that $s$-$t$ connectivity up to path length $n$ can be expressed with $3\log_2(n) + O(1)$ quantifiers. As mentioned earlier, $s$-$t$ connectivity is well-known to require quantifier rank at least $\log_2(n) - O(1)$. We manage to reduce the number of quantifiers necessary for $s$-$t$ connectivity.

\begin{theorem*}[Theorem~\ref{thm:st-con-upper}, Section~\ref{sec:st-conn}] \label{thm:intro3} The number of quantifiers needed to express $s$-$t$ connectivity is at most $3\log_3(n) + O(1) \approx 1.893\log_2(n) + O(1)$.
\end{theorem*}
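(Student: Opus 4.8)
The plan is to give an explicit recursive construction of an FO sentence expressing $s$-$t$ connectivity via a path of length at most $n$, using roughly $3\log_3 n$ quantifiers by a ``divide into thirds'' strategy rather than the classic ``divide in half'' trick of Stockmeyer and Meyer. Recall the standard idea: to say ``there is a path of length at most $2m$ from $u$ to $v$,'' one introduces a midpoint $w$ (one quantifier), then says ``there is a path of length at most $m$ from $u$ to $w$ \emph{and} there is a path of length at most $m$ from $w$ to $v$,'' and collapses the two sub-assertions into one using a universally quantified pair of variables that ranges over $\{(u,w),(w,v)\}$ (two more quantifiers), giving the recurrence $q(2m) = q(m) + 3$ and hence $3\log_2 n$. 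My first step is to replace this with a three-way split: to express a path of length at most $3m$ from $u$ to $v$, guess two intermediate points $w_1, w_2$ (two existential quantifiers), and then assert the conjunction of the three statements ``path of length $\le m$ from $u$ to $w_1$,'' ``path of length $\le m$ from $w_1$ to $w_2$,'' ``path of length $\le m$ from $w_2$ to $v$.''

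The key step is to fold these three conjuncts into a single recursive call using only one extra (universally quantified) variable rather than three. The trick is to quantify a single variable $z$ ranging over a fixed small set of ``labels'' together with a definable selection of which of the previously named points plays the role of source and which plays the role of target; concretely, one writes $\forall z\, \big(\text{(}z \text{ is the first label} \to z \text{ names a pair equal to }(u,w_1)\text{)} \wedge \cdots\big) \to (\text{path of length} \le m \text{ between the selected endpoints})$. The arithmetic then is: two quantifiers to introduce $w_1,w_2$, plus one quantifier for the selector $z$, so $q(3m) = q(m) + 3$, which unwinds to $q(n) = 3\log_3 n + O(1)$. The additive $O(1)$ absorbs the base case (a path of length $\le 1$ is just ``$u = v \vee E(u,v)$,'' needing no quantifiers) and the bookkeeping needed when $n$ is not a power of $3$ (pad to the next power of $3$, losing only a constant). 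Converting the base of the logarithm gives $3\log_3 n = \tfrac{3}{\log_2 3}\log_2 n \approx 1.893 \log_2 n$.

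I expect the main obstacle to be the encoding that collapses three recursive sub-calls into one quantified variable while spending only a single quantifier. In the two-way case one gets to use an ordered pair of variables essentially ``for free'' because the universally quantified variable already ranges over a two-element set and the two coordinates of each sub-assertion (its source and its target) can be read off by a case split on that variable; pushing this to three sub-assertions requires being careful that (i) the selector variable ranges over a set whose size does not itself depend on $n$, and (ii) the formula that picks out ``the right endpoints given the selector'' does not secretly reintroduce quantifiers. The cleanest way I would handle this is to keep the recursion genuinely binary in its \emph{variable budget} — reuse the same pair of path-endpoint variables at every level, as in the Immerman/Stockmeyer style — but ternary in how much progress each level makes, so that the variable count stays $O(\log n)$ (in fact $3$, matching $Vars(P)\le 3$) while the quantifier count drops to $3\log_3 n + O(1)$. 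Verifying that this simultaneous reuse is consistent across the nesting, and that every introduced quantifier is accounted for exactly once in the $q(3m)=q(m)+3$ recurrence, is the delicate bookkeeping that the full proof must carry out.
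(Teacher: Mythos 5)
Your proposal is correct and follows essentially the same route as the paper: at each level two existential quantifiers introduce the two new intermediate points, one universally quantified ``pivot'' variable selects among the three gaps, and the recurrence $q(3m)=q(m)+3$ with a constant-size base case yields $3\log_3(n)+O(1)$ quantifiers. The one detail you flag but leave unexecuted --- realizing the three-way selector with a single variable without reintroducing quantifiers --- is handled in the paper by testing each pivot against the constants $s$ and $t$ (the three cases being $=s$, $=t$, and neither) and pushing the entire resulting case analysis into the quantifier-free matrix, so that the quantifier prefix $\exists\exists(\forall\exists\exists)^{k-1}$ is shared across all $3^{k-1}$ branches.
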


The remainder of this manuscript proceeds as follows. In the next subsection we describe multi-structural games and compare them to %the more well-known 
Ehrenfeucht-Fra\"{i}ss\'{e} games.  In the subsection that follows we review related work in complexity. We then prove the theorems mentioned above. 
In Section~\ref{sec:discrepency} we prove Theorem~\ref{thm:discrepency}. %give an explicit example that shows that for each positive integer $k$, there is a property of finite structures (in particular, of finite graphs) that can be expressed with a sentence of quantifier rank $k$, but where the same property needs $2^{\Omega (k^2)}$ quantifiers to be expressed.
%kf(k)$ quantifiers to be expressed, where 
%$f(k) = 2^{\Omega (k^2)}$.
In Section~\ref{section:rooted_trees_and_forests} we prove Theorem~\ref{thm:t_explicit}. % give the precise number of quantifiers needed to distinguish two rooted trees of different depths (). 
In Section~\ref{sec:st-conn} we prove Theorem~\ref{thm:st-con-upper}.
%give upper and lower bounds on the number of quantifiers needed to express $s$-$t$ connectivity.
In Section~\ref{sec:fina- comments}, we 
%conclude with some comments and suggestions for future %research.
give final comments and suggestions for future research.

\subsection{Multi-Structural Games}

The standard  \ef game (henceforth \edashf game) is played by ``Spoiler'' and ``Duplicator'' on a pair $(A,B)$ of structures over the same FO vocabulary $V$, for a specified number $r$ of \emph{rounds}.  
If $V$ contains constant symbols $\lambda_1,...,\lambda_k$, then designated (``constant'') elements $c_i$ of $A$, and $c'_i$ of $B$, must be associated with each $\lambda_i$.
In each round, Spoiler chooses an element from $A$ or from $B$, and Duplicator replies by choosing an element from the other structure.  In this way, they determine sequences of elements $a_1,\dots,a_r, c_1,\dots,c_k$ of $A$ and $b_1,\dots,b_r, c'_1,\dots,c'_k$ of $B$, which in turn define substructures $A'$ of $A$ and $B'$ of $B$.  Duplicator wins 
%the play of the game 
if the function given by $f(a_i) = b_i$ for $i = 1,\dots,r,$ and $f(c_j) = c'_j$ for $j = 1,\dots,k$, is an isomorphism of $A'$ and $B'$.  Otherwise, Spoiler wins.  

The equivalence theorem for \edashf games \cite{Ehr61,Fra54} characterizes the minimum \emph{quantifier rank} of a sentence $\phi$ over $V$ that is true for  $A$ but false for $B$.  The quantifier rank $\qr(\phi)$ is defined as zero for a quantifier-free sentence $\phi$, and inductively:
\begin{eqnarray*}
\qr(\neg\phi) &=& \qr(\phi),\\
\qr(\phi \vee \psi) &=& \qr(\phi \land \psi) ~=~ \max\set{\qr(\phi),\qr(\psi)},\\
\qr(\forall x\phi) &=& \qr(\exists x\phi) ~=~ \qr(\phi) + 1.
\end{eqnarray*}
%Some authors use the phrase \textit{quantifier depth} rather than quantifier rank.

%\bigskip
\begin{theorem}[\cite{Ehr61,Fra54}]  \textbf{Equivalence Theorem for E-F Games:}\label{thm:ef}
Spoiler wins the $r$-round \edashf game on $(A,B)$ if and only if there is a sentence $\phi$ of quantifier rank at most $r$ such that $A \models \phi$ while $B \models \neg\phi$.
\end{theorem}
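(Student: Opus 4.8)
The plan is to prove the biconditional in its symmetric, contrapositive form. Since a sentence $\phi$ with $A \models \phi$ and $B \models \neg\phi$ exists if and only if \emph{some} sentence of quantifier rank at most $r$ is true on exactly one of $A,B$ (negate if necessary), and since Duplicator wins precisely when Spoiler does not, it suffices to prove that \emph{Duplicator} wins the $r$-round game on $(A,B)$ if and only if $A$ and $B$ satisfy exactly the same sentences of quantifier rank at most $r$. I would prove this by induction on the number of rounds, but stated for intermediate configurations: for tuples $\bar a = (a_1,\dots,a_m)$ in $A$ and $\bar b = (b_1,\dots,b_m)$ in $B$, write $(A,\bar a) \equiv_r (B,\bar b)$ when these tuples satisfy the same formulas $\phi(x_1,\dots,x_m)$ of quantifier rank at most $r$, and claim this is equivalent to Duplicator winning the $r$-round game from the partial map $a_i \mapsto b_i$ together with the constants $c_j \mapsto c'_j$. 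Taking $m=0$ recovers the theorem.

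For the base case $r = 0$, Duplicator wins the zero-round game exactly when the designated map is an isomorphism of the substructures generated by $\bar a,\bar b$ and the constants, which holds if and only if $(A,\bar a)$ and $(B,\bar b)$ agree on every atomic (equivalently, every quantifier-free) formula; this is precisely $\equiv_0$.

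For the inductive step I would unfold one round of the $(r{+}1)$-round game. Duplicator wins it if and only if for every $a \in A$ there is $b \in B$ with Duplicator winning the remaining $r$-round game on $(A,\bar a a),(B,\bar b b)$, and symmetrically for every $b \in B$ there is a matching $a \in A$. Applying the induction hypothesis converts each ``Duplicator wins the $r$-round subgame'' into the statement $(A,\bar a a) \equiv_r (B,\bar b b)$. What remains is to see that this back-and-forth condition on $\equiv_r$-classes is exactly $(A,\bar a) \equiv_{r+1} (B,\bar b)$, and this is where the real work lies.

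The main obstacle — the crux of the whole argument — is the finiteness of first-order logic at bounded quantifier rank: over the fixed finite relational vocabulary $V$ with its designated constants, and for a fixed number $m$ of free variables, there are only finitely many formulas of quantifier rank at most $r$ up to logical equivalence. I would establish this by induction on $r$ and use it to define, for each configuration $(A,\bar a)$, a single \emph{Hintikka} (characteristic) formula $\phi^r_{A,\bar a}(\bar x)$ of quantifier rank exactly $r$, via $\phi^0_{A,\bar a} = \bigwedge\{\,\psi : (A,\bar a)\models\psi,\ \psi \text{ atomic or negated atomic}\,\}$ and $\phi^{r+1}_{A,\bar a}(\bar x) = \bigl(\bigwedge_{a\in A}\exists x_{m+1}\,\phi^r_{A,\bar a a}\bigr) \wedge \forall x_{m+1}\bigl(\bigvee_{a\in A}\phi^r_{A,\bar a a}\bigr)$, where the conjunctions and disjunctions collapse to finite ones because only finitely many distinct $\phi^r_{A,\bar a a}$ arise. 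The properties to verify are that $(A,\bar a)\models\phi^r_{A,\bar a}$, that $(B,\bar b)\models\phi^r_{A,\bar a}$ holds exactly when $(A,\bar a)\equiv_r(B,\bar b)$, and that every rank-$r$ formula is logically equivalent to a disjunction of such Hintikka formulas. Granting these, the existential/universal back-and-forth condition from the inductive step is literally what $\phi^{r+1}_{A,\bar a}$ asserts, which closes the induction; the bookkeeping that each quantifier alternation raises quantifier rank by exactly one, matching one round of play, follows directly from the recursive definition of $\qr$ recorded above.
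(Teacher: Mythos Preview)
Your proof is correct and follows the standard Hintikka-formula (back-and-forth) argument due to Ehrenfeucht and Fra\"{i}ss\'{e}. Note, however, that the paper does not supply its own proof of this theorem: it is stated with a citation to \cite{Ehr61,Fra54} and used as a black box, so there is no in-paper argument to compare against. Your write-up is exactly the classical one a reader would find in those references or in a standard finite model theory text.
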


In this paper we make use of a variant of \edashf games, which have come to be called \emph{multi-structural games} \cite{Fagin21}. Multi-structural games (henceforth M-S games) make Duplicator more powerful and can be used to characterize the \emph{number} of quantifiers, rather than the quantifier \emph{rank}.
%It is straightforward to see that the minimum number of quantifiers needed to define a property $P$ is the same as the minimum size of the quantifier prefix of a sentence in prenex normal form that is needed to define property $P$. This is because converting a sentence into prenex normal form does not increase the number of quantifiers.
In an M-S game there are again two players, Spoiler and Duplicator, and there is a fixed number $r$ of rounds.  Instead of being played on a pair %$\mathsf{A}$, $\mathsf{B}$
$(A,B)$
of structures with the same vocabulary (as in an \edashf game), the M-S game is played on a pair $({\cal A}, {\cal B})$ of \textit{sets of structures}, all with the same vocabulary.  
For $k$ with $0 \leq k \leq r$, by a %{\em labeled$_k$}  
\textit{labeled structure} after $k$ rounds, we mean a structure along with a labeling of which elements were selected from it in each of the first $k$ rounds. 
Let ${\cal{A}}_0 = \cal{A}$ and ${\cal{B}}_0 = \cal{B}$.
Thus, ${\cal{A}}_0$ represents the labeled structures from $\cal A$ after 0 rounds, and similarly for ${\cal{B}}_0$ -- in other words nothing is yet labelled except for constants.
If $1 \leq k <r$, let ${\cal{A}}_k$ be the labeled structures originating from $\cal A$ after $k$ rounds, and similarly for ${\cal{B}}_k$.
In round $k+1$, 
Spoiler either chooses an element from each member of ${\cal{A}}_k$, thereby creating ${\cal{A}}_{k+1}$,
or chooses an element from each member of ${\cal{B}}_k$, thereby creating ${\cal{B}}_{k+1}$.
Duplicator responds as follows. 
Suppose that Spoiler chose an element from each member of ${\cal{A}}_k$, thereby creating ${\cal{A}}_{k+1}$.   Duplicator can then make multiple copies of each labeled structure of ${\cal{B}}_k$, and choose an element from each copy,
%of a member of ${\cal{B}}_k$, 
thereby creating ${\cal{B}}_{k+1}$.
Similarly, if Spoiler chose an element from each member of ${\cal{B}}_k$, thereby creating ${\cal{B}}_{k+1}$, Duplicator can then make multiple copies of each labeled structure of ${\cal{A}}_k$, and choose an element from each copy,
%of a member of ${\cal{A}}_k$, 
thereby creating ${\cal{A}}_{k+1}$.
Duplicator wins if there is some labeled structure $A$ in ${\cal{A}}_{r}$ and some labeled structure $B$ in ${\cal{B}}_{r}$ where the labelings give a partial isomorphism.  
%That is, if $a_i$ is the point selected in $A$ in round $i$, for $1 \leq i \leq r$, and if  $b_i$ is the point selected in $B$ in round $i$, for $1 \leq i \leq r$, and if we let $A'$ be the substructure of $A$ (ignoring the labelings) generated by $\set{a_1, \ldots, a_r}$ and $B'$ the substructure of $B$ (ignoring the labelings) generated by $\set{b_1, \ldots, b_r}$, then the function $f$ where $f(a_i) = b_i$ for $i = 1,\dots,r$ is an isomorphism of $A'$ and $B'$. 
Otherwise, Spoiler wins.  

In discussing M-S games we sometimes think of the play of the game by a given player, in a given round, as taking place on one of two ``sides'', the $\cal{A}$ side or the $\cal{B}$ side, corresponding to where the given player plays from on that round.

Note that on each of Duplicator's moves, Duplicator can make ``every possible choice," via the multiple copies. 
Making every possible choice creates what we call the \emph{oblivious strategy}. Indeed, Duplicator has a winning strategy if and only if the oblivious strategy is a winning strategy.
%for Duplicator. 
  
%   $of $\cal B$ (``labeled" meanikng that it continues to reflect the moves made on it so far), and then chooses 
%  an element from each member $B$ of the newly enlarged set $\cal B$.
%  Analogously, if Spoiler chose an element from each member $B$ of $\cal B$, then 
%  Duplicator can then make multiple copies of each labeled member $A$ of $\cal A$, and then chooses 
%  an element from each member $A$ of the newly enlarged set $\cal A$.
%   In this way, they determine sequences $a_1,\dots,a_r$ of elements from each member $A$ in the possibly enlarged set $\cal A$, and $b_1,\dots,b_r$ from each member of the possibly enlarged set $\cal B$.
%   As before, repetitions are allowed, and these choices define substructures $A_r$ of each
%   $A$ of the possibly enlarged $\cal A$ and each $B$ of the possibly enlarged 
%   $\mathsf{A}$ and $\mathsf{B_r}$ of $\mathsf{B}$.  Duplicator wins the play of the game if the function $f(a_i) = b_i$ for $i = 1,\dots,r$ is an isomorphism of $\mathsf{A_r}$ and $\mathsf{B_r}$.  Else, Spoiler wins.  

The following equivalence theorem, proved in \cite{Immerman81, Fagin21}, is the analog of Theorem~\ref{thm:ef} for \edashf games.
%\hl{below theorem should say quantifier depth?}
\begin{theorem}[\cite{Immerman81, Fagin21}] \textbf{Equivalence Theorem for Multi-Structural Games:} \label{thm:main1}
Spoiler wins the $r$-round M-S game on
$(\cal{A}, \cal{B})$
if and only if there is a sentence $\phi$ with at most $r$ quantifiers such that $A \models \phi$ for every $A \in {\cal A}$ while $B \models \neg\phi$ for every $B \in {\cal B}$.
\end{theorem}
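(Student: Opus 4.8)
The plan is to prove a stronger, syntactically uniform statement by passing from sentences to formulas and from plain structures to labeled structures, and then to induct. For $0 \le k \le r$, recall that a $k$-labeled structure is one in which $k$ elements have been designated (as after $k$ rounds), and that two labeled structures are partially isomorphic exactly when the order-respecting map matching their designated elements, together with the constants, is a partial isomorphism. The claim I would actually prove is: Spoiler wins the $r$-round M-S game on a pair $(\mathcal{A},\mathcal{B})$ of sets of $k$-labeled structures if and only if there is a formula $\phi$ with free variables among $x_1,\dots,x_k$ and at most $r$ quantifiers such that $A \models \phi$ for all $A \in \mathcal{A}$ and $B \models \neg\phi$ for all $B \in \mathcal{B}$ (reading $x_i$ as the $i$-th designated element); the theorem is the case $k=0$. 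The base case $r=0$ is settled by atomic types: no moves remain, so Spoiler wins iff no $A \in \mathcal{A}$ is partially isomorphic to any $B \in \mathcal{B}$; since the vocabulary is finite and only $k$ elements are designated, there are finitely many atomic types in $x_1,\dots,x_k$, and a labeled structure's atomic type is exactly its partial-isomorphism class. If Spoiler wins, no type occurs on both sides, so the finite disjunction of the complete atomic types realized in $\mathcal{A}$ is a quantifier-free separator; conversely a quantifier-free formula is constant on each partial-isomorphism class, so a shared type blocks every separator.

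For the forward direction (Spoiler wins $\Rightarrow$ separating formula) I would induct on $r$. Since Spoiler wins iff Spoiler beats the oblivious Duplicator, I may assume Spoiler's winning first move is, say, on the $\mathcal{A}$ side, choosing a witness $a_A$ in each $A$ to form $\mathcal{A}'$, against which the oblivious Duplicator produces the set $\mathcal{B}'$ of all one-element extensions of the members of $\mathcal{B}$. Spoiler then wins the remaining $(r-1)$-round game on $(\mathcal{A}',\mathcal{B}')$, so the induction hypothesis yields $\psi$ with at most $r-1$ quantifiers true throughout $\mathcal{A}'$ and false throughout $\mathcal{B}'$; then $\phi = \exists x_{k+1}\,\psi$ works, being witnessed by $a_A$ in each $A$ and failing in every $B$ because every extension of $B$ falsifies $\psi$. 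A first move on the $\mathcal{B}$ side symmetrically produces $\phi = \forall x_{k+1}\,\psi$. In either case $\phi$ uses at most $r$ quantifiers.

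For the backward direction (separating formula $\Rightarrow$ Spoiler wins) I would instead induct on the structure of $\phi$. The atomic case is the base case above, and negation is immediate since the games on $(\mathcal{A},\mathcal{B})$ and on $(\mathcal{B},\mathcal{A})$ have the same winner (partial isomorphism and Spoiler's move choices are both symmetric in the two sides). For $\phi=\exists x_{k+1}\psi$, each $A$ has a witness, which Spoiler designates as its first $\mathcal{A}$-side move to form $\mathcal{A}'$, while every Duplicator response is a subset of the set $\mathcal{B}'$ of all one-element extensions of $\mathcal{B}$, each of which falsifies $\psi$; the induction hypothesis gives a Spoiler win on $(\mathcal{A}',\mathcal{B}')$, and a monotonicity (subset) lemma—if Spoiler wins on $(\mathcal{A}',\mathcal{B}')$ then Spoiler wins on $(\mathcal{A}',\mathcal{B}'')$ for every $\mathcal{B}''\subseteq\mathcal{B}'$—transfers the win to whatever Duplicator actually plays; the case $\forall x_{k+1}\psi$ is symmetric. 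The remaining case $\phi=\phi_1\wedge\phi_2$ (and dually $\vee$) is where the quantifier count is the \emph{sum} $q_1+q_2$: here $\mathcal{B}$ splits as $\mathcal{B}_1\cup\mathcal{B}_2$ according to which conjunct fails, the induction hypothesis gives that Spoiler wins on $(\mathcal{A},\mathcal{B}_1)$ in $q_1$ rounds and on $(\mathcal{A},\mathcal{B}_2)$ in $q_2$ rounds, and I must combine these into a Spoiler win on $(\mathcal{A},\mathcal{B}_1\cup\mathcal{B}_2)$ in $q_1+q_2$ rounds.

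I expect this additivity lemma to be the main obstacle, and it is exactly where Duplicator's copying power—and hence the distinction between number of quantifiers and quantifier rank—enters. I would prove it by having Spoiler first run the $(\mathcal{A},\mathcal{B}_1)$ strategy for $q_1$ rounds (carrying the $\mathcal{B}_2$-descendants along untouched) and then the $(\mathcal{A},\mathcal{B}_2)$ strategy for the remaining $q_2$ rounds, using two preservation facts: once a labeled $A$-structure and a labeled $B$-structure are non-isomorphic they remain non-isomorphic under all further designations, so the $\mathcal{B}_1$-descendants stay beaten during phase two; and a Spoiler win is preserved when arbitrary extra labels are added to all structures, so the phase-two strategy, designed for the unlabeled $(\mathcal{A},\mathcal{B}_2)$, still applies after the $q_1$-round relabeling (Spoiler simply ignores the extra designations, since partial isomorphism of the richer labelings implies it for the forgotten ones). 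Verifying that these composed moves are legal under oblivious play—in particular that the relevant descendant sets in the combined game agree with those in the two subgames—is the delicate bookkeeping, but the monotonicity of non-isomorphism under extension makes it go through, completing both directions and hence the theorem.
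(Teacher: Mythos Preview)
The paper does not prove this theorem; it is stated with citation to \cite{Immerman81,Fagin21} and used as a black box. Your proposal is a correct proof and is essentially the argument given in \cite{Fagin21}: strengthen to labeled structures and formulas with free variables, handle the quantifier cases by one round of play, and reduce the Boolean connectives to an additivity lemma (Spoiler wins on $(\mathcal{A},\mathcal{B}_1)$ in $q_1$ rounds and on $(\mathcal{A},\mathcal{B}_2)$ in $q_2$ rounds $\Rightarrow$ Spoiler wins on $(\mathcal{A},\mathcal{B}_1\cup\mathcal{B}_2)$ in $q_1+q_2$ rounds). Your two preservation facts---non-isomorphism persists under further labeling, and a Spoiler win survives adding and then ignoring extra labels---are exactly what is needed, and your use of the oblivious Duplicator to make Spoiler's strategy non-adaptive cleanly handles the bookkeeping you flag as delicate. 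One small point worth making explicit in the phase-two argument: when Spoiler plays on the $\mathcal{B}$ side he must also select elements in the (already beaten) $\mathcal{B}_1$-descendants, but any choice there is harmless by the first preservation fact; you gesture at this with ``carrying the $\mathcal{B}_2$-descendants along untouched'' in phase one, and the symmetric remark applies in phase two.
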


In \cite{Fagin21} the authors provide a simple example of a property $P$ of a directed graph that requires $3$ quantifiers but which can be expressed with a sentence of quantifier rank $2$. $P$ is the property of having a vertex with both an in-edge and an out-edge. $P$ can be expressed via the sentence
$\sigma = \exists x(\exists y E(x,y) \land \exists y E(y,x))$,    
where $E(,)$ denotes the directed edge relation.  In \cite{Fagin21} it is shown that while Spoiler wins a 2-round E-F game on the two graphs $A$ and $B$ in Figure \ref{fig:lo3_vs_lo2},
\begin{figure} [h]
\centerline{\scalebox{0.45}{\includegraphics{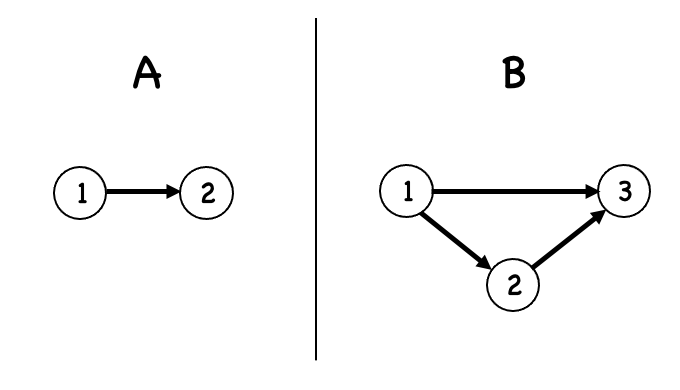}}}
\caption{The graph $B$, on the right, contains a vertex with both an in-edge and an out-edge, while the graph $A$,  on the left, does not.}
\label{fig:lo3_vs_lo2}
\end{figure}
 Duplicator wins the analogous 2-round M-S game starting with these two graphs. Hence, by Theorem \ref{thm:main1}, the property $P$ is not expressible with just $2$ quantifiers.

%The following is an immediate consequence of Theorem \ref{thm:main1}:
%
%\begin{cor} \label{ms-key-cor}
%There is a sentence with $r$ quantifiers that distinguishes structures satisfying a property $P$ from those not satisfying $P$ iff for all sets $(\mathcal{A}, \mathcal{B})$ of structures such that $A \models P$ and $B \models \neg P$ for all $A \in \mathcal{A}$ and $B \in \mathcal{B}$ Spoiler wins the associated $r$-round MS game on these sets of structures.
%\end{cor}

%\textbf{Prior Work}

%\bigskip

%In this paper we investigate the total number of quantifiers as a complexity measure in further depth, first studying rooted trees of different depths and then the classical $s$-$t$ connectivity problem from this vantage point. We will frequently employ multi-structural games to aid us in this study. 
%
%\hl{should we state the results of the paper formally here?}

\subsection{Related Work in Complexity}

Trees are a much studied data structure in complexity theory and logic. It is well known that it is impossible, in FOL, to express that a graph with no further relations is a tree \cite[Proposition 3.20]{Lib12}. We note, however, that given a partial ordering on the nodes of a graph, it is easy to express in FOL the property that the partial ordering gives rise to a tree. The relevant sentence expresses that there is a root (i.e., a greatest element) from which all other nodes descend, and if a node $x$ has nodes $y$ and $z$ as  distinct ancestors then one of $y$ and $z$ must have the other as its own ancestor. Hence the needed sentence is the conjunction of the following two sentences: 
\begin{flalign*}
    &\exists x \forall y(y \neq x \rightarrow y < x), \\
    %&\forall x \forall y ((x<y) \rightarrow \neg(y<x)), \\
    &\forall x \forall y \forall z((x < y \land x < z \land y \neq z) \rightarrow (y < z \vee z < y)).
\end{flalign*}
There are also interesting models of computation and logics based on trees.  See, for example, the literature on Finite Tree Automata \cite{Comon99} and Computational Tree Logic \cite{Clarke81}.
%Computational Tree Logic, for example, is a well developed research area in its own right \cite{Clarke81}.

%\hl{Add the trees are well studied in logic? next thing to do after linear orders which were studied in last paper}

We now discuss $s$-$t$ connectivity. In this paragraph only, $n$ denotes the number of nodes in the graph and $k$ the number of edges in a shortest path from $s$ to $t$. The $s$-$t$ connectivity problem has been studied extensively in both logic \cite{AjtaiFagin90,Imm99} and complexity theory. Most complexity studies of this problem have focused on space and time complexity. Directed $s$-$t$ connectivity is known to be $\mathsf{NL}$-complete~(see for example Theorem 16.2 in \cite{PAPA:1995}), while undirected $s$-$t$ connectivity is known to be in $\mathsf{L}$~\cite{REINGOLD:2005}. Savitch~\cite{Savitch70} proved that $s$-$t$ connectivity can be solved in $O(\log^2(n))$ space and $n^{\log_2(n)(1+O(1))}$ time. Recent work of Kush and Rossman~\cite{KushR20} has shown that the \emph{randomized} $\AC^0$ formula complexity of $s$-$t$ connectivity is at most size $n^{0.49 \log_2(k) + O(1)}$, a slight improvement. Barnes, Buss, Ruzzo and Schieber~\cite{BarnesBRS98} gave an algorithm running in both sublinear space and polynomial time for $s$-$t$ connectivity. Gopalan, Lipton, and Meka~\cite{GopalanLM03} presented randomized algorithms for solving $s$-$t$ connectivity with non-trivial time-space tradeoffs.
The $s$-$t$ connectivity problem has also been studied from the perspective of circuit and formula depth. For the weaker model of $\mathsf{AC}^0$ formulas an $n^{\Omega(\log(k))}$ size lower bound is known to hold unconditionally \cite{BeameIP98, ChenOST16, Rossman14}.

%for the $\leq k$ connectivity problem.\footnote{Here we are promised that if there is a path from $s$ to $t$ then there is a path of length $\leq k$ from $s$ to $t$}.

%In logic, undirected $s$-$t$ connectivity is known to be inexpressible in first-order Logic~\cite{Lib12} but expressible in Monadic Second Order Logic (MSO), while directed $s$-$t$ connectivity is also inexpressible in Monadic 2nd Order Logic (MSO)~\cite{AJTAI:1990}.

There is also a natural and well-known correspondence with the number of quantifiers in FOL and circuit complexity, in particular with the circuit class $\AC^0$ (constant-depth circuits comprised of NOT gates along with unbounded fan-in OR and AND gates). For example, Barrington, Immerman, and Straubing~\cite{BarringtonIS88} proved that $\text{uniform}_{\FO}\text{-}\AC^0 = \FO[<, \text{BIT}]$, thus characterizing the problems solvable in uniform $\AC^0$ by those expressible in FOL with ordering and a BIT relation. 

More generally it is known that $\text{uniform}_{\FO}\text{-}\AC[t(n)] = \FO[<, \text{BIT}][t(n)]$~(\cite{Imm99}, Theorem 5.22), i.e., FO formulas over ordering and BIT relations, defined via constant-sized blocks that are ``iterated'' for $O(t(n))$ times, are equivalent in expressibility with $\AC$ circuits of depth $O(t(n))$. (See Appendix~\ref{app:eq} for a more detailed statement.) Generally speaking, the number of quantifiers of FOL sentences (with a regular form) roughly corresponds to the \emph{depth} of a (highly uniform) $\AC^0$ circuit deciding the truth or falsity of the given sentence. Thus the number of quantifiers can be seen as a %natural 
proxy for ``uniform circuit depth''.%\footnote{One might think of circuit depth as naturally corresponding to quantifier \emph{rank}; indeed, the authors have thought this was the natural correspondence. But in the \emph{uniform} cases described above, circuit depth actually corresponds to the number of quantifiers.}

%\hl{number of quantifiers corresponds to a particularly structured subset of AC0 - didn't understand this}
%\hl{added section below}

%\section{On the Discrepancy Between Quantifier Rank and Number of Quantifiers} \label{sec:discrepency}
%\section{On the Difference in Magnitude Between Quantifier Rank and Number of Quantifiers} \label{sec:discrepency}
\section{Difference in Magnitude: Quantifier Rank vs.\ Number of Quantifiers} \label{sec:discrepency}
Let $V$ be a vocabulary with at least one relation symbol  with arity at least 2.
It is known \cite{dawgrokre+07} that the number of inequivalent sentences in vocabulary $V$ with quantifier rank $k$ is not an elementary function of $k$ (that is, grows faster than any tower of exponents).
Since the number of sentences in vocabulary $V$ with $k$ quantifiers is at most only double exponential in $k$ (e.g., a function that grows like $2^{2^{p(k)}}$ for some polynomial $p(k)$ -- see Appendix \ref{app:double_exp_upper_bd} for a proof), 
%end of footnote
it follows by a counting argument that for each positive integer $k$, there is a property $P$ of finite structures with vocabulary $V$ that can be expressed by a sentence of quantifier rank $k$, but where
the number of quantifiers needed to express $P$ is not an elementary function of $k$. However, to our knowledge, up to now no \emph{explicit} examples have been given of a property $P$ where the quantifier rank of a sentence to express $P$ is $k$, but   where the number of quantifiers needed to express the property $P$ is at least exponential in $k$.
In the proof of the following theorem, we give such an explicit example.%\footnote{Probably the reason no explicit example was given before, was due to the lack of use of tools needed to count the required number of quantifiers.}

Let $f_V(k)$ be the number of structures with $k$ nodes up to isomorphism in vocabulary $V$  (such as the number of non-isomorphic graphs with $k$ nodes). Note that in the case of graphs (a single binary relation symbol), $f_V(k)$  is asymptotic to $(2^{k^2})/k!$ \cite{Harary58},
and Stirling’s formula implies that 
$f_V(k) = 2^{\Omega (k^2)}$).
We have the following theorem.

\begin{theorem} \label{thm:discrepency}
% Assume that the vocabulary $V$ contains at least one relation symbol with arity at least 2.
% For each positive integer $k$, there is a property $P$ of finite structures with vocabulary $V$ such that $P$ can be described in first-order logic by a sentence with quantifier rank $k$, while the minimum number of quantifiers needed to describe $P$ in first-order logic is $k f_V(k-1)$.
Assume that the vocabulary $V$ contains at least one relation symbol with arity at least 2. There is an algorithm such that given a positive integer $k$, the algorithm produces a FO sentence $\sigma$ of quantifier rank $k$ where the minimum number of quantifiers needed to express $\sigma$ in FOL is $k f_V(k-1)$, which grows like $2^{\Omega (k^2)}$, and where the algorithm runs in time polynomial in the length of $\sigma$.
\end{theorem}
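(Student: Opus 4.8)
The plan is to construct, for each $k$, a sentence $\sigma_k$ whose unique "cheapest" meaning requires exactly $k f_V(k-1)$ quantifiers, by building $\sigma_k$ as a disjunction over all isomorphism types of structures on $k-1$ elements. Concretely, let $\tau_1,\dots,\tau_{f_V(k-1)}$ enumerate the isomorphism classes of $V$-structures with $k-1$ elements. For each $\tau_j$ let $\psi_j$ be the sentence, with $k-1$ quantifiers, asserting "there exist $k-1$ distinct elements inducing a substructure isomorphic to $\tau_j$, and no $k$-th element exists" — i.e.\ $\psi_j$ pins down a structure of size exactly $k-1$ of type $\tau_j$. The property $P_k$ defined by $\sigma_k := \bigvee_j \psi_j$ is simply "the structure has exactly $k-1$ elements." Written this way, $\sigma_k$ has quantifier rank $k$ (the $k-1$ existentials for the witnesses, plus one more, $\forall x$ or its negation, to say "no $k$-th element"), and the algorithm clearly prints $\sigma_k$ in time polynomial in $|\sigma_k|$ — one just lists all $k-1$-element structures and writes the obvious formula. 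The count $k f_V(k-1)$ of quantifiers in this presentation is routine: each disjunct uses $k$ quantifiers, there are $f_V(k-1)$ disjuncts, and the disjuncts cannot share quantifiers across the $\vee$.

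The real content is the matching lower bound: \emph{every} sentence expressing "exactly $k-1$ elements" needs at least $k f_V(k-1)$ quantifiers. For this I would use the Equivalence Theorem for Multi-Structural Games (Theorem~\ref{thm:main1}). Take $\mathcal{A}$ to be the set of all $V$-structures on exactly $k-1$ elements (one representative per isomorphism type, so $|\mathcal{A}| = f_V(k-1)$), and take $\mathcal{B}$ to be, say, the single structure on $k$ elements with all relations empty (or some suitable collection of structures not of size $k-1$). Then the property separates $\mathcal{A}$ from $\mathcal{B}$, so by Theorem~\ref{thm:main1} it suffices to show Duplicator wins the $r$-round M-S game on $(\mathcal{A},\mathcal{B})$ whenever $r < k f_V(k-1)$. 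Here I would need a potential/counting argument: I would argue that to "kill off" a given structure $A\in\mathcal{A}$ — i.e.\ to reach a round where no labeled copy of $A$ has a partial isomorphism to any surviving labeled copy of $B$ — Spoiler must in effect have played all $k-1$ elements of that $A$ and then one more round to exceed its size; and crucially, the rounds spent on distinct isomorphism types in $\mathcal{A}$ cannot be "shared," because a single sequence of Spoiler-moves, interpreted in two non-isomorphic $(k-1)$-element structures, leads to labeled structures that remain simultaneously alive (Duplicator's oblivious strategy keeps a matching copy of $B$ alive for each). Summing $k$ over the $f_V(k-1)$ types gives the bound $r \ge k f_V(k-1)$.

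The main obstacle I anticipate is making precise the claim that Spoiler cannot amortize quantifiers across different isomorphism types in $\mathcal{A}$ — the "no sharing" step. The delicate point is that Spoiler plays \emph{one} element into \emph{every} structure of $\mathcal{A}_k$ simultaneously each round, so naively a single round "touches" all $f_V(k-1)$ structures at once; one must show that a round which helps eliminate type $\tau_i$ does not simultaneously make progress toward eliminating type $\tau_j$ for $j\ne i$, essentially because the two induced labeled sub-structures are non-isomorphic as soon as they have size $k-1$ but Duplicator, playing obliviously on the $\mathcal{B}$ side, can maintain a witness in $\mathcal{B}$ matching each $\tau_i$-labeling independently. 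Formalizing this likely requires tracking, for each labeled structure in $\mathcal{A}_r$, the "quantifier budget already consumed on its isomorphism type" and showing the budgets add rather than overlap; I would expect the clean way to do this is to observe that $\mathcal{B}$ can be chosen as a disjoint-union-like family so that a distinct portion of $\mathcal{B}$ shadows each $\tau_i$, reducing the whole game to $f_V(k-1)$ essentially independent sub-games each requiring $k$ rounds. Once that decomposition is in hand, the theorem follows, and the asymptotics $k f_V(k-1) = 2^{\Omega(k^2)}$ are immediate from the stated estimate $f_V(k) = 2^{\Omega(k^2)}$ for graphs (and trivially at least as large when $V$ is richer).
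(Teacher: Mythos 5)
There is a fatal gap at the very first step: the property you construct is the wrong one, and the lower bound you hope to prove for it is false. The disjunction $\sigma_k = \bigvee_j \psi_j$ over \emph{all} isomorphism types of $(k-1)$-element structures is logically equivalent to the statement ``the structure has exactly $k-1$ elements,'' since every $(k-1)$-element structure realizes some type $\tau_j$. That property is expressible with only $k$ quantifiers, e.g.\ $\exists x_1\cdots\exists x_{k-1}\bigl(\bigwedge_{i<j} x_i\neq x_j \;\wedge\; \forall y\,\bigvee_i y=x_i\bigr)$, so $Quants(P_k)\le k$, not $k f_V(k-1)$. Correspondingly, in your proposed M-S game Spoiler wins in $k$ rounds by selecting $k$ distinct elements of the $k$-element structure in $\mathcal{B}$: any response inside a $(k-1)$-element structure of $\mathcal{A}$ must repeat an element and therefore fails to be a partial isomorphism. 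The ``no amortization across isomorphism types'' step that you flag as the delicate point is not merely hard to formalize --- it is false for a disjunction, precisely because disjuncts \emph{can} share quantifiers: renaming the bound variables of every $\psi_j$ to a common tuple $x_1,\dots,x_k$ collapses the whole sentence.

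The paper avoids this by using a \emph{conjunction}. For each isomorphism type $C_j$ on $k-1$ nodes it forms a $k$-node graph $D_j$ ($C_j$ plus an apex joined to every node, which lets $D_j$ recover $C_j$ uniquely), and sets $\sigma=\bigwedge_j \exists x_1\cdots\exists x_k\,\tau_j$, asserting that the graph contains a subgraph isomorphic to every $D_j$ simultaneously; existential quantifiers genuinely cannot be shared across these conjuncts. The lower bound is then proved by a much simpler game than the one you envisage: $\mathcal{A}$ consists of the single disjoint union $A$ of all the $D_j$, $\mathcal{B}$ consists of the one-point deletions $B_p$ of $A$, and Duplicator simply mirrors every Spoiler move identically (the ``harmony'' strategy), surviving $k f_V(k-1)-1$ rounds because some point of $A$ is always left unselected. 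To salvage your outline you would need to replace the disjunction with a conjunctive property of this kind; as written, the plan cannot be completed.
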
 

\begin{proof} 
For simplicity, let us assume that the vocabulary $V$ consists of a single binary relation symbol, so that we are dealing with graphs.  It is straightforward to modify the proof to deal with an arbitrary vocabulary with at least one relation symbol of arity at least 2. Let us write $f$ for $f_V$.
Let $C_1, \ldots, C_{f(k-1)}$ be the $f(k-1)$ distinct graphs up to isomorphism with $k-1$ nodes.
For each $j$ with $1 \leq j \leq f(k-1)$, derive the graph $D_j$ that is obtained from $C_j$ by adding one new node with a single edge to every node in $C_j$.
Thus, $D_j$ has $k$ nodes. 
$D_j$ uniquely determines $C_j$, since $C_j$ is obtained from $D_j$ by removing a node $a$ that has a single edge to every remaining node; even if there were two such nodes $a$, the result would be the same. 
Therefore, there are $f(k-1)$ distinct graphs $D_j$.
We now give our sentence $\sigma$.
%that defines the property $P$ in the statement of the %theorem. 
Let $\sigma_j$ be the sentence $\exists x_1 \cdot\cdot\cdot \exists x_k \tau_j(x_1,\ldots,x_k)$, which expresses that there is a graph with a subgraph isomorphic to $D_j$.
Then the sentence $\sigma$ is the conjunction of the sentences $\sigma_j$ for $1 \leq j \leq f(k-1)$.
Since the sentence $\sigma$ is of length $2^{\Omega(k^2)}$,
it is not hard to verify that this sentence can be generated by an algorithm running in polynomial time in the length of the sentence (there is enough time to do all of the isomorphism tests by a naive algorithm).

The sentence $\sigma$ has quantifier rank $k$.
As written, this sentence has $kf(k-1)$ quantifiers.  
%We shall show that this is the minimum number of quantifiers needed to express $\sigma$.
%We show that $kf(k-1)$ is the minimum number of quantifiers needed to express $\sigma$ using a multi-structural game argument in Appendix \ref{app:comp_proof_rons_thm}
Let $A$ be the disjoint union of $D_1, \ldots, D_{f(k-1)}$.
If $p$ is a point in $A$, define $B_p$ to be the result of deleting the point $p$ from $A$.
Let $\cal{A}$ consist only of $A$, and let 
$\cal{B}$ consist of the graphs $B_p$ for each $p$ in $A$.
If $p$ is in the connected component $D_j$ of $A$, then $B_p$ does not have a subgraph isomorphic to $D_j$.
Hence, no member of ${\cal B}$ satisfies $\sigma$.
Since the single member $A$ of ${\cal A}$ satisfies $\sigma$, and since no member of ${\cal B}$ satisfies $\sigma$, we can make use of M-S games played on $\cal{A}$ and $\cal{B}$ to find the number of quantifiers needed to express $\sigma$.%\footnote{For an explanation of why this argument would not have gone through had we replaced $D_1,...,D_{f(k-1)}$ with the distinct graphs $C'_1,...,C'_{f(k)}$ up to isomorphism on $k$ nodes, see Appendix \ref{app:proof_fine_point}.}
%\footnote{To see why we made use of the graphs $D_j$ rather than the graphs $C_j’$ where $C_1', \ldots, C_{f(k)}’$ are the distinct graphs up to isomorphism with $k$ nodes, let $\tau_j '(x_1, \ldots, x_k)$ be a first-order sentence that completely describes $C_j’$ up to isomorphism, for $1 \leq j \leq f(k)$. Let $\sigma’$ be the conjunction of the sentences $\exists x_1 \ldots \exists x_k \tau_j’ (x_1, \ldots, x_k)$. Let $A’$ be the disjoint union of the graphs $C_j’$, and let $B_p’$ be the result of deleting the point $p$ from $A’$. If $p$ is an isolated vertex of $C_j’$, then $B_p’$  still contains a subgraph $H$  where $H$ is isomorphic to $C_j'$: this graph $H$ has $k-1$ of its vertices from what is left of $C_j’$  after removing the isolated vertex $p$ from $C_j’$,  and the other vertex of $H$ is an isolated vertex in one of the other graphs $C_i’$.  So both $A'$ and $B_p’$ satisfy $\sigma’$, which would invalidate a multi-structural  game argument.}

Assume that we have labeled copies of $A$ and the various $B_p$’s after $i$ rounds of an M-S game played on $\cal{A}$ and $\cal{B}$.
The labelling tells us which points have been selected in each of the first $i$ rounds.
Let us say that a labeled copy of $A$ and a labeled copy of $B_p$ are {\em in harmony\/} after $i$ rounds if the following holds.
For each $m$ with $1 \leq m \leq i$,
if $a$ is the point labeled $m$ in $A$,
and $b$ is the point labeled $m$ in $B_p$, then $a=b$.
In particular, if the labeled copies of $A$ and $B_p$ are in harmony, then there is a partial isomorphism between the labeled copies of $A$ and $B_p$.

Let Duplicator have  the following strategy. 
Assume first that in round $i$,  Spoiler selects in $\cal{A}$, and selects a point $a$ from a labeled member $A$ of $\cal{A}$.
Then Duplicator (by making extra copies of labeled graphs in $\cal{B}$ as needed) does the following for each labeled $B_p$ in $\cal{B}$.  
If $a \neq p$, 
and if the labeled copies of $A$ and $B_p$ before round $i$ are in harmony, then Duplicator selects $a$ in $B_p$, which maintains the harmony.
If $a=p$, or
if the labeled $A$ and $B_p$ before round $i$ are not in harmony,
then Duplicator makes an arbitrary move in $B_p$.
%, and ignores this labeled $B_p$ from then on. 

Assume now that in round $i$,  Spoiler selects in $\cal{B}$. 
When Spoiler selects the point $b$ from
a labeled copy of $B_p$,
then for each labeled $A$ from $\cal{A}$, if the labeled copy of $A$ is in harmony with the labeled copy of $B_p$ before round $i$, then Duplicator selects $b$ in $A$, and thereby maintains the harmony.
We shall show shortly (Property * below) that in every round, each labeled member of $\cal{A}$ is in harmony with a labeled member of $\cal{B}$, so in the case we are now considering where Spoiler selects in $\cal{B}$, 
Duplicator does select a point in round $i$ in each labeled member of $\cal{A}$. 

We prove the following by induction on rounds:  
%
%\smallskip

\noindent \textbf{Property *}: If $A$ is a labeled graph in $\cal{A}$ and if point $p$ in $A$ was not selected in the first $i$ rounds, then there is a labeled copy of $B_p$ that is in harmony with $A$ after $i$ rounds. 
%
%\smallskip

Property * holds  after 0 rounds (with no points selected). Assume that Property * holds after $i$ rounds; we shall show that it holds after $i+1$ rounds.  There are two cases, depending on whether Spoiler moves in $\cal{A}$ or in $\cal{B}$ in round $i+1$.  Assume first that Spoiler moves in $\cal{A}$ in round $i+1$. Assume that point $p$ was not selected in $A$ after $i+1$ rounds. By inductive  assumption,
there are labeled versions of $A$ and $B_p$ that are in harmony after $i$ rounds.  So by Duplicator’s strategy, labeled versions of of $A$ and $B_p$ are in harmony after $i+1$ rounds. 
Now assume that Spoiler moves in $\cal{B}$  in round i+1. For each labeled graph $A$ in $\cal{A}$, if a labeled $B_p$ is in harmony with the labeled $A$ after $i$ rounds, then by Duplicator’s strategy, the harmony continues between the labeled $A$ and $B_p$  after $i+1$ rounds.  So Property * continues to hold after $i+1$ rounds.  This completes the proof of Property *.

After $(k f(k-1))-1$  rounds, pick an arbitrary labeled  graph $A$ in $\cal{A}$. 
Since at most $(k f(k-1))-1$  points have been selected after $(k f(k-1))-1$  rounds, and since $A$ contains $k f(k-1)$  points (because it is the disjoint union of $f(k-1)$ graphs each with $k$ points), there is some point $p$ that was not selected in $A$ in the first $(k f(k))-1$  rounds.  
Therefore, by Property *, a labeled version of $A$ and of $B_p$ are in harmony after 
$(k f(k))-1$  rounds, and hence there is a partial isomorphism between the labeled $A$ and $B_p$.
So Duplicator wins the 
$(k f(k-1))-1$  round M-S game!
Therefore, by Theorem~\ref{thm:main1},
the number of quantifiers needed to express $\sigma$
%the property $P$ 
is more than
$(k f(k-1))-1$.
Since $\sigma$ has $k f(k-1)$ quantifiers %(via the sentence $\sigma$), 
it follows that the minimum number of quantifiers need to express $\sigma$ is exactly $k f(k-1)$.
\end{proof}

\section{Rooted Trees}  \label{section:rooted_trees_and_forests}

Our aim in this section is to establish the minimum number of quantifiers needed to distinguish rooted trees of depth at least $k$ from those of depth less than $k$ using first-order formulas, given a partial ordering on the vertices induced by the structure of the rooted tree.
%Our aim in this section is to establish the number of quantifiers needed to distinguish rooted trees of depth $k$ and larger from those of depth smaller than $k$ using first-order formulas, given a partial ordering on the vertices induced by the structure of the rooted tree. 
Figure \ref{fig:basic_rooted_tree}
\begin{figure} [h]
\centerline{\scalebox{0.35}{\includegraphics{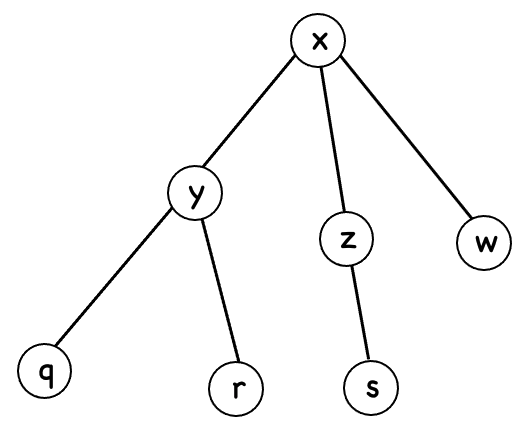}}}
\caption{A rooted tree with designated root node $x$ and depth $3$.}
\label{fig:basic_rooted_tree}
\end{figure}
gives an example of a tree where we designate $x$ as the root node. We define the depth of such a tree to be the maximum number of nodes in a path from the root to a leaf, where all segments in the path are directed from parent to child. Although it is more customary to denote the depth of a tree in terms of the number of \textit{edges} along such a path, we keep to the above definition because we will often run into the special case of linear orders, which we view as trees in the natural way, and linear orders are characterized by their size (number of nodes) and we would like the size of a liner order to correspond to the depth of the associated tree. Let us denote the tree rooted at $x$ by $T_x$. We make the arbitrary choice that the node $x$ is the \textit{largest} element in the induced partial order, so that for two nodes $\alpha, \beta$ of $T_x$, we have $\alpha > \beta$ iff there is a path $(x_1,...,x_n)$ in $T_x$ with $\alpha = x_1$ and $\beta = x_n$ such that $x_i$ is a parent of $x_{i+1}$ for $1 \leq i \leq n-1$. Thus, for example, in Figure \ref{fig:basic_rooted_tree}, $x > q$ and $z > s,$ etc.

%Note that the definition of depth that we have chosen for a given rooted tree, $T_x$, is really the maximum depth or \textit{max-depth}. We could equally well have chosen to define the depth as the minimum depth or \textit{min-depth}, in other words, as the minimum number of nodes in a path from the designated root node to any leaf. The min-depth of the rooted tree in Figure \ref{fig:basic_rooted_tree} is $2$. We hope to return to a characterization of min-depth once we have completely characterized max-depth, which henceforth we will just call \textit{depth}.
%
%The discussion that follows applies without change to the more general setting of a forest of rooted trees, and the partial orders induced by such a forest, but we constrain our discussion just to individual rooted trees for simplicity.
%
%In our discussions we shall speak interchangeably of the depth of a rooted tree and the length of the maximum path in the induced partial ordering.

The problem of distinguishing the depth of a rooted tree via a first-order formula with a minimum number of quantifiers is similar to the analogous problem for linear orders of different sizes, since a rooted tree has depth $k$ or greater iff it has a leaf node, above which there is linear order of size at least $k-1$. %Hence this problem is a natural one to turn to after completely characterizing linear orders in terms of the minimum number of quantifiers needed to distinguish them.

Our strategy will be to characterize a tree of depth $d$ recursively as a graph containing a vertex $v$ which has a subtree of depth $k$ that includes $v$ and everything below it, and a linear order of length $d-k$ comprising the vertices above $v$, where $k$ is chosen to minimize the total number of quantifiers. We then show that this is the minimum quantifier way to characterize a tree of each given depth.

%In the next section on rooted trees and forests, we will see that distinguishing rooted trees and forests of different depths via first-order sentences shares much in common with distinguishing linear orders of different sizes. In this vein, 
The following result is classic and key to establishing a number of fundamental inexpressibility results in FOL~\cite{Lib12}. It is typically obtained by appeal to Theorem~\ref{thm:ef}.

%\hl{the above reference is a book, right? it would be good to put a chapter/page number, as well as a reference to a primary source.}

%\hl{We should make completely clear in the intro which results we are merely citing and which are our own results. To this end, we should be putting brackets and references in the theorem statements that are not our own (I am doing this below, but the page numbers and a primary source should also be included there). If there is no specific theorem in the ref, then we should be writing something like} "Follows from~\cite{Lib12}, Theorem ???, page ???"

\begin{theorem}[\cite{Lib12}, Theorem 3.6] \label{thm:f}
Let $f(r) = 2^r -1$.
In an $r$-round  \edashf  game played on two linear orders of different sizes, Duplicator wins if and only if the size of the smaller linear order is at least $f(r)$.
\end{theorem}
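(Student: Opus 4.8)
The plan is to prove the result by induction on $r$, in the slightly stronger form that also allows equal sizes: Duplicator wins the $r$-round \edashf game on linear orders $L_1,L_2$ if and only if $|L_1|=|L_2|$ or $\min(|L_1|,|L_2|)\geq 2^r-1$ (the stated theorem is the special case of unequal sizes). The workhorse is a \emph{gap invariant}. After some rounds, the chosen elements, together with the two virtual endpoints, cut each $L_j$ into consecutive blocks which I call gaps; two gaps (one in each order) \emph{correspond} if they lie between the same matched pair of boundary elements, and for such a pair I write $g_1,g_2$ for the numbers of elements they contain in $L_1,L_2$. Throughout, Duplicator will only ever answer inside the gap corresponding to the gap Spoiler played in, which keeps the chosen points in the same relative order, so that ``Duplicator wins'' is equivalent to ``Duplicator survives all $r$ rounds legally''.

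For the Duplicator direction I would have Duplicator maintain the invariant that after $i$ rounds every pair of corresponding gaps satisfies $g_1=g_2$ or $g_1,g_2\geq 2^{r-i}-1$; this holds at $i=0$ by hypothesis. For the step, when Spoiler picks inside a gap of (say) $L_1$ with parameters $(g_1,g_2)$, splitting it into pieces $p,q$ with $p+q=g_1-1$: if $g_1=g_2$ Duplicator copies the split exactly; otherwise $g_1,g_2\geq 2^{r-i}-1$, and Duplicator splits the $L_2$-gap so that, if one of $p,q$ is below $2^{r-i-1}-1$, that piece is matched exactly (one checks at least $2^{r-i-1}-1$ elements remain for the other piece on both sides), and otherwise both $L_2$-pieces are made $\geq 2^{r-i-1}-1$, which is possible since the $L_2$-gap has at least $2^{r-i}-2=2(2^{r-i-1}-1)$ elements left. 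This preserves the invariant (the untouched gaps still satisfy it, since $2^{r-i}-1\geq 2^{r-i-1}-1$), and after $r$ rounds the chosen points form a partial isomorphism.

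For the Spoiler direction, assume $|L_1|\neq|L_2|$ and $\min(|L_1|,|L_2|)\leq 2^r-2$. I would prove, by downward induction on the number $m$ of remaining rounds, that whenever there is a pair of corresponding gaps with $g_1\neq g_2$ and $\min(g_1,g_2)\leq 2^m-2$, Spoiler wins within $m$ more rounds; applied to the whole orders with $m=r$ this gives the theorem. Say $g_1<g_2$. If $g_1=0$, Spoiler picks any element of the nonempty $L_2$-gap and Duplicator has no element strictly between the corresponding bounds, so Spoiler wins at once (this also settles the base case $m=1$, where $\min=0$). If $g_1\geq 1$, Spoiler picks the middle element of the $L_2$-gap, splitting $g_2$ into balanced pieces $g_2',g_2''$ with $g_2'+g_2''=g_2-1$ and $|g_2'-g_2''|\leq 1$; Duplicator must answer inside the $L_1$-gap (else the partial isomorphism is irreparably destroyed), splitting $g_1$ into $g_1'+g_1''=g_1-1$. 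Name the pieces so that $g_1'=\min(g_1',g_1'')\leq\lfloor(g_1-1)/2\rfloor\leq 2^{m-1}-2$. If $g_1'\neq g_2'$, the left pair of sub-gaps is unequal with minimum $\leq g_1'\leq 2^{m-1}-2$, and the inductive hypothesis finishes the job. If $g_1'=g_2'$, then $g_1''<g_2''$ because $g_1<g_2$, so the right pair is unequal; and $g_2'=g_1'\leq 2^{m-1}-2$ together with $g_2'\geq\lfloor(g_2-1)/2\rfloor$ forces $g_2\leq 2^m-2$, whence a short calculation using $g_1\leq g_2-1$ and the balance of Spoiler's split gives $g_1''\leq 2^{m-1}-2$, so the inductive hypothesis applies to the right pair instead.

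The step I expect to be the crux is exactly this last subcase $g_1'=g_2'$: one has to see that Duplicator's matching one half of a bisected gap leaves so little room on the $L_1$-side that the other half is automatically small, and this is precisely why Spoiler bisects the \emph{larger} gap and does so as evenly as possible (bisecting the smaller gap, or bisecting unevenly, fails — as the tiny example of orders of sizes $2$ and $3$ already shows). The remainder — the Duplicator strategy above, and verifying that the gap bookkeeping genuinely tracks the order relation among chosen elements — is routine.
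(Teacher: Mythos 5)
Your proof is correct: the gap invariant ($g_1=g_2$ or $g_1,g_2\ge 2^{r-i}-1$) for Duplicator, and the bisect-the-larger-gap strategy with the downward induction on remaining rounds for Spoiler, both check out, including the delicate subcase $g_1'=g_2'$ (the only omitted detail, Duplicator's response when Spoiler replays an already-selected element, is genuinely routine). The paper does not prove this statement at all --- it cites it as Theorem 3.6 of Libkin --- and your argument is essentially the standard textbook proof of that classical result.
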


Analogs of Theorem \ref{thm:f} are proven for M-S games in \cite{Fagin21}. %\hl{N: I don't think we need the following results? previous line should be sufficient?}
The following definition and theorems are from that paper. %and summarize these results.

\begin{definition}[\cite{Fagin21}] \label{def:g}
Define the function $g: \mathbb{N} \rightarrow \mathbb{N}$ such that $g(r)$ is the maximum number $k$ such that there is a formula with $r$ quantifiers that can distinguish linear orders of size $k$ or larger from linear orders of size less than $k$.
\end{definition}
%
%To see that $g$ is well defined, observe that the sentence 
%\begin{equation} \label{eqn:basic}
%	%\exists x_1 \cdot\cdot\cdot \exists x_r \bigwedge\limits_{1 \leq i < j \leq r} x_i < x_j,
%	\exists x_1 \cdot\cdot\cdot \exists x_r \bigwedge\limits_{1 \leq i < r} x_i < x_{i+1},
%\end{equation}
%distinguishes linear orders of size $r$ or larger from linear orders of size less than $r$. Furthermore, 
%there are only finitely many inequivalent sentences with up to $r$ quantifiers that include only the relation symbols $<$ and $=$, some fraction of which distinguish linear orders of some size $k$ or greater from linear orders of size less than $k$. There is therefore a maximum such $k \geq r$, which is then $g(r)$.

\begin{theorem}[\cite{Fagin21}] \label{thm:g}
The function $g$ takes on the following values: 
$g(1) = 1, g(2) = 2, g(3) = 4, g(4) = 10$, and for $r > 4$,
\begin{equation*}
g(r) = \begin{cases} &2g(r-1)~~~~~~\textrm{if $r$ is even,}\\ &2g(r-1) + 1\textrm{ if $r$ is odd.} \end{cases}
\end{equation*}
%In an $r$-round  multi-structural game played on two linear orders of different sizes, Duplicator wins if and only if the size of the smaller linear order is at least $g(r)$.
\end{theorem}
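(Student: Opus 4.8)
The plan is to pin down $g(r)$ via the multi-structural game characterization of Theorem~\ref{thm:main1}: by Definition~\ref{def:g}, $g(r)$ is the largest $k$ for which the property ``this linear order has at least $k$ elements'' is expressible with $r$ quantifiers, so it suffices to establish two matching bounds. Write $L_n$ for the $n$-element linear order, and let $d$ denote the claimed value of $g(r)$. The \emph{lower bound}, $g(r)\ge d$, amounts to exhibiting an explicit FO sentence with exactly $r$ quantifiers that holds on a finite linear order precisely when it has at least $d$ elements; equivalently, by Theorem~\ref{thm:main1}, to giving Spoiler a winning strategy in the $r$-round M-S game on $(\{L_j:j\ge d\},\{L_j:j<d\})$ and reading the sentence off its strategy tree. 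For the \emph{upper bound}, $g(r)\le d$, observe that an $r$-quantifier sentence separating $\{L_j:j\ge d+1\}$ from $\{L_j:j\le d\}$ would in particular separate $L_{d+1}$ from $L_d$; hence it is enough to show that no $r$-quantifier sentence separates $L_{d+1}$ from $L_d$, i.e.\ that Duplicator wins the $r$-round M-S game on the singletons $(\{L_{d+1}\},\{L_d\})$.

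For the lower bound I would proceed by induction on $r$, passing from an $(r-1)$-quantifier sentence for threshold $g(r-1)$ to an $r$-quantifier sentence for threshold $g(r)\approx 2g(r-1)$. The delicate point is that the naive ``name a midpoint and recurse on both halves'' both doubles the threshold \emph{and} doubles the number of quantifiers; the construction must instead double the threshold while adding only one quantifier --- essentially by reusing already-named pivots as endpoints of nested intervals and measuring lengths only relative to those fixed elements, so that the piece on one side of a pivot can be checked without a fresh copy of the recursion. The extra $+1$ on odd steps comes from a floor/ceiling asymmetry in how a pivot can split an order so that one side carries one extra element, and the irregular small values $g(1),\dots,g(4)$ are handled by writing down the four small sentences by hand.

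For the upper bound I would have Duplicator play the oblivious strategy on $(\{L_{d+1}\},\{L_d\})$ and maintain, by induction on the number of rounds, an invariant on the ``interval profile'' of the surviving labeled copies: after $i$ rounds the marked points cut each order into intervals, and Duplicator keeps alive some labeled copy of $L_d$ whose profile agrees with that of a surviving labeled copy of $L_{d+1}$ in order type and in the pattern of empty versus nonempty intervals, subject to a quantitative proviso bounding how far Spoiler can have narrowed the interval in which the two orders differ in length. One checks that this ``surplus interval'' can be squeezed by only a constant factor --- roughly two --- per round, so that with $d$ as claimed it survives all $r$ rounds and Duplicator wins; the parity in the recurrence reflects a subtle asymmetry between rounds in which Spoiler moves on the singleton side --- forcing Duplicator to split into many copies on the other side --- and rounds in which Spoiler moves on the side carrying the accumulated copies.

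The main obstacle is the upper bound, and specifically the design of a Duplicator invariant that is robust to the defining feature of M-S games: Duplicator accumulates an unbounded family of labeled copies while Spoiler may choose either side on each round, so the invariant must control all surviving copies simultaneously, not a single pair as in an \ef game. Pinning down the exact constants --- the parity of the recurrence, and the anomalous base cases $g(1),\dots,g(4)$ --- calls for a careful case analysis rather than a one-line potential argument, with Theorem~\ref{thm:f} serving as a sanity check: since $Rank(P)\le Quants(P)$, already $g(r)\le f(r)=2^r-1$.
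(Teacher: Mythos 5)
First, a framing point: this paper does not prove Theorem~\ref{thm:g}; it is quoted verbatim from \cite{Fagin21}, so there is no in-paper proof to compare your attempt against. Judged on its own, your proposal correctly identifies the overall architecture that \cite{Fagin21} actually uses --- lower bounds via explicit $r$-quantifier threshold sentences, upper bounds via Duplicator wins in the $r$-round M-S game on a pair of linear orders, linked by Theorem~\ref{thm:main1} --- but what you have written is a plan rather than a proof, and the two places you wave your hands are precisely where all the work lives. For the lower bound, the mechanism that lets you ``double the threshold while adding only one quantifier'' is not merely ``reusing already-named pivots'': it is that the inner $(r-1)$-quantifier formula must begin with a \emph{universal} quantifier, so that the two copies needed for the two sides of the pivot can be merged into a single quantifier block by a case split on whether the universally quantified variable lies above or below the pivot (this is exactly the relativization carried out in equations (\ref{c4.1})--(\ref{c4.3}) of Lemma~\ref{weak_g_induction}). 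This forced alternation between $\exists$-leading and $\forall$-leading sentences is also the true source of the parity in the recurrence, and it is why one must track $g_\exists$ and $g_\forall$ separately rather than $g$ alone; your ``floor/ceiling asymmetry'' explanation misses this.

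Two further gaps. (1) Your upper-bound reduction only rules out an $r$-quantifier sentence for the threshold $k=d+1$; to conclude $g(r)\le d$ you must rule out every threshold $k>d$, which requires the monotonicity statement implicit in Theorem~\ref{thm:g_for_game_play} (Duplicator wins on $(L_{m+1},L_m)$ for \emph{all} $m\ge d$), and that monotonicity is itself part of what has to be proved. (2) The base cases are not dispatched ``by writing down the four small sentences by hand'': note that $g(4)=10\ne 2g(3)=8$, so the recurrence genuinely fails at $r=4$, and the matching upper bound $g(4)\le 10$ requires a bespoke Duplicator analysis of the $4$-round game on linear orders of sizes $10$ and $11$ (compare the case analysis reproduced in the proof of Lemma~\ref{lemma:t_of_4_upper_bound} of this paper). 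The Duplicator invariant you sketch for the general upper bound (``interval profiles'' with a surplus squeezed by a constant factor per round) is plausible and is in the spirit of the actual argument, but as stated it is not yet a strategy one could verify. So: right skeleton, essentially the same route as \cite{Fagin21}, but the decisive technical devices are missing.
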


%The following theorem for multi-structural games is the analog of Theorem \ref{thm:f} for E-F games, and describes precisely when Duplicator (alternatively, Spoiler) wins $r$-round multi-structural games on two linear orders of different sizes.

\begin{theorem}[\cite{Fagin21}] \label{thm:g_for_game_play}
In an $r$-round M-S game played on two linear orders of different sizes Duplicator has a winning strategy if and only if the size of the smaller linear order is at least $g(r)$. 
\end{theorem}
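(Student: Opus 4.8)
The plan is to prove the statement by induction on the number of rounds $r$, carried out in tandem with the recursion for $g$ recorded in Theorem~\ref{thm:g}. First I would pass to the formula side. By the Equivalence Theorem for multi-structural games (Theorem~\ref{thm:main1}), together with the observation that replacing a separating sentence $\phi$ by $\neg\phi$ interchanges the roles of the two sides (so Spoiler wins the game on $(\mathcal{A},\mathcal{B})$ iff Spoiler wins it on $(\mathcal{B},\mathcal{A})$), the assertion ``Duplicator wins the $r$-round M-S game on the pair of linear orders $L_m$ and $L_n$'' (with $m<n$) is equivalent to ``no sentence with at most $r$ quantifiers is true on one of $L_m,L_n$ and false on the other''. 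So it suffices to show that such a distinguishing sentence exists precisely when $m<g(r)$.

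The workhorse of the induction is the decomposition of a game on linear orders after the first move. A point played at position $i$ of a linear order $L_\ell$ cuts it into a ``lower'' interval of size $i-1$ and an ``upper'' interval of size $\ell-i$; since no partial isomorphism formed in later rounds can cross this cut, the remaining play factors into two multi-structural games, one on the lower intervals and one on the upper intervals, with Duplicator winning the composite $(r-1)$-round game iff it wins both factor games (Duplicator can devote a full $(r-1)$-round strategy to whichever factor Spoiler plays in). Two features special to the multi-structural setting matter: when Spoiler opens in the larger order $L_n$, Duplicator may answer by making one copy of $L_m$ for each candidate position $j$, so Duplicator wins as long as \emph{some} $j$ yields two factor games that it wins; and once a side holds many copies, Spoiler may choose a \emph{different} point in each of them, coupling the factor games through Duplicator's responses on the other side.

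By the inductive hypothesis, Duplicator wins an $(r-1)$-round factor game between intervals of sizes $a\le b$ iff $a=b$ (isomorphic intervals) or $a\ge g(r-1)$. For the direction ``$m\ge g(r)\Rightarrow$ Duplicator wins'', one checks that for every opening move $i$ of Spoiler in $L_n$ there is a position $j$ in $L_m$ with $\min(j-1,i-1)\ge g(r-1)$ or $j=i$, and simultaneously $\min(m-j,n-i)\ge g(r-1)$ or $m-j=n-i$; doing this, together with the parallel (and more delicate) analysis of rounds in which Spoiler plays into the smaller order and attacks individual copies, forces exactly the thresholds $g(r)=2g(r-1)$ for $r$ even and $g(r)=2g(r-1)+1$ for $r$ odd. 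For the direction ``$m<g(r)\Rightarrow$ Spoiler wins'', one either gives a matching Spoiler strategy --- open in $L_n$ to spread out Duplicator's copies of $L_m$, then play into $L_m$ and pick in each copy a point badly placed relative to that copy's earlier point, so that every copy eventually dies --- or, equivalently, exhibits a separating sentence. For the latter it is enough to show, by the same induction, that for every $k\le g(r)$ the property ``the linear order has size at least $k$'' is expressible with at most $r$ quantifiers, via a recursive construction that guesses a splitting point, handles one side by a threshold formula with at most $r-1$ quantifiers and the other side cheaply, with the split dictated by the recursion for $g$; the ``size $\ge m+1$'' sentence then separates $L_m$ from $L_n$.

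I expect the main obstacles to be twofold. First, the base cases $r\le 4$: the values $g(1)=1$, $g(2)=2$, $g(3)=4$, $g(4)=10$ do not obey the stated recursion (already $g(4)=10\ne 2g(3)$), so the small games, and the corresponding small separating sentences --- which are non-obvious even for $g(3)=4$ --- must be handled directly. Second, the rounds in which Spoiler plays into the side holding many of Duplicator's copies: Spoiler can attack each copy separately while Duplicator must reply on the other side, so the factor sub-games are genuinely coupled and one must argue carefully that Duplicator can always keep at least one copy alive. This coupling is exactly what makes the multi-structural threshold $g(r)$ strictly below the Ehrenfeucht--\fraisse threshold $f(r)=2^r-1$ of Theorem~\ref{thm:f}, and it is the source of the even/odd asymmetry in the recursion.
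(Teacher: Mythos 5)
A preliminary remark: the paper you were given does not prove Theorem~\ref{thm:g_for_game_play} at all --- it imports both it and the recursion of Theorem~\ref{thm:g} from \cite{Fagin21} --- so there is no in-paper proof to compare against. Judged on its own terms, your outline has the right architecture (induction on $r$, reduction to sentences via Theorem~\ref{thm:main1} plus a monotonicity claim that ``size $\ge k$'' is expressible with $r$ quantifiers for every $k\le g(r)$, and a split of the game after round one into a lower and an upper factor game), but it contains a gap exactly at the point you flag as ``delicate,'' and the gap is not cosmetic: your stated winning criterion for Duplicator provably yields the wrong threshold.

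Concretely, your sufficient condition is that Duplicator wins if \emph{some} single position $j$ makes both factor pairs either isomorphic or of size at least $g(r-1)$. Run this at the boundary $m=2g(r-1)$, $n=m+1$, with Spoiler opening at $i=g(r-1)+1$ in $L_n$, so that $i-1=n-i=g(r-1)$. Taking $j=i$ leaves upper intervals of sizes $g(r-1)-1$ versus $g(r-1)$ (a lost factor game by the inductive hypothesis); taking $j=i-1$ fails symmetrically below; and having both $j-1\ge g(r-1)$ and $m-j\ge g(r-1)$ forces $m\ge 2g(r-1)+1$. So no single $j$ works, and your criterion only certifies Duplicator wins for $m\ge 2g(r-1)+1$, contradicting the even-$r$ value $g(r)=2g(r-1)$. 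Duplicator's actual escape is to play \emph{both} $j=i$ and $j=i-1$ on two copies, each of which is a losing factor configuration on its own, and then to argue a budget bound: killing copy $1$ requires Spoiler to spend essentially all $r-1$ remaining moves in the lower region while killing copy $2$ requires the same in the disjoint upper region, and he cannot do both. In other words, in the multi-structural setting the composite game can be won although every individual copy loses one of its factor games, so the iff-decomposition ``Duplicator wins the composite iff she wins both factors'' --- valid for \ef games, where each side holds one structure --- is false as stated, and it is precisely the direction your induction leans on. Any correct proof must replace it with a resource-counting argument across incompatible copies. Separately, you rightly note that $g(4)=10$ breaks the recursion and must be done ``directly,'' but that base case is itself a substantial case analysis (the proof of Lemma~\ref{lemma:t_of_4_upper_bound} in this paper carries out only the easier half, where Spoiler is forced to open on the larger order); without it the induction has no foundation.
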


For given positive integers $r$ and $k$, we want to know if there exist sentences with $r$ quantifiers that distinguish rooted trees of depth $k$ or larger from rooted trees of depth smaller than $k$.  For $k = r$, one such sentence is

\begin{equation} \label{eqn:basic}
	%\exists x_1 \cdot\cdot\cdot \exists x_r \bigwedge\limits_{1 \leq i < j \leq r} x_i < x_j,
	\exists x_1 \cdot\cdot\cdot \exists x_r \bigwedge\limits_{1 \leq i < r} (x_i < x_{i+1}),
\end{equation}
which distinguishes rooted trees of depth $r$ or larger from rooted trees of depth less than $r$.  Here, if $T_x$ is a rooted tree of depth exactly $r$ then $x_1$ would be a deepest child. Since there are only finitely many inequivalent formulas in up to $r$ variables that include the relations $<$ and $=$ and at most $r$ quantifiers, 
there is some maximum such $k$, which we shall designate by $t(r)$.  With $\mathbb{N} = \{1, 2, ...\}$, we restate this definition of $t$ formally as follows. Note that no meaningful sentence about trees can be constructed with a single quantifier, so the definition begins at $r = 2$.

\begin{definition} \label{def:t}
Define the function $t:\{2,3,...\}\rightarrow \mathbb{N}$ such that $t(r)$ is the maximum number $z$ such that there is a formula with $r$ quantifiers that can distinguish rooted trees of depth $z$ or larger from rooted trees of depth less than $z$.
%Define the function $g:\mathbb{N} \rightarrow \mathbb{N} \cup \{\infty\}$ such that $g(r)$ is the maximum number $k$, or infinity, such that the Spoiler can win multi-structural games on a pair of linear orders when one linear order is of size $k$ or greater, and the other is of size less than $k$, and, moreover, the Duplicator can win multi-structural games on two linear orders when both linear orders are of size $k$ or greater.
\end{definition}

By (\ref{eqn:basic}) above, $t(r) \geq r$ for $r \geq 2$. For an M-S game of $r$ rounds on rooted trees of sizes $t(r)$ or larger on one side, and $t(r) - 1$ or smaller on the other side, by the Equivalence Theorem, Spoiler will have a winning strategy. 

Since linear orders are perfectly good rooted trees, we have the following:

\begin{observation} \label{obs:t_g_bound}
For all $r$ we have $t(r) \leq g(r)$.
\end{observation}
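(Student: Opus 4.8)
The plan is to exploit the fact, already set up above, that every linear order is trivially a rooted tree, and that under the conventions fixed in this section (root $=$ largest element, depth $=$ number of nodes on a longest root-to-leaf path) the depth of a linear order viewed as a rooted tree equals its number of nodes. Concretely, write $z = t(r)$ and let $\phi$ be a sentence with $r$ quantifiers over the vocabulary $\{<\}$ witnessing this value; replacing $\phi$ by $\neg\phi$ if necessary (which does not change the number of quantifiers), we may assume $\phi$ holds in every rooted tree of depth at least $z$ and fails in every rooted tree of depth less than $z$.

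Next I would restrict attention to linear orders. A linear order on $m$ elements has a largest element from which every element descends, and any two ancestors of a given node are comparable, so it is a rooted tree in the sense used here, and its depth is exactly $m$. Hence $\phi$ holds in every linear order of size at least $z$ and fails in every linear order of size less than $z$, i.e.\ $\phi$ is a sentence with $r$ quantifiers distinguishing linear orders of size $\ge z$ from those of size $< z$. By Definition~\ref{def:g} this forces $z \le g(r)$, that is, $t(r) \le g(r)$. (Equivalently, via games: if $t(r) > g(r)$, then by Theorem~\ref{thm:main1} Spoiler wins the $r$-round M-S game on the pair of all rooted trees of depth $\ge t(r)$ versus all of depth $< t(r)$; the witnessing sentence then also wins for Spoiler on the sub-pair consisting of the single linear order of size $t(r)$ against the single linear order of size $t(r)-1 \ge g(r)$, contradicting Theorem~\ref{thm:g_for_game_play}.)

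I do not expect a genuine obstacle here: the entire content is the observation that a depth-distinguishing sentence for rooted trees is automatically a size-distinguishing sentence for linear orders. The only point requiring a moment's care is confirming that the paper's depth convention really does make the depth of a linear-order-as-tree coincide with its size and that a linear order qualifies as a rooted tree, both of which are immediate from the definitions given above.
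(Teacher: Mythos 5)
Your argument is correct and is exactly the paper's justification: the paper offers only the one-line remark that ``linear orders are perfectly good rooted trees,'' and your write-up simply spells out that same observation (a depth-distinguishing sentence for rooted trees restricts to a size-distinguishing sentence for linear orders, so $g(r)\geq t(r)$). No difference in approach and no gap.
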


In the subsections that follow on rooted trees we sometimes refer to the deeper tree or family of trees in a given multi-structural game by $B$ (for ``Big'') and the shallower tree/family of trees by $L$ (for ``Little''). Analogously, when considering games on linear orders, $B$ often refers to the bigger linear order(s) and $L$ the littler one(s).  Further, on linear orders of some size, a designation of the form L4, say, refers to the $4$th smallest element of $L$ and analogously B4 to the $4$th smallest element of $B$. If we have to refer to an element in a variable position, say in the $k+1$st position of $L$, we would write $L(k+1)$.%, as per the convention established in \cite{Fagin21}.

\subsection{Establishing $t(2)$ and $t(3)$} \label{sec:first-t-values}

%\begin{lemma} \label{lemma:t_of_1}
%	$t(1) = 1$.
%\end{lemma}
%
%\begin{proof}
%The formula $\exists x (x = x)$ distinguishes rooted trees of depth $1$ and greater from the empty tree.  Furthermore, %Duplicator wins any one round game on rooted trees of length at least $1$.
%\end{proof}

We establish upper bounds of the form $t(r) \leq k$ %, as was done for linear orders in \cite{Fagin21}, 
by finding specific trees of depths $k$ and $k' > k$, and then finding Duplicator-winning strategies for the associated $r$-round multi-structural game on these trees. %There is an important difference between multi-structural games on linear orders and multi-structural games on rooted trees, however. In the case of linear orders there is a unique structure of size $k$. In the case of rooted trees depth does not uniquely determine a tree and in order for Duplicator to establish that depth-$k$ trees \textit{cannot} be distinguished by a first-order sentence with $r$ quantifiers, we must allow him to pick the trees at the beginning of the game. 

\begin{definition}
By $T_x(d)$ we mean a tree rooted at $x$ of depth $d$. Analogously, $T_x(d+)$ means that the tree rooted at $x$ has depth $d$ or greater, and $T_x(<d)$ means that the tree has depth less than $d$.
\end{definition}

%\begin{proof}
%Since in the $r$-round multi-structural game to distinguish rooted trees of depth $k$ from those of depth $k'$, Duplicator has the freedom to choose the depth $k$ and $k'$ rooted trees, he can just pick the rooted trees to be linear orders. The result follows directly.
%\end{proof}

%Before that, however, let us establish the easier lower bounds.

\begin{lemma} \label{lemma:t_of_2}
$t(2) = 2$.
\end{lemma}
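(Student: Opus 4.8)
The plan is to establish $t(2) = 2$ by proving the two bounds $t(2) \geq 2$ and $t(2) \leq 2$ separately; both are short given the results already in hand. For the lower bound I would exhibit an explicit $2$-quantifier separator: the $r = 2$ instance of \eqref{eqn:basic}, namely $\exists x_1 \exists x_2 (x_1 < x_2)$. Every rooted tree of depth at least $2$ contains a root-to-leaf path with at least two nodes, hence two strictly comparable elements, and so satisfies this sentence, whereas the only rooted tree of depth less than $2$ is the single-node tree, which does not. Thus $t(2) \geq 2$ (this also follows from the remark ``$t(r) \geq r$ for $r \geq 2$'' recorded right after \eqref{eqn:basic}).

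For the upper bound $t(2) \leq 2$ the quickest route is to combine Observation~\ref{obs:t_g_bound} with the value $g(2) = 2$ from Theorem~\ref{thm:g}: since linear orders are rooted trees, $t(2) \leq g(2) = 2$. If one prefers an argument in the game-theoretic style used for the larger values of $t$, I would instead reason directly: to prove $t(2) < 3$ it suffices, for each $z \geq 3$, to produce a rooted tree of depth $z$ and a rooted tree of depth $z - 1$ on which Duplicator wins the $2$-round M-S game, because by Theorem~\ref{thm:main1} any $2$-quantifier sentence separating the class of depth-$\geq z$ trees from the class of depth-$< z$ trees would in particular be true on the first tree and false on the second. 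Taking both trees to be chains, of sizes $z$ and $z - 1 \geq 2$, Theorem~\ref{thm:g_for_game_play} hands Duplicator the win, since the smaller size $z - 1$ is at least $g(2) = 2$. Either way $t(2) \leq 2$, and together with the lower bound this yields $t(2) = 2$.

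I do not expect a genuine obstacle here; the only point requiring care is the bookkeeping around the definition of $t$: establishing ``$t(2) \leq 2$'' amounts to ruling out, for \emph{every} $z \geq 3$, a $2$-quantifier sentence separating depth-$\geq z$ trees from depth-$< z$ trees, and a single pair of witnessing structures (equivalently, the full two classes) suffices to refute any such candidate sentence via the Equivalence Theorem for M-S games.
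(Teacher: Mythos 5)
Your proof is correct and takes essentially the same route as the paper: the lower bound via the explicit two-quantifier sentence $\exists x_1 \exists x_2(x_1 < x_2)$, and the upper bound via Observation~\ref{obs:t_g_bound} combined with $g(2) = 2$ from Theorem~\ref{thm:g}. The additional game-theoretic argument you sketch is a valid alternative but not needed.
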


\begin{proof}
The formula $\Upsilon_2 = \exists x \exists y(x < y)$ distinguishes $T_x(<2)$ from $T_x(2+)$ and so $t(2) \geq 2$.  The upper bound $t(2) \leq g(2) = 2$ follows from Observation \ref{obs:t_g_bound} and the known value of $g(2)$ given by Theorem \ref{thm:g}. %The lemma follows.
\end{proof}

\begin{lemma} \label{lemma:t_of_3}
$t(3) = 4$.
\end{lemma}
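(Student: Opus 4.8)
The lemma has two halves, and only the lower bound needs work. The upper bound $t(3)\le 4$ is immediate: a linear order of size $s$ is a rooted tree of depth $s$, so any $3$-quantifier sentence separating rooted trees of depth $\ge z$ from those of depth $<z$ in particular separates linear orders of size $\ge z$ from those of size $<z$, forcing $z\le g(3)=4$ by Observation~\ref{obs:t_g_bound} and Theorem~\ref{thm:g}. Combined with $t(3)\ge 3$ from sentence~\eqref{eqn:basic}, what remains is to prove $t(3)\ge 4$, i.e.\ to exhibit a single first-order sentence with exactly three quantifiers that is true in every rooted tree of depth at least $4$ and false in every rooted tree of depth at most $3$.

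My plan for the lower bound is first to observe that no purely existential sentence $\exists x\exists y\exists z\,\alpha$ with $\alpha$ quantifier-free can work: any two linear orders of size $\ge 3$ realize exactly the same ordered $\{<,=\}$-types of triples, so such a sentence agrees on the $3$- and $4$-element linear orders, which are a depth-$3$ and a depth-$4$ tree; the same applies to any Boolean combination of $\exists^{\le 3}$ pieces, so a genuine universal quantifier nested with an existential is essential, leaving the single-block shapes $\forall x\exists y\exists z\,\theta$, $\forall x\forall y\exists z\,\theta$, $\exists x\forall y\exists z\,\theta$, $\exists x\exists y\forall z\,\theta$. I would then hunt for the correct three-variable quantifier-free matrix $\theta$ over $\{<,=\}$, subject to two opposing demands: (i) $\theta$ must be witnessable at \emph{every} vertex of \emph{every} depth-$\ge 4$ tree — the hard vertices being leaves at depth $2$, which have only the root above them and nothing below, so their witnesses must be pulled from the tree's deep branch using the root together with a branch vertex; and (ii) $\theta$ must leave \emph{some} vertex of \emph{every} depth-$\le 3$ tree, however bushy, with no witnesses at all.

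To see why pinning down $\theta$ is the crux, note how the obvious candidates fail. The literal $4$-chain sentence $\exists x_1\exists x_2\exists x_3\exists x_4\bigwedge_i(x_i<x_{i+1})$ spends four quantifiers. The natural transcription of a sentence witnessing $g(3)=4$ — ``every vertex is the top or the bottom of a strict $3$-chain'', namely $\forall x\,\exists y\,\exists z\,\big[\big((y<x\wedge z<x)\vee(x<y\wedge x<z)\big)\wedge(y<z\vee z<y)\big]$ — is correct on linear orders but already false on any depth-$4$ tree that has a leaf at depth $2$; and the naive repair (also accepting a comparable pair $y<z$ incomparable to $x$) turns out to be true on some depth-$3$ trees, e.g.\ one with two distinct depth-$3$ root-to-leaf paths. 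A workable $\theta$ therefore needs a clause that, using only $<$ and $=$ on three variables, effectively certifies ``$x$ is a shallow vertex sitting under a branch of depth $\ge 3$''. Once a candidate $\theta$ is in hand, verification should be routine: for (ii), a short structural lemma locates a ``stuck'' vertex in every depth-$\le 3$ tree; for (i), one fixes a root-to-leaf chain $c_1<c_2<c_3<c_4$ and, given an arbitrary pivot $x$, supplies the witnesses from among the $c_i$ and $x$'s own ancestors and descendants by cases on how $x$ meets the chain. Equivalently, (i)+(ii) say exactly that Spoiler wins the $3$-round multi-structural game on $(\mathcal{A},\mathcal{B})$ with $\mathcal{A}$ the class of all depth-$\ge 4$ rooted trees and $\mathcal{B}$ the class of all depth-$\le 3$ rooted trees, so one may instead describe Spoiler's strategy and invoke Theorem~\ref{thm:main1}.

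The main obstacle, then, is designing this three-variable matrix $\theta$ (equivalently, Spoiler's strategy) so that it is simultaneously falsified by every depth-$\le 3$ tree and satisfied by every depth-$\ge 4$ tree, and in particular at the depth-$2$ leaves of the latter where only the root lies above; everything else — the reduction above, the structural ``stuck vertex'' lemma, and the upper bound — is comparatively routine.
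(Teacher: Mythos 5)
Your upper bound is exactly the paper's (Observation~\ref{obs:t_g_bound} plus $g(3)=4$), and your preliminary analysis of the lower bound is sound: purely existential sentences cannot separate the $3$- and $4$-element linear orders, the $\forall\exists\exists$ transcription of the linear-order sentence fails at a depth-$2$ leaf of a bushy depth-$4$ tree, and the winning shape is indeed $\exists x\forall y\exists z$. But the proposal stops precisely where the proof begins: you never produce the three-variable matrix, and you explicitly flag its design as "the main obstacle." That matrix is the entire content of the lower bound, so as written this is a genuine gap, not a complete proof.

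The idea you are missing is that in a rooted tree the ancestors of any vertex form a \emph{chain}, so "there exist two distinct elements above $x$" already certifies a linear order of size $3$ above $x$ --- no comparability clause between the two witnesses is needed, and the two witnesses can be produced one at a time by a $\forall y\,\exists z$ pattern ("for every $y$, there is a $z>x$ with $z\neq y$"). The paper's sentence (\ref{g3_exists}) says exactly "there is an $x$ with one element below it and two distinct elements above it": the $y<x$ and $y>x$ clauses force two ancestors, and the $y=x$ clause forces a descendant. Every rooted tree of depth at least $4$ satisfies it, witnessed by the parent of a deepest leaf, and no tree of depth at most $3$ does, since a vertex with a descendant there sits at depth at most $2$ and so has at most one ancestor. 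Note also that your demands (i) and (ii) are phrased for a $\forall$-first sentence (every vertex must witness $\theta$); for the $\exists$-first shape that actually works, the requirement is the weaker one that \emph{some} vertex of every deep tree, and \emph{no} vertex of any shallow tree, satisfies the inner $\forall\exists$ condition --- which is what makes the construction go through.
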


\begin{proof}
The inequality $t(3) \leq g(3) = 4$ follows from Observation \ref{obs:t_g_bound} and the value of $g(3)$ given by Theorem \ref{thm:g}. To establish $t(3) \geq 4$, let us look at the two different expressions that distinguished \textit{linear orders} of size at least $4$ from those of size less than $4$:
\begin{align}
	\Phi_{3,\forall} = \forall x\exists y \exists z(&x < y < z \vee y < z < x) \label{g3_forall} \\%= \forall x\exists y \exists z\phi_{3,\forall} \label{g3_forall} \\
	\Phi_{3,\exists} = \exists x \forall y \exists z(& \label{g3_exists} \\
		&y < x \rightarrow z > x~~~\land \notag \\
		&y > x \rightarrow (z \neq y \land z > x)~~~\land \notag \\
		& y = x \rightarrow z < x) \notag %= \exists x \forall y \exists z \phi_{3,\exists} \notag 
\end{align}
Note that $\Phi_{3,\forall}$, a statement that says that for every $x$ there are either two elements less than $x$ or two elements greater than $x$, fails for the rooted tree of depth $4$ in Figure \ref{fig:tree_4_1},
\begin{figure} [h]
\centerline{\scalebox{0.40}{\includegraphics{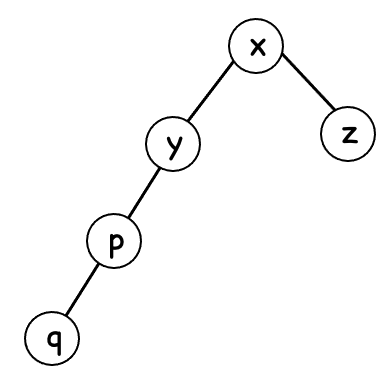}}}
\caption{A rooted tree of depth $4$ that does not satisfy $\Phi_{3,\forall}$.}
\label{fig:tree_4_1}
\end{figure}
since the vertex $z$ fails to satisfy this condition, and hence is not a viable candidate for a formula that distinguishes $T_x(4+)$ from $T_x(<4)$. However, $\Phi_{3,\exists}$ \textit{does} succeed in this regard, since it says that there is an element that has one smaller element and two larger elements. A rooted tree of depth $4$ always has such an element -- the parent of a deepest leaf node -- the element labeled $p$ in the figure. Further, $\Phi_{3,\exists}$ is satisfied by every rooted tree of depth at least $4$ and no rooted tree of depth less than $4$. The lemma follows.
\end{proof}

\subsection{Establishing $t(4)$} \label{app:t_of_4}
% !TEX root = Multi-structural Games ITCS.tex
%The first place where $t(r)$ differs from $g(r)$ is at $r = 4$, as the following lemma shows.

\begin{definition} By l.o.(k) we mean the unique linear order with $k$ nodes.
\end{definition}

\begin{lemma} \label{lemma:t_of_4_upper_bound}
$t(4) \leq 8$.
\end{lemma}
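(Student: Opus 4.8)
The plan is to reduce the statement to a multi-structural game via the Equivalence Theorem (Theorem~\ref{thm:main1}). It suffices to produce, for a fixed depth-$8$ tree $L$ (over the partial-order vocabulary $\{<,=\}$ of this section) and a suitable tree $B$ of depth at least $9$ — both carrying an appropriate branching structure, and possibly replaced by finite sets of such trees — a Duplicator winning strategy in the $4$-round M-S game on $(\{B\},\{L\})$. A win here certifies, by Theorem~\ref{thm:main1}, that no sentence with at most $4$ quantifiers is true on $B$ and false on $L$; since a sentence distinguishing trees of depth $\ge z$ from trees of depth $<z$ would be true on $B$ whenever $B$ has depth $\ge z$ and false on $L$ whenever $z\ge 9$, running the construction for each target depth $z\ge 9$ with a tree $B$ of depth $\ge z$ shows that no such $4$-quantifier sentence exists for any $z\ge 9$, i.e.\ $t(4)\le 8$. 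One may alternatively invoke $t(4)\le g(4)=10$ from Observation~\ref{obs:t_g_bound} and Theorem~\ref{thm:g}, so that it is enough to handle $B$ of depths $9$ and $10$ only.

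For the trees I would follow the recursive scheme described at the start of the section: a tree of depth $d$ is presented as a committed vertex $v$ with a subtree of depth $k$ below (and including) $v$, together with a linear order of length $d-k$ running from $v$ up to the root, with $k$ chosen to make the game hardest. On the $L$ side I would additionally hang ``spare'' descending chains of graduated lengths, with large multiplicity, off $v$ and off each vertex of the upward path. The reason for the padding is that once Spoiler commits a vertex, the residual game decomposes into essentially independent linear-order games on the chain below that vertex and on the chain above it, and these are governed precisely by the values $g(1)=1,\ g(2)=2,\ g(3)=4,\ g(4)=10$ of Theorem~\ref{thm:g} (via Theorem~\ref{thm:g_for_game_play}); the spare chains on $L$'s side are exactly what lets Duplicator stay in step with the one-deeper chain available in $B$ over the remaining at most three rounds.

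The core of the argument is Duplicator's oblivious strategy together with an invariant kept across the rounds: after round $i$ there are labeled copies of $B$ and $L$ whose labelings form a partial isomorphism, and the ``depth deficit'' of $L$ relative to $B$, localized to the branch currently in play, can still be absorbed by the $4-i$ remaining quantifiers as measured by the $g$-values. When Spoiler picks on the $B$ side a vertex that exhibits the extra depth, Duplicator answers in one of $L$'s spare chains with a vertex of the same order type relative to all previously chosen vertices; when Spoiler picks on the $L$ side, Duplicator copies the order type into $B$, which always succeeds because $B$ is at least as deep and at least as bushy wherever it matters. Each update is an application of Theorem~\ref{thm:g_for_game_play} to the residual linear orders above and below the committed vertex, keeping the case analysis finite and checkable.

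I expect the main obstacle to be exactly that case analysis, and within it two kinds of Spoiler play: rounds where Spoiler plays on the shallow tree $L$ so as to force Duplicator to commit, inside $B$, to a chain of a length $L$ cannot later realize; and rounds where Spoiler runs the ``bisection'' strategy underlying $g(4)=10$ inside a chain of $B$. One must verify that depth $8$ is the break-even point — that $L$'s spare branching answers every $4$-round Spoiler strategy in the depth-$9$-versus-$8$ (and depth-$10$-versus-$9$) instances, so the bound is pinned at $t(4)\le 8$ rather than being provable one step lower. Choosing the branch multiplicities in $L$ large is harmless, since Duplicator's oblivious strategy simply takes all copies, so the residual work is the bookkeeping of which $g$-value applies at which vertex in which round.
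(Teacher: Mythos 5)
Your high-level frame is the right one and matches the paper's: by Observation~\ref{obs:t_g_bound} and $g(4)=10$ only the targets $z=9$ and $z=10$ need game-theoretic treatment, and each is handled by exhibiting a pair of rooted trees of depths $z$ and $z-1$ on which Duplicator wins the $4$-round M-S game. The paper does exactly this, first with a two-branch pair (depths $10$ and $9$) and then with a more elaborately branched pair (depths $9$ and $8$).

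However, there is a genuine gap, and it sits at the center of your argument rather than at its periphery. Your invariant asserts that once a vertex is committed, ``the residual game decomposes into essentially independent linear-order games on the chain below that vertex and on the chain above it, and these are governed precisely by the values $g(1),\dots,g(4)$.'' If that decomposition were correct, you could prove at best $t(4)\le 9$: in the depth-$9$-versus-$8$ instance, Spoiler's first move on the long branch of $B$ reduces (under your decomposition) to a $9$-versus-$8$ linear-order game in which Spoiler plays first on the larger order, and that game is Spoiler-winning. The entire content of the paper's proof of the bound $8$ is that this reduction fails: the trees are chosen so that what sits \emph{below} a committed vertex is a rooted tree with side branches, not a linear order, and Duplicator exploits exactly this (splitting $L$ into two labeled copies with first-round picks at the $4$th and $5$th levels of different branches, then answering Spoiler's moves in the sub-trees by a finite color-coded case analysis). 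Your proposal gestures at the needed padding (``spare descending chains of graduated lengths'') but never fixes concrete trees, and it explicitly defers the case analysis (``one must verify that depth $8$ is the break-even point''). Since the invariant you do state points the wrong way on the one instance that matters, and the verification that would rescue it is left undone, the proposal is a plan rather than a proof; to complete it you must specify the branched trees and carry out the round-by-round analysis, as the paper does in Figures~\ref{fig:19_vs_8_christmas}--\ref{fig:4_vs_3_christmas}.
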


\begin{proof}
We first show that Duplicator can win a $4$-round multi-structural game on the pair of rooted trees of depths $9$ and $10$ depicted in Figure \ref{fig:10_vs_9_trees}. %Let us begin with the base case of $T_x(10)$ \vs  $T_x(9)$. The argument for $T_x(10+)$ \vs $T_x(\leq 9)$ is essentially equivalent. As a first step, Duplicator picks rooted trees of depth $10$ and $9$ as depicted in Figure \ref{fig:10_vs_9_trees}.
\begin{figure} [h]
\centerline{\scalebox{0.40}{\includegraphics{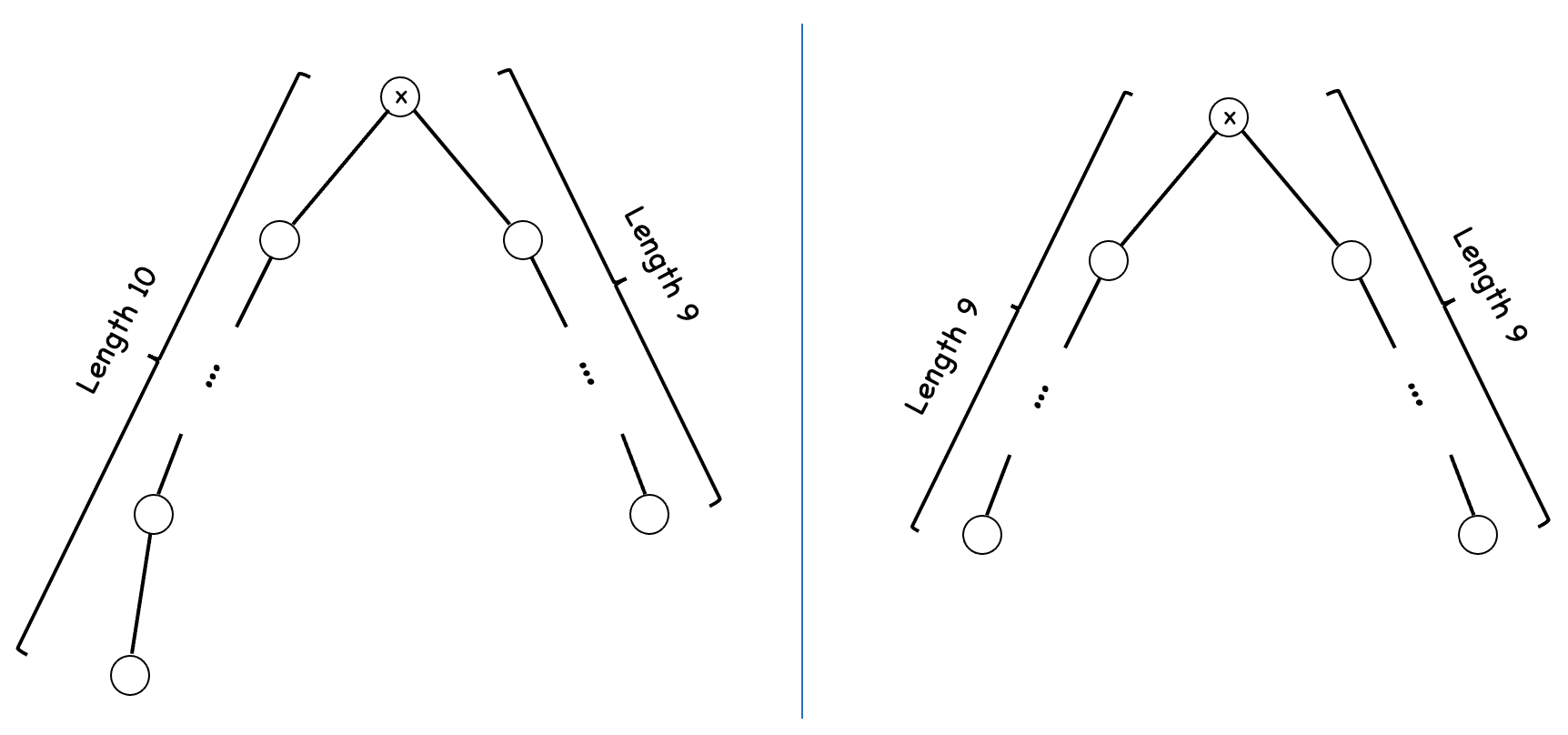}}}
\caption{The case of $T_x(10)$ (left-hand side) \vs $T_x(9)$ (right-hand side). The rooted tree, $T_x(10)$, has two branches, one of length $10$, the other of length $9$. The rooted tree, $T_x(9)$, has two branches each of length $9$.}
\label{fig:10_vs_9_trees}
\end{figure}
If Spoiler plays his first move on $L$, any move choice is mirrored with a symmetrical move on the length-$9$ branch of $B$ and Spoiler's 1st move is essentially wasted. Hence, Spoiler's best move is on the length-$10$ branch of $B$. It is then easy to see that the remainder of the game can be assumed to be played completely on the length-$10$ branch of $B$ and one of the length-$9$ branches of $L$, so that in effect we are playing a $10$ \vs $9$ linear order game where Spoiler plays first on $B$.

Let $B$ now stand for a linear order of size $10$ (i.e., the left branch of $T_x(10)$) and $L$ stand for a linear order of size $9$ (i.e., the left branch of $T_x(9)$). %Within such a game note that if Spoiler starts with an end move on $B$, playing an identical end move on $L$ reduced to a Duplicator-winnable $9$ \vs $8$ game on linear orders. Otherwise, wherever Spoiler plays on $B$, Duplicator will respond by making a second copy of $L$ and then making a move that matches Spoiler's short side on one of the boards and matches Spoiler's long side on the other board. For example, see Figure \ref{fig:10_vs_9_los}.
%For the rest of this game argument we will make Duplicator's job just a bit harder by assuming we are playing a multi-structural game with atoms (see section 6 of \cite{Fagin21} for a description of these games).  
If Spoiler plays on $B$ leaving a short side of $3$ or less then Duplicator can match the short side play on a single copy of $L$ and reach a position with long sides of size $6+$ \vs $5+$ and so 
%by the Reduction Lemma of \cite[Lemma 6.7]{Fagin21} and the fact that $g'(3) = 5$, 
reach a position that is easily seen to be Duplicator-winning by direct play-out. %or allowing Spoiler the slightly greater power of playing ``atoms'' \cite[Section 6]{Fagin21} and observing that Duplicator necessarily wins such an l.o.(6+) \vs l.o.(5+) game. 
Thus, WLOG, we may assume Spoiler plays B5, in which case Duplicator will respond by playing on two copies of $L$, playing L5 on one and L4 on the other, as in Figure \ref{fig:10_vs_9_los},
\begin{figure} [h]
\centerline{\scalebox{0.40}{\includegraphics{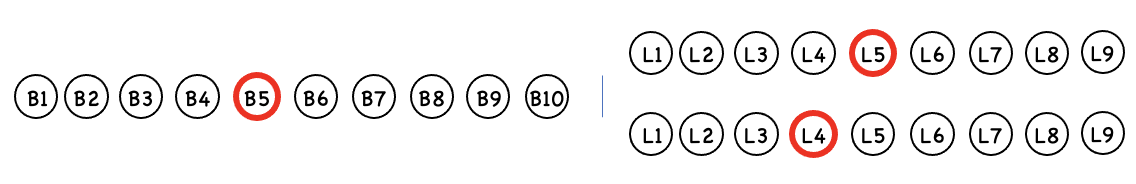}}}
\caption{The $T_x(10)$ \vs $T_x(9)$ game turns into an l.o.(10) \vs l.o.(9) game where Spoiler is constrained to play first on l.o.(10). A most challenging move is to play B5, to which Duplicator responds playing L4 on one copy of $L$ and L5 on another.}
\label{fig:10_vs_9_los}
\end{figure}
in one case matching the short side of the main $B$ branch, and in the other case matching the long side of the main $B$ branch. Although Duplicator can always play with the oblivious strategy, in this case playing just these two moves suffices and simplifies our analysis. If Spoiler now makes his 2nd round play on $B$, a move to the left or on top of B5 is matched with an identical move on the first copy of $L$, while a move to the right of B5 is matched with identical long-side move on a second copy of $L$.  In either case Duplicator easily survives another two rounds just on a single pair of structures. Suppose instead that on his 2nd move, Spoiler plays on $L$. He will clearly want to play on the non-matched sides of each copy of $L$, in other words, playing on L6-L9 on the top copy of $L$ and on L1-L3 on the bottom $L$ copy. (Otherwise he will just transpose into a case considered a moment ago, when Spoiler played his 2nd round move on $B$.) For this analysis Duplicator can ignore the bottom copy of $L$ since she just needs to maintain a single isomorphism. If Spoiler plays L6, Duplicator responds with B6 and clearly survives two more rounds, while a Spoiler move of L7 meets with a Duplicator response of B7, again surviving $2$ more rounds. Spoiler moves of L8 or L9 are met symmetrically with B9 or B10 respectively. Thus Duplicator survives the l.o.($10$) \vs l.o.($9$) game where Spoiler must play first on l.o.($10$) and hence Duplicator also survives the $T_x(10)$ \vs $T_x(9)$ game. Thus $t(4) \leq 9$.

It would be nice, at this point, if we could claim that $t(4) = 9$ by using our expression that established $t(3) \geq 4$ to say that there is an element $w$ with a rooted tree of size at least $4$ both above and below $w$, and in this way differentiate $T_x(9+)$ from $T_x(< 9)$. However, things are not that easy; the expression (\ref{g3_exists}), which established $t(3) \geq 4$, started with an existential quantifier, and the logical expression we would end up using to mimic the aforementioned English language expression would start with two existential quantifiers, and so we wouldn't be able to use it to capture the ``\textit{both} above and below'' part of the English language expression.

%Instead let's approach the problem via a game argument. The Spoiler will choose any path of length $9$ in $B$ and pick the middle element -- in other words an element at depth $5$ -- along this path for his 1st round move. Let's first suppose that Duplicator picks an element at depth less than $5$ from $L$. Then Spoiler can force play entirely above the selected elements so that it becomes a game on one linear order of size 
With the failure of this attempted logical expression in the back of our minds, consider the case of $T_x(9)$ \vs $T_x(8)$ where we pick trees in the same basic model as the $T_x(10)$ \vs $T_x(9)$ trees, but with a bit more nuance. See Figure \ref{fig:19_vs_8_christmas}. 
\begin{figure} [h]
\centerline{\scalebox{0.50}{\includegraphics{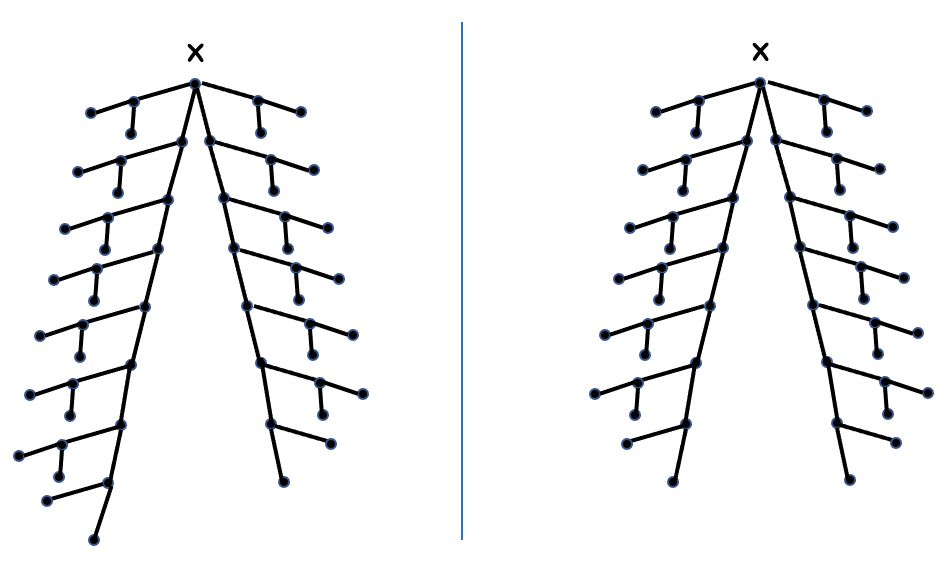}}}
\caption{More nuanced rooted trees in the $T_x(9)$ (left-hand side) \vs $T_x(8)$ (right-hand side) game.}
\label{fig:19_vs_8_christmas}
\end{figure}
The left hand main branch of the left hand tree is of length $9$ while all other main branches of the two trees are of length $8$. As earlier, it is wasteful for Spoiler to play his 1st round move on any of the main branches of length $8$ or their offshoots, and a most challenging move is to select the mid-point along the main $9$ branch in $B$. In essence Spoiler is trying to force the play of a $9$ vs.\ $8$ linear order game in which he is forced to play first on $L$ -- which indeed would be Spoiler-winning. However, as we shall see, the more nuanced trees in Figure \ref{fig:19_vs_8_christmas} provide just enough additional detail so that Duplicator can foil this strategy (because there is now not just a linear order below the selected 1st round nodes, but \textit{rooted trees}). In response, Duplicator will make a second copy of $L$ and play on the 4th element along one of the length $8$ branches in the first copy, call this copy $L_1$, and the 5th element along one of the length $8$ branches in the second copy, which we will refer to as $L_2$.  See Figure \ref{fig:9_vs_8_christmas_split}.
\begin{figure} [h]
\centerline{\scalebox{0.50}{\includegraphics{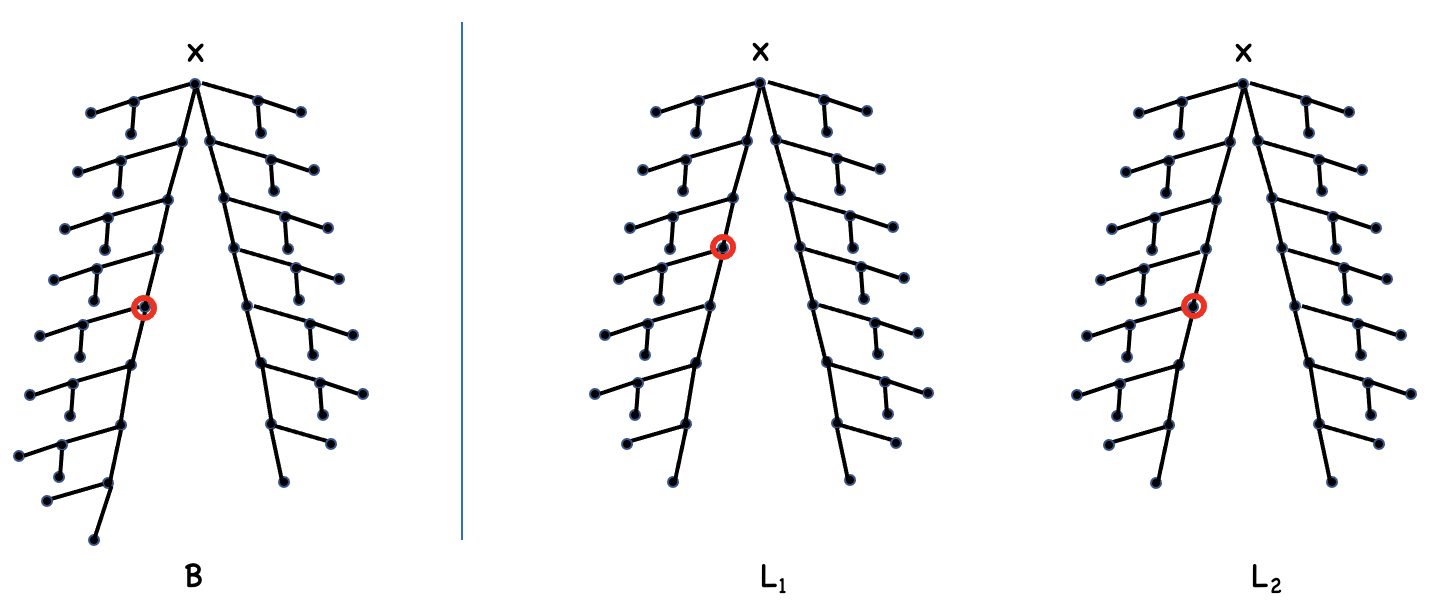}}}
\caption{The $T_x(9)$ \vs $T_x(8)$ game after Spoiler plays the midpoint along the long branch of $B$, while Duplicator makes a second copy of $L$ and plays on the 4th element along one of the length $8$ branches in the first copy, which we call $L_1$, and the 5th element along one of the length $8$ branches in the second copy, which we call $L_2$.}
\label{fig:9_vs_8_christmas_split}
\end{figure}
If Spoiler is to win in the sub-game of $B$ \vs $L_2$ he must be able to win a $3$-round game on the sub-tree below where the first moves were played on these two trees, with the addition of the ability to play on top of a 1st round move, if necessary.  %In fact, we will give Spoiler even more power, namely the ability to play atoms. 
Our aim will be to simply show that Spoiler cannot win in the remaining $3$ rounds in a game of just $B$ \vs $L_2$ by playing first on $L_2$. Suppose otherwise, and note that directing play to the 3-node sub-trees that are depicted to the left of the 1st round-selected nodes is not helpful to Spoiler so we may safely ignore those nodes. The critical sub-trees and Duplicator color-coded responses to the various possible Spoiler 2nd round moves on $L_2$ are given in Figure \ref{fig:4_vs_3_christmas}.
\begin{figure} [h]
\centerline{\scalebox{0.70}{\includegraphics{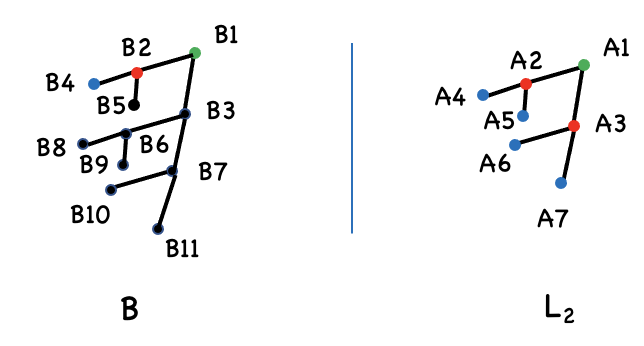}}}
\caption{The critical sub-trees and Duplicator color-coded responses to the various possible Spoiler 2nd round moves on $L_2$}
\label{fig:4_vs_3_christmas}
\end{figure}
(We will consider Spoiler 2nd round moves on $B$ in just a moment.) The selection of a node from $L_2$ by Spoiler is responded to by Duplicator by selecting the node of the same color in $B$. It is easy to see that Duplicator wins in all cases with one minor exception, namely when Spoiler plays, say, A3, Duplicator responds with B2 and now Spoiler plays either either B6 or B7, say B6.  In response to such a move, Duplicator must make a copy of $L_2$ and play A2 on one copy and a move such as A4 on the other copy. The move A2 safeguards against a follow-up of B8 or B9, while A4 safeguards against a follow-up of B3, B7, B10 or B11. It is thus evident that in order for Spoiler to win the $B$ \vs $L_2$ sub-game he must play his 2nd round move on $B$ and select and element somewhere below the element selected in the 1st round. But the only way Spoiler can win the $B$ \vs $L_1$ sub-game is to force the last three moves to be played in the linear orders above the 1st round moves, which now is not possible. Hence Duplicator can win the $T_x(9)$ \vs $T_x(8)$ game and so $t(4) \leq 8$ and the lemma is established.
\end{proof}

\begin{lemma} \label{lemma:t_of_4_lower_bound}
$t(4) \geq 8$.
\end{lemma}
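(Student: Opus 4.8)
The plan is to establish $t(4)\geq 8$ in the same style as the lower bounds $t(2)\geq 2$ and $t(3)\geq 4$: by exhibiting an \emph{explicit} first-order sentence $\Upsilon_4$ over $\{<,=\}$ with exactly four quantifiers that is true in every rooted tree of depth $\geq 8$ and false in every rooted tree of depth $<8$. Following the recursive recipe announced in the introduction, $\Upsilon_4$ should assert the existence of a pivot vertex $v$ together with a certificate that (a) the subtree rooted at $v$ has depth $\geq k$ and (b) $v$ has at least $8-k$ strictly larger elements, i.e.\ a chain of length $\geq 8-k$ lies above $v$; splicing the two halves at $v$ produces a chain of length $8$, and since a rooted tree has depth $\geq 8$ exactly when it contains a chain of length $8$, such a $v$ exists precisely in the depth-$\geq 8$ trees. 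For this to fit into four quantifiers, the three quantifiers after $\exists v$ must do double duty, certifying \emph{both} halves at once — the same economy by which the $g(3)$-sentence $\Phi_{3,\exists}$ certifies ``one element below and two elements above a pivot'' using only two quantifiers after $\exists x$. I therefore expect $\Upsilon_4$ to take the form $\exists v\,\forall y\,\exists z\,\exists w\;\chi(v,y,z,w)$, with $\chi$ quantifier-free and organised as a case split on how $y$ compares with $v$: the cases $y\leq v$ force, as $y$ ranges over the subtree of $v$, that some vertex below $v$ has a descendant and at least two proper ancestors inside that subtree; the cases $y>v$ force, as $y$ ranges over the ancestors of $v$, that some ancestor of $v$ has at least one ancestor of $v$ below it and at least two above it. The balance $k=4$ looks right, but the exact split is something to optimise when writing $\chi$ out.

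For soundness, given a rooted tree $T$ of depth $\geq 8$, I would fix a root-to-leaf path containing a chain $u_1<u_2<\cdots<u_8$ (the bottom eight nodes of a maximal path), set $v:=u_4$, and check that for each value of $y$ one can choose $z,w$ among the $u_i$ — and arbitrarily when $y$ lies off the path — so that $\chi$ holds: the ``below'' obligations witnessed with pivot $u_2$ (descendant $u_1$; proper ancestors $u_3,u_4$ inside the subtree of $v$) and the ``above'' obligations with pivot $u_6$ (lower ancestor $u_5$; higher ancestors $u_7,u_8$). This is a finite case check once $\chi$ is pinned down.

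For completeness — and this is where the real work lies — I would argue that any rooted tree $T$ with $T\models\Upsilon_4$, witnessed by a pivot $v$, has depth $\geq 8$: instantiating the universal clause at $y=v$, at the pivot vertices it produces, and at their order-neighbours, one extracts a chain of length $4$ through $v$ inside the subtree of $v$ together with a chain of length $\geq 4$ among the ancestors of $v$; concatenating these at $v$ yields a chain of length $\geq 8$, hence $T$ has depth $\geq 8$. Consequently $\Upsilon_4$ fails on every tree of depth $<8$, which gives $t(4)\geq 8$ by Theorem~\ref{thm:main1}.

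The main obstacle is designing $\chi$ so that this extraction genuinely goes through. Because $T$ branches, the player choosing $z,w$ for each $y$ has extra room, and one must rule out that incomparable branches let her satisfy $\chi$ in a tree of depth $\leq 7$; equivalently, $\chi$ must be tight enough to force a length-$8$ chain yet loose enough to be satisfiable using only the path structure that a depth-$8$ tree is guaranteed to have. The calibration must land at exactly $8$, consistent with the matching upper bound $t(4)\leq 8$ of Lemma~\ref{lemma:t_of_4_upper_bound}. I expect the bookkeeping in the case split of $\chi$ — making the single $\forall y\,\exists z\,\exists w$ block serve both the subtree-below and chain-above certificates without the two interfering — to be the bulk of the proof.
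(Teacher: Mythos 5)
Your overall plan coincides with the paper's: the paper also proves $t(4)\geq 8$ by exhibiting an explicit four-quantifier sentence with prenex signature $\exists\forall\exists\exists$ asserting a pivot $w$ with a linear order of length $4$ above it and a rooted tree of depth $3$ below it (the same $4+1+3$ split you arrive at with $v:=u_4$). But the lemma \emph{is} the sentence: for a lower bound of this kind the entire content is writing down $\chi$ and checking it, and your proposal explicitly defers exactly that (``the exact split is something to optimise,'' ``the bulk of the proof''). As it stands there is no proof, only a correct plan. For the record, the paper's matrix is
$(x > w \rightarrow x < y < z \vee w < y < z < x) \land (x < w \rightarrow w > y > x \vee x > y > z) \land (x = w \rightarrow y < w \land w < z)$,
i.e.\ the linear-order sentence $\Phi_{3,\forall}$ of \eqref{g3_forall} relativized above $w$, conjoined with the depth-$3$ tree sentence $\forall x\exists y\exists z(x<y \vee y<z<x)$ relativized below $w$, plus an $x=w$ clause to guarantee that both regions are nonempty (a detail your sketch omits but which is needed, since the two implications are vacuously true on a single point).

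Two of the difficulties you flag dissolve once the sentence is written this way, and it is worth seeing why. First, your worry that ``incomparable branches'' might let a shallow tree satisfy the sentence does not arise in the upper half: in a rooted tree the set $\{x : x > w\}$ of ancestors of $w$ is automatically a chain, so the relativized $\Phi_{3,\forall}$ forces a genuine linear order of size $4$ there, exactly as in the linear-order analysis. The only place branching matters is below $w$, and there one needs only depth $3$, which the disjunction $w>y>x \vee x>y>z$ delivers: applying it to a minimal $x<w$ rules out the second disjunct and yields $x<y<w$, and one more application (to $y$) produces a chain of three elements below $w$ in either case. Second, your concern about making one $\forall\exists\exists$ block ``do double duty'' for both halves is resolved not by a clever interleaving but by the trichotomy on how the universally quantified variable compares with $w$ — the two certificates live in disjoint regions of the order and never interact. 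I would encourage you to finish the argument by writing out $\chi$ and carrying through the soundness and completeness checks you outline; until then the proof is incomplete.
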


\begin{proof} The following sentence,  with $4$ quantifiers, distinguishes rooted trees of depth $8$ or greater from those of depth less than $8$:
\begin{align} 
T_4 = \exists w \forall x \exists y \exists z(& \label{T_4} \\
	&x > w \rightarrow x < y < z \vee w < y < z < x \label{cond:x_greater_than_w} \\
	&x < w \rightarrow w > y > x \vee x > y > z \label{cond:x_less_than_w} \\
	&x = w \rightarrow y < w \land w < z).\label{cond:x_equal_w}
\end{align}
This sentence says that there exists an element $w$ with a linear order of length $4$ above it, and a rooted tree of depth $3$ below it. The condition attached to the equality condition $x = w$ is also important, and we explain that in a moment too. First, the condition (\ref{cond:x_greater_than_w}) is the analog of equation (\ref{g3_forall}), for $\Phi_{3,\forall}$, described earlier, but relativized to say that there is a linear order of length at least $4$ ``above'' our chosen element $w$. The condition (\ref{cond:x_less_than_w}) says that there is a tree of depth at least $3$ below $w$ by virtue of saying that for every element $x$ below $w$, there is either one element above $x$ and below $w$, or else that there are two additional elements below $x$, one, call it $y$, which is below $w$ and another, call it $z$, which is  below $y$. With just the $x > w$ and $x < w$ implications, we are not guaranteed that there are actually any elements meeting the $x > w$ or $x < w$ conditions. The $x = w$ implication guarantees that there actually are elements meeting both of these conditions. The lemma follows.
\end{proof}

%The following is the immediate consequence Lemmas \ref{lemma:t_of_4_lower_bound} and \ref{lemma:t_of_4_upper_bound}.

\begin{corollary} \label{corollary:t_of_4}
$t(4) = 8$.
\end{corollary}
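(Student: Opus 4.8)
The statement $t(4)=8$ is exactly the conjunction of the two preceding lemmas, so the plan is simply to combine them: Lemma~\ref{lemma:t_of_4_upper_bound} gives $t(4)\leq 8$ and Lemma~\ref{lemma:t_of_4_lower_bound} gives $t(4)\geq 8$, hence equality. Thus the corollary itself requires no new argument; all the work sits in establishing the matching bounds, and I would organize the write-up so that the corollary is a one-line deduction pointing back to those two lemmas (together with the Equivalence Theorem for M-S games, Theorem~\ref{thm:main1}, which is what licenses translating between ``$4$ quantifiers suffice'' and ``Spoiler wins the $4$-round game'').

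\textbf{Lower bound.} For $t(4)\geq 8$ I would exhibit an explicit $4$-quantifier first-order sentence that is true on every rooted tree of depth at least $8$ and false on every rooted tree of depth less than $8$. The natural template, following the pattern used for $t(3)\geq 4$ in Lemma~\ref{lemma:t_of_3}, is $\exists w\,\forall x\,\exists y\,\exists z(\cdots)$: the leading $\exists w$ pins down a branch vertex, the $\forall x\,\exists y\,\exists z$ block is used \emph{twice}, relativized to $x>w$ and to $x<w$, to encode simultaneously a linear order of size $4$ above $w$ (the $\Phi_{3,\forall}$-style condition of equation~(\ref{g3_forall})) and a rooted tree of depth $3$ below $w$, with the $x=w$ clause forcing both regions to be nonempty. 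Since $4 + (3+1) = 8$ modulo the overlap at $w$, this yields depth $8$ on the dot; verifying that the sentence holds exactly on $T_x(8+)$ and fails on $T_x(<8)$ is a routine case check on where a deepest leaf's ancestor chain lies.

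\textbf{Upper bound.} For $t(4)\leq 8$ I would produce two rooted trees, one of depth $9$ and one of depth $8$, on which Duplicator wins the $4$-round M-S game; by Theorem~\ref{thm:main1} no $4$-quantifier sentence separates depth $\geq 9$ from depth $<9$, so $t(4)\leq 8$. I expect \textbf{this to be the main obstacle}: the trees must be chosen with enough branching structure that Spoiler cannot simply collapse the game to a $9$ vs.\ $8$ linear-order game (which Theorem~\ref{thm:g_for_game_play} and $g(4)=10$ would let Spoiler win from the wrong side). The right construction is a ``forest of branches'' where most main branches have length $8$ and one has length $9$, so that a first Spoiler move on any length-$8$ branch is mirrored and wasted, forcing Spoiler onto the unique long branch; Duplicator then splits into two copies and answers the chosen vertex at two different heights (matching the short side on one copy and the long side on the other), reducing the analysis to a bounded, finite case-by-case play-out of the residual $3$-round games (both $B$ vs.\ $L_1$ and $B$ vs.\ $L_2$), which one checks Spoiler cannot win. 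The delicate point is bookkeeping: tracking the finitely many Spoiler continuations and exhibiting an explicit Duplicator response (possibly splitting again) in each, so the obstacle is combinatorial thoroughness rather than a single clever idea. Once both lemmas are in hand, the corollary is immediate.
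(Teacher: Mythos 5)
Your proposal is correct and follows the paper exactly: the corollary is just the conjunction of Lemma~\ref{lemma:t_of_4_upper_bound} and Lemma~\ref{lemma:t_of_4_lower_bound}, and your sketches of both lemmas (the $\exists w\,\forall x\,\exists y\,\exists z$ sentence encoding a size-$4$ linear order above $w$ and a depth-$3$ tree below, and the depth-$9$ vs.\ depth-$8$ branching trees with Duplicator splitting into two copies) match the paper's arguments. No issues.
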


\subsection{Establishing $t(r)$ -- Generic Case}

In the proof of Theorem \ref{thm:g} \cite{Fagin21}, the authors provide explicit sentences that distinguish linear orders of size $g(r)$ or greater from those of size less than $g(r)$. From the proof of their Theorem 1.6, it can be seen that the distinguishing sentences $\Phi_r$, for $r > 4$ take the form:
\begin{eqnarray*}
	\Phi_r = \begin{cases} \exists x_1 \forall x_2 \cdot\cdot\cdot \forall x_{r-1} \exists x_r \phi_r ~~~~~\textrm{for }r\textrm{ odd,} \\
	\forall x_1 \exists x_2 \cdot\cdot\cdot \forall x_{r-1} \exists x_r \phi_r~~~~~\textrm{for }r\textrm{ even,}\end{cases}
\end{eqnarray*}
where $\phi_r$ is quantifier-free. For odd $r$, the formula $\Phi_r$ says that there exists a point $x_1$, with a linear order of size at least $\lfloor\frac{r}{2}\rfloor$ to both sides of $x_1$. For even $r$, the formula $\Phi_r$ says that for all $x_1$, there exists a linear order of at least size $\frac{r}{2}$ to one side or the other of $x_1$. 

%We define analogous expressions:
%\begin{equation}
%%\Upsilon_1 =& \exists x_1(x_1 = x_1), \\
%\Upsilon_r = \exists x_0 \Phi_{r-1} \land \bigwedge_{i=1}^{r-1} x_0 < x_i, \textrm{for }r  > 1.
%\end{equation}
%$\Upsilon_r$ uses a total of $r$ quantifiers and says that there is an $x_0$ (a deepest leaf) with a linear order of size $g(r-1)$ above it, and therefore distinguishes rooted trees of depth $g(r-1) + 1$ or greater from those of depth $g(r-1)$ or less for $r > 1$. Thus we have the following:
%
%\begin{lemma}
%For $r > 1$,
%\begin{equation*}
%t(r) \geq g(r-1) + 1.% ~~\textrm{for }r > 1.
%\end{equation*}
%\end{lemma}
%
%\noindent This lemma provides the bounds $t(2) \geq 2, t(3) \geq 3, t(4) \geq 5$. By virtue of Lemma \ref{lemma:t_of_3} and Corollary \ref{cor:t_of_4}, the lower bounds for $t(3)$ and $t(4)$ are \textit{not} tight. 

Let us denote by $t_\forall(r)$ the maximum number $k$ such that rooted trees of depth $k$ and above can be distinguished from rooted trees of depth less than $k$ using prenex formulas with $r$ quantifiers beginning with a universal quantifier. Equivalently,  $t_\forall(r) = k$ is the largest depth of a rooted tree such that Spoiler has a winning strategy on  $r$-round M-S games played on rooted trees of depth $k$ or greater versus those of depth less than $k$ when his first move is constrained to be on the tree of lesser depth. Analogously, when considering linear orders, let $g_\forall(r)$ and $g_\exists(r)$ denote, respectively, the maximum number $k$ such that linear orders of size $k$ and above can be distinguished from linear orders of size less than $k$ using prenex formulas with $r$ quantifiers beginning, respectively, with a universal or existential quantifier. 

%Note that formulas that begin with universal quantifiers can never distinguish a structure from an empty structure. The corresponding game would result in no possible starting move for the Spoiler. This caveat is implicit in all that follows.

\begin{lemma} \label{t_forall_recurrence}
For $r > 1$, one has $t_\forall(r) = t(r-1) + 1$.
\end{lemma}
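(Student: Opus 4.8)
The plan is to prove the two inequalities $t_\forall(r)\le t(r-1)+1$ and $t_\forall(r)\ge t(r-1)+1$ separately: the first via a Duplicator strategy in a multi‑structural game, the second by exhibiting a $\forall$-leading sentence (equivalently, a Spoiler strategy). Throughout I would lean on the elementary observation that a rooted tree has depth at least $d+1$ if and only if it contains a non‑root vertex $v$ whose subtree (of $v$ together with everything below it) has depth at least $d$, and, dually, that its depth is at most $d$ if and only if every vertex has at most $d-1$ proper ancestors. (Here $t(r-1)$ is understood to make sense, i.e.\ $r$ large enough — or $t(1):=1$ by the convention that one quantifier separates nonempty from empty.)

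For the lower bound, fix an $(r-1)$-quantifier sentence $\chi$ witnessing $t(r-1)$, true on rooted trees of depth $\ge t(r-1)$ and false on those of smaller depth, and put it in prenex form (which does not increase the quantifier count), so that its leading quantifier is whichever of $\exists,\forall$ attains $t(r-1)=\max(t_\exists(r-1),t_\forall(r-1))$. I would then take the $\forall$-leading sentence $\sigma:=\forall x_1\bigl((\exists y\,(y>x_1))\vee \chi^{\downarrow}(x_1)\bigr)$, where $\chi^{\downarrow}(x_1)$ says that some proper descendant $w$ of $x_1$ roots a subtree of depth $\ge t(r-1)$, obtained by relativizing $\chi$ to $\{y:y\le w\}$. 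Correctness is immediate from the observation above: at $x_1$ equal to the root the first disjunct fails and $\chi^{\downarrow}(x_1)$ expresses ``depth $\ge t(r-1)+1$'', while at every other $x_1$ the first disjunct already holds. The delicate point — and the step I expect to be the real work in this direction — is keeping the quantifier total at exactly $r$: the disjunct $\exists y\,(y>x_1)$ must be fused with the existential witness ``$\exists w$'' of $\chi^{\downarrow}$, and that single existential must in turn be absorbed into the leading quantifier of $\chi$; this last fusion is where the specific prenex shape of $\chi$ is used and forces a case split on whether $\chi$ is $\exists$-leading or $\forall$-leading. Observation~\ref{obs:t_g_bound} together with Theorem~\ref{thm:g} is then invoked to check that the linear‑order subformulas relativized to ancestor chains stay within budget.

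For the upper bound I would show, via Theorem~\ref{thm:main1}, that Duplicator wins the $r$-round multi‑structural game in which $\mathcal A$ is the family of rooted trees of depth $\ge t(r-1)+2$, $\mathcal B$ is the family of rooted trees of depth $\le t(r-1)+1$, and Spoiler's first move is forced onto $\mathcal B$. When Spoiler marks a vertex $u$ in a shallow tree, $u$ splits it into the chain of its $a$ ancestors (including $u$) and a subtree below it of some depth $b$, with $a+b-1\le t(r-1)+1$. Duplicator answers, inside every deep tree, by marking the vertex $v$ sitting at height $a$ on a longest root‑to‑leaf path; then $v$'s ancestor chain is isomorphic to $u$'s and $v$'s subtree has depth $\ge b+1$. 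Over the remaining $r-1$ rounds, any Spoiler move inside the (isomorphic) ancestor chains is copied verbatim, and any move descending below the marked vertices is answered using Duplicator's winning strategy in the $(r-1)$-round game guaranteed by the definition of $t(r-1)$, together with the ``append a new root'' monotonicity of Duplicator strategies. I expect the fiddly part here to be the bookkeeping when Spoiler interleaves play above and below the marked vertices, plus the usual multi‑structural subtlety that one should argue with the oblivious Duplicator strategy on the whole families rather than on single structures; as in the analysis behind Corollary~\ref{corollary:t_of_4}, it suffices to carry this out on the canonical extremal trees having one branch of the larger length and all other branches of the shorter length. Combining the two inequalities gives $t_\forall(r)=t(r-1)+1$.
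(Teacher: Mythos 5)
Your two-directional plan diverges from the paper's proof, which establishes both inequalities simultaneously by a single game analysis on the specific pair of trees from Figure \ref{fig:19_vs_8_christmas} (a deep tree with one branch of length $k+1$ and all remaining branches of length $k$, against a shallow tree all of whose branches have length $k$): there, every non-root first move of Spoiler on the shallow tree is mirrored verbatim on a length-$k$ branch of the deep tree and is therefore wasted, so only the root move matters, and that move reduces the game to the unconstrained $(r-1)$-round game on the subtrees below the two roots. Measured against that, your upper-bound argument has a genuine gap. Answering Spoiler's vertex $u$ (with $a$ ancestors and subtree depth $b$) by the vertex $v$ at depth $a$ on a \emph{longest} root-to-leaf path of the deep tree always leaves a subtree of depth at least $b+1$ below $v$ against depth exactly $b$ below $u$, and the appeal to ``Duplicator's winning strategy in the $(r-1)$-round game guaranteed by the definition of $t(r-1)$'' is only legitimate when the deeper of those two subtrees has depth exceeding $t(r-1)$, i.e.\ when $b\ge t(r-1)$, which forces $a\le 2$. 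For any lower first move the strategy loses outright: take $r=4$, a shallow tree of depth $5$, and let Spoiler play a deepest leaf, so $a=5$ and $b=1$; your $v$ then lies at depth $5$ on a path of length at least $6$ and hence has a child, Spoiler plays that child in round $2$, and Duplicator has nothing below $u$ to answer with, killing every pair. The repair is exactly the paper's: respond at depth $a$ on a branch of the \emph{same} length as the branch through $u$ (which is why the deep tree must be chosen to contain such branches), so that non-root moves are mirrored exactly and the only case needing analysis is the root move.

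The lower bound is also incomplete as written, as you yourself acknowledge. After fusing $\exists y$ with the witness $\exists w$ of $\chi^{\downarrow}$ you still sit at $r+1$ quantifiers, and the remaining absorption ``into the leading quantifier of $\chi$'' does not go through: $w$ is the root of the deep subtree, while the leading existential witness of $\chi$ (e.g.\ the parent of a deepest leaf in $\Phi_{3,\exists}$) is a different element, so the two quantifiers cannot simply be identified; and if you instead relativize $\chi$ to the strict down-set of $x_1$ so as to avoid quantifying $w$ at all, you are evaluating $\chi$ on a forest, where its correctness is not supplied by the definition of $t(r-1)$. The paper sidesteps both issues by running the lower bound through the game as well: Spoiler, forced to move first on the shallow trees, plays each root, which (after disposing of Duplicator's non-root responses) again reduces to the unconstrained $(r-1)$-round game one level down.
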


\begin{proof} Given an $r$-round multi-structural game played on rooted trees of depth $k$ and $k+1$, we can choose two rooted trees, identical to the two trees in Figure \ref{fig:19_vs_8_christmas} of Appendix \ref{app:t_of_4}, but with main branches of lengths $k+1$ and $k$ rather than $9$ and $8$. %The tree of depth $k$ will have two identical main branches, each of this same depth. On the other hand, the tree of depth $k+1$ will have a main branch of depth $k+1$ and the other of depth $k$ that is an exact copy of the main branches on the smaller tree. 
Note that a Spoiler 1st move on the smaller tree is completely wasted unless the move chosen is the top node. Choosing any other node on the smaller tree can be exactly mirrored by playing the analogous move on the right hand side of the big tree.  To establish a non-isomorphism, Spoiler must force play to the left hand side of the deeper tree, after which, play on the right hand side would be of no consequence. In response to a top node 1st move, Duplicator will be forced to choose the top node from the larger tree. The problem then reduces to distinguishing a tree of depth $k-1$ (the sub-tree whose top node is just below the top node along the longest branch) vs. trees of depth $k-2$, but where Spoiler may now play anywhere. The lemma follows.
\end{proof}

%\input{t_of5}

%\input{tree_calcs}

%We establish the following dual relations via a simultaneous induction.

\begin{lemma} \label{weak_g_induction}
For $r \geq 2$, the following hold:
\begin{align}
g_\exists(2r) & = 2g_\forall(2r-1) + 1,  \\
g_\forall(2r+1) & = 2g_\exists(2r).
\end{align}
Further, there are expressions establishing the $g_\exists$ relations having prenex signatures $\exists\forall\cdot\cdot\cdot\exists\forall\exists\exists$ with $r-1$ iterations of the $\exists\forall$ pair and then a final $\exists\exists$, while there are expressions establishing the $g_\forall$ relations having prenex signatures $\forall\exists\cdot\cdot\cdot\forall\exists\exists$ with $r$ iterations of the $\forall\exists$ pair and then a final $\exists$.
\end{lemma}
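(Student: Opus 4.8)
The plan is to prove the two identities by a single induction on $r$, and in each step to handle the two directions separately: the ``$\geq$'' direction by exhibiting an explicit prenex sentence of the claimed signature, and the ``$\leq$'' direction by describing a Duplicator winning strategy in the associated M-S game. The starting point is the game reading of the two quantities, via Theorems~\ref{thm:main1} and~\ref{thm:g_for_game_play}: a prenex sentence with $r$ quantifiers whose leftmost quantifier is $\exists$ (resp.\ $\forall$), true on every linear order of size $\geq k$ and false on every linear order of size $<k$, exists if and only if Spoiler wins the $r$-round M-S game on the pair $(\mathrm{l.o.}(k),\mathrm{l.o.}(k-1))$ with his first move forced onto the larger (resp.\ smaller) order. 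So for the lower bounds I must produce such sentences, and for the upper bounds I must show Duplicator survives the corresponding ``one larger'' games. The base case $r=2$ is the small-case analysis: $g_\forall(3)=4$ (lower bound by $\Phi_{3,\forall}$ of \eqref{g3_forall}, upper bound by $g_\forall(3)\le g(3)=4$), whence $g_\exists(4)=9$ (upper bound is exactly the linear-order claim inside the proof of Lemma~\ref{lemma:t_of_4_upper_bound}) and $g_\forall(5)=18$.

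For the lower bounds I use ``relativization to the sides of a new leading quantifier.'' Let $\Theta$ be the $g_\exists(2r)$-sentence furnished by the induction hypothesis, of signature $\exists x_1\,\forall x_2\cdots\exists x_{2r-1}\exists x_{2r}$, distinguishing $\mathrm{l.o.}(K)$ from $\mathrm{l.o.}(<K)$ with $K=g_\exists(2r)$. Prepending a universal variable $z$ and relativizing every quantifier of $\Theta$ to ``the side of $z$ on which $x_1$ lies'' (guarding each $x_i$ by $(x_1<z\wedge x_i<z)\vee(x_1>z\wedge x_i>z)$) produces a sentence of signature $\forall z\,\exists x_1\,\forall x_2\cdots$, i.e.\ of the $g_\forall(2r+1)$-signature $\forall\exists\cdots\forall\exists\exists$, asserting ``for every $z$ there is a copy of $\mathrm{l.o.}(K)$ lying entirely on one side of $z$''; by pigeonhole this holds on every linear order of size $\geq 2K$ and fails on every one of size $<2K$, giving $g_\forall(2r+1)\geq 2g_\exists(2r)$. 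For the other identity, start from the $g_\forall(2r-1)$-sentence $\Psi$ of signature $\forall x_2\,\exists x_3\cdots\exists x_{2r-1}\exists x_{2r}$, prepend an existential variable $w$, and let the shared $\forall x_2$ now perform a three-way case split: if $x_2>w$ run the body of $\Psi$ relativized above $w$, if $x_2<w$ run it relativized below $w$, and if $x_2=w$ assert that both sides are nonempty (using the two trailing existentials). This is precisely the linear-order analogue of the sentence $T_4$ of the proof of Lemma~\ref{lemma:t_of_4_lower_bound}; it has the $g_\exists(2r)$-signature $\exists\forall\cdots\exists\forall\exists\exists$ and distinguishes $\mathrm{l.o.}(2g_\forall(2r-1)+1)$ from $\mathrm{l.o.}(<2g_\forall(2r-1)+1)$, so $g_\exists(2r)\geq 2g_\forall(2r-1)+1$. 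The ``both sides nonempty'' conjunct — and correspondingly the terminal $\exists\exists$ rather than a strictly alternating tail — is exactly what supplies the ``$+1$''.

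For the upper bounds I describe Duplicator's strategy, generalizing the linear-order argument inside the proof of Lemma~\ref{lemma:t_of_4_upper_bound} and Figure~\ref{fig:10_vs_9_los}. To see $g_\exists(2r)\leq 2g_\forall(2r-1)+1$, play the $2r$-round game on $(\mathrm{l.o.}(2g_\forall(2r-1)+2),\,\mathrm{l.o.}(2g_\forall(2r-1)+1))$ with Spoiler's first move forced onto the larger order, say at position $p$. Duplicator makes two copies of the smaller order: in the first she places her point so as to match the part below $p$ exactly (leaving the part above short by one), and in the second so as to match the part above $p$ exactly. Since Duplicator needs only one surviving pair, Spoiler must defeat both copies; but whichever side he moves on next, one of the two copies has that side matched exactly, so Duplicator mirrors there and Spoiler's attack collapses onto a shorter subgame on the \emph{other} copy's mismatched side, which by the induction hypothesis is governed by $g_\forall$ at a smaller argument and therefore still survives with a round to spare; iterating the case analysis closes the bound. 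The identity $g_\forall(2r+1)\leq 2g_\exists(2r)$ is dual: Spoiler's first move is forced onto the \emph{smaller} order, and Duplicator's two copies of the larger order handle the two sides of that move. The delicate case — and the main obstacle — is Spoiler playing the exact midpoint of his forced order, for then \emph{neither} copy has a side matched for free before Spoiler's second move; one must verify that the two-copy response still suffices, and in particular that it genuinely converts a $g(2r-1)$-strength requirement (which Duplicator could not meet) into the strictly weaker $g_\forall(2r-1)$-strength requirement (which she can), tracking the off-by-one constants exactly as in the more nuanced trees of Section~\ref{app:t_of_4} (Figures~\ref{fig:19_vs_8_christmas}--\ref{fig:4_vs_3_christmas}).
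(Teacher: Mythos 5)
Your proposal follows essentially the same route as the paper: simultaneous induction from the base case $g_\forall(3)=4$, $g_\exists(4)=9$; lower bounds by prepending a quantifier and relativizing the inductively given sentence to the two sides of the new variable (with the equality/nonemptiness clause supplying the $+1$ and the stated prenex signatures); and upper bounds by the two-copy Duplicator response that forces Spoiler's residual attack into a subgame he must open on the side where the weaker of the two thresholds governs. One small correction: in the upper bound for $g_\forall(2r+1)\le 2g_\exists(2r)$ the residual subgame is the $2r$-round game on $\mathrm{l.o.}(g_\exists(2r)+1)$ versus $\mathrm{l.o.}(g_\exists(2r))$ with Spoiler forced to open on the larger order, so the relevant conversion is from $g(2r)$-strength to $g_\exists(2r)$-strength, not from $g(2r-1)$ to $g_\forall(2r-1)$ --- those indices belong to the other identity, where the residual game is the $(2r-1)$-round game on $\mathrm{l.o.}(g_\forall(2r-1)+1)$ versus $\mathrm{l.o.}(g_\forall(2r-1))$ with Spoiler forced to open on the smaller order.
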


\begin{proof}
We prove these relations by simultaneous induction, starting with the base case of $g_\exists(4) = 2g_\forall(3) + 1$. The inequality $g_\forall(3) \geq 4$ follows from (\ref{g3_forall})% in Appendix \ref{sec:first-t-values}
, while $g_\forall(3) \leq g(3) = 4$ so that $g_\forall(3) = 4$, and we must therefore establish that $g_\exists(4) = 9$. In what follows, we imagine our linear orders stretching from left (the smallest element) to right (the largest element). After an element is selected in a given round, we refer to the two remaining ``sides'' after playing that element as the elements that are either all smaller than (and hence to the left of) the selected element or all greater than (and hence to the right of) the selected element.  We certainly can write an expression stating that ``there exists an $x$ with a linear order of size $g_\forall(3)$ to either side'' so that $g_\exists(4) \geq 9$. Equality is established by showing that Duplicator wins the $4$-round l.o.(10) \vs l.o.(9) game with Spoiler constrained to play first on $B$ -- but this is precisely what is shown in the second paragraph of the proof of Lemma \ref{lemma:t_of_4_upper_bound}.% in Appendix \ref{app:t_of_4}.
%
%To establish equality we show that Duplicator wins the $4$-round $10$ vs. $9$ game with Spoiler constrained to play first on $B$. %We have examined this situation already in Figure \ref{fig:10_vs_9_forall}. 
%Refer to Figure \ref{fig:10_vs_9_los} from Appendix \ref{app:t_of_4}.
%Spoiler's most testing move is B5 (equivalently B6), to which Duplicator replies with the two moves L5 and L4.  Spoiler then is constrained to play next on $L$ and to try to win the l.o.(5) vs. l.o.(4) $3$-round game, which is easily seen to be impossible. Thus $g_\exists(4) = 9$ is established.

The general (inductive) $g_\exists(2r)$ argument is essentially the same argument as we have just given. Instead of arriving at an l.o.(5) vs. l.o.(4) $3$-round game we arrive at an $(r-1)$-round l.o.($g_\forall(2r-1) + 1$) vs. l.o.($g_\forall(2r-1)$) game where Spoiler must play first on $L$, which is Duplicator-winnable by the induction hypothesis.

For the $g_\forall(2r+1)$ argument we consider linear orders of sizes $2g_\exists(2r)$ and $2g_\exists(2r)+1$ where Spoiler must play first on $L$. Spoiler's best move is to play $L(g_\exists(2r))$ (or $L(g_\exists(2r)+1)$) since Duplicator will always respond by matching the shorter side of any play thereby forcing further play to the longer side and hence Spoiler is best off keeping the two sides as balanced as possible. Duplicator will then respond by playing $B(g_\exists(2r))$ on one copy of $B$ and $B(g_\exists(2r)+1)$ on a second copy. Spoiler then must play next on $B$ and try to win a $2r$-round l.o.($g_\exists(2r) + 1$) vs l.o.($g_\exists(2r)$) game. But this game is Duplicator-winnable by the induction hypothesis.

To establish the result about the prenex signatures, observe that the argument got bootstrapped from the expression (\ref{g3_forall}) for $g_\forall(3)$, which has prenex signature $\forall\exists\exists$. Putting together $g_\exists(2r)$ from the two copies of $g_\forall(2r-1)$ tacks an $\exists$ on the front: we have an expression $\exists x_1(\forall x_2 \exists x_3 \exists x_4 \phi \land \forall x_2 \exists x_3 \exists x_4 \phi')$ where $\phi$ and $\phi'$ are the analogs of (\ref{g3_forall}), saying that there is a linear order above and below $x_1$. We pull the sequence of quantifiers $\forall x_2 \exists x_3 \exists x_4$ out in front as follows:
\begin{eqnarray}
    \exists x_1\forall x_2 \exists x_3 \exists x_4( x_2 < x_1 &\rightarrow& x_2 < x_3 < x_4 < x_1 \vee x_3 < x_4 < x_2 < x_1~~\land \label{c4.1}\\ 
    x_2 > x_1 &\rightarrow& x_1 < x_2 < x_3 < x_4 \vee x_1 < x_3 < x_4 < x_2~~\land \label{c4.2}\\
    x_2 = x_1 &\rightarrow& x_3 > x_1 \land x_4 < x_1). \label{c4.3}
\end{eqnarray}
Condition (\ref{c4.1}) says that, assuming there is an element smaller than $x_1$, then there is a linear order of size $3$ smaller than $x_1$. Condition (\ref{c4.2}) says that, assuming there is an element larger than $x_1$, then there is a linear order of size $3$ larger than $x_1$. The equality condition (\ref{c4.3}) guarantees that there are elements both greater than and less than $x_1$. In an analogous fashion one may put together $g_\forall(2r+1)$ from two copies of $g_\exists(2r)$ by tacking on a $\forall$ in front. The lemma follows.
\end{proof}

%In particular, $g_\forall(5) = 18, g_\forall(7) =74, g_\forall(9) = 298$.

%\begin{lemma} \label{prenex-sginature-lemma}
%For $k \geq 1$, there are expressions establishing each of the following function values that have the given prenex signatures:\\
%
%\smallskip
%
%\begin{tabular}{|l|r|}
%  \hline 
% Function Value & Prenex Signature\\
% \hline 
%$g_\forall(2k+1)$ & $\forall\exists \cdot\cdot\cdot\forall\exists\exists$ \\
%$g_\forall(2k)$ & $\forall\exists \cdot\cdot\cdot\forall\exists$ \\
%$t_\forall(2k+1)$ & $\forall\exists \cdot\cdot\cdot\forall\exists\exists$ \\
%$t_\forall(2k)$ & $\forall\exists \cdot\cdot\cdot\forall\exists$ \\
%  \hline 
%\end{tabular}
%\end{lemma}
%
%\begin{proof}
%We have established the signature of $g_\forall(2k+1)$ in Lemma \ref{weak_g_induction}. The LICS paper showed that the signature asserted in the lemma for $g_\forall(2k)$ is the signature for $g(2k)$ for $k \geq 2$, thereby establishing that $g_\forall(2k) = g(2k)$ and hence the signature for  $g_\forall(2k)$ for $k \geq 2$. For $k = 1, g_\forall(2) = g(2)$ is established via the expression $\forall x\exists y(x < y \vee y < x)$.
%
%\end{proof}

\begin{theorem}\label{thm:t_rec}
For $r \geq 2$, the following holds:
\begin{equation} \label{t_1st}
t(r) = g_\forall(r-1) + t_\forall(r-1) + 1 = g_\forall(r-1) + t(r-2) + 2.
\end{equation}
If $r$ is odd, then $g_\forall(r-1) = g(r-1)$ so for $r$ odd we have:
\begin{equation} \label{t_2nd}
t(r) = g(r-1) + t_\forall(r-1) + 1 = g(r-1) + t(r-2) + 2.
\end{equation}

%\begin{eqnarray}
%t(2k) & = & g_\forall(2k-1) + t_\forall(2k-1) + 1 =  g_\forall(2k-1) + t(2k-2) + 2, \\
%t(2k+1) & = & g_\forall(2k) + t_\forall(2k) + 1 =  g(2k) + t(2k-1) + 2.
%\end{eqnarray}
\end{theorem}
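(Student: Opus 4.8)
The two displayed identities reduce to a single core statement. By Lemma~\ref{t_forall_recurrence} we have $t_\forall(r-1) = t(r-2)+1$, so $g_\forall(r-1) + t_\forall(r-1) + 1 = g_\forall(r-1) + t(r-2) + 2$, which yields the second equality in each line from the first. For the refinement when $r$ is odd, note that $r-1$ is then even, and the explicit $\forall$-leading optimal linear-order sentences of \cite{Fagin21} (equivalently, Lemma~\ref{weak_g_induction} together with Theorem~\ref{thm:g}) give $g_\forall(m) = g(m)$ for every even $m$, hence $g_\forall(r-1) = g(r-1)$. Thus the whole theorem follows once we prove $t(r) = g_\forall(r-1) + t_\forall(r-1) + 1$; I would establish the two inequalities separately, treating $r=2,3$ (Lemmas~\ref{lemma:t_of_2} and \ref{lemma:t_of_3}) as base cases and arguing for $r \ge 4$.

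For the lower bound $t(r) \ge g_\forall(r-1) + t_\forall(r-1) + 1$, I would write down an explicit $r$-quantifier sentence $\Theta_r$ generalizing the sentence $T_4$ of Lemma~\ref{lemma:t_of_4_lower_bound}. It has the form $\exists w\,\Psi(w)$, where $\Psi$ is a prenex $(r-1)$-quantifier formula with a $\forall$-leading signature, and (exactly as in $T_4$) its leading $\forall x$ is reused across a case split on $x > w$, $x < w$, $x = w$: on $x>w$ it installs the relativized optimal $\forall$-leading linear-order sentence $\Phi_{r-1}$ from the proof of Theorem~\ref{thm:g} in \cite{Fagin21} (whose prenex signature is recorded in Lemma~\ref{weak_g_induction}), forcing a chain of $g_\forall(r-1)$ vertices above $w$; on $x<w$ it installs the relativized optimal $\forall$-leading tree-depth sentence (one exists by the definition of $t_\forall$), forcing depth $t_\forall(r-1)$ in the subtree strictly below $w$; and on $x=w$ it installs a non-vacuity clause guaranteeing that the vertices demanded by the other two cases genuinely exist. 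Since each sub-pattern is $\forall$-leading, both can share the single leading $\forall x$ with no extra quantifier; counting $g_\forall(r-1)$ vertices above $w$, plus $w$, plus $t_\forall(r-1)$ levels below $w$ gives depth $g_\forall(r-1) + t_\forall(r-1) + 1$, and one verifies that $\Theta_r$ holds on every rooted tree of at least this depth and fails on every shallower one.

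For the upper bound $t(r) \le g_\forall(r-1) + t_\forall(r-1) + 1$, I would exhibit, for $z = g_\forall(r-1) + t_\forall(r-1) + 2$, a rooted tree of depth $z$ and a rooted tree of depth $z-1$ on which Duplicator wins the $r$-round M-S game; by Theorem~\ref{thm:main1} this rules out every distinguishing sentence with $r$ quantifiers. The pair generalizes the nuanced trees of Figure~\ref{fig:19_vs_8_christmas}: a long main branch carrying the extra depth, but with branching subtrees hung below the potential split points, so that there are rooted trees (not merely linear orders) below the first-round nodes. A Spoiler first move on the shallower tree is wasted (mirrored onto a length-matched branch of the deeper tree), so Spoiler's best first move is some vertex $v^*$ on the long branch; Duplicator answers with two copies of the shallower tree, playing at two depths along a length-matched branch so as to neutralize both the short side and the long side of $v^*$, just as in Lemma~\ref{lemma:t_of_4_upper_bound}. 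The remaining $r-1$ rounds are then forced either into the linear orders above $v^*$, where Spoiler must play first on a chain of size $g_\forall(r-1)$ against one of size $g_\forall(r-1)+1$ (Duplicator-winning by the definition of $g_\forall$ and Theorem~\ref{thm:g_for_game_play}), or into the subtrees below $v^*$, where Spoiler must play first on a tree of depth $t_\forall(r-1)$ against one of depth $t_\forall(r-1)+1$ (Duplicator-winning by the definition of $t_\forall$ and Lemma~\ref{t_forall_recurrence}), while the added branching below the split points forbids Spoiler from profitably dividing his budget between the two regions. I would make this precise by induction on $r$, with Lemmas~\ref{lemma:t_of_2}, \ref{lemma:t_of_3}, and \ref{lemma:t_of_4_upper_bound} as base cases.

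The main obstacle is the upper bound: designing the nuanced trees so that simultaneously (i) Spoiler has no useful first move other than a split point of the long branch, (ii) Duplicator's two-copy response neutralizes both the short and the long side of $v^*$, and (iii) Spoiler cannot coordinate play partly in the linear order above $v^*$ and partly in the subtree below it. Point (iii) is the delicate, case-heavy argument already seen in the $T_x(9)$ versus $T_x(8)$ analysis of Lemma~\ref{lemma:t_of_4_upper_bound}, now to be run uniformly in $r$ and threaded through the induction. A secondary subtlety in the lower bound is checking that the relativized $\forall$-leading linear-order and tree-depth sub-formulas really can be fused under one leading $\forall x$ without losing a quantifier and without the non-vacuity clause weakening either of them.
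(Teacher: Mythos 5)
Your proposal is correct and follows essentially the same route as the paper: the reduction to the single identity $t(r) = g_\forall(r-1) + t_\forall(r-1) + 1$ via Lemma~\ref{t_forall_recurrence}, the lower bound via an explicit relativized sentence generalizing $T_4$ that fuses the $\forall$-leading linear-order and tree-depth subformulas under one leading $\forall$ (with a non-vacuity clause), and the upper bound via Duplicator's two-copy response on the nuanced trees of Figure~\ref{fig:19_vs_8_christmas}, closed off by the pigeonhole observation that on a branch of length $g_\forall(r-1)+t_\forall(r-1)+2$ Spoiler must leave either a too-long chain above or a too-deep tree below his first move. Your flagged obstacle (iii) is exactly the point the paper resolves with that pigeonhole step.
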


\begin{proof}
The first-order sentence establishing the lower bound associated with (\ref{t_1st}), in other words where the left hand equality symbol is replaced by $\geq$, says that ``there exists an element $x$ with a linear order of size $g_\forall(r-1)$ above it, and a rooted tree of depth $t_\forall(r-1)$ below it.'' Lemma \ref{weak_g_induction} established the prenex signature $\forall\exists\cdot\cdot\cdot\forall\exists\exists$ for the expressions $g_\forall(r-1)$ in case $r-1$ is odd. If $r-1$ is even, the Fagin et al. paper \cite{Fagin21} established the prenex signature  $\forall\exists\cdot\cdot\cdot\forall\exists$ for $g(r-1)$, starting with $r-1 = 4$. It follows that $g_\forall(r-1) = g(r-1)$ for such values and so the formula establishing $g_\forall(r-1)$ has this same prenex signature for even values of $r-1$. In case $r-1 = 2$, the value $g(2)$ can be established via the sentence $\forall x \exists y(x < y \vee y < x)$, so that here again $g(2) = g_\forall(2)$ is established via a sentence of the same prenex signature. 

On the other hand, $t_\forall(3) \geq 3$ is established via the sentence $\forall x \exists y \exists z(x < y \vee y < z < x)$ with prenex signature $\forall\exists\exists$, while $t(3) \geq 4$ is established via the expression (\ref{g3_exists}), with prenex signature $\exists\forall\exists$, and hence, by the proof of Lemma \ref{t_forall_recurrence}, the lower bound for $t_\forall(4)$ is established via the prenex signature $\forall\exists\forall\exists$. The first-order sentence, described in English at the beginning of this proof, provides a means for turning an expression for $t_\forall(r)$ of a given prenex signature into an expression for $t(r+1)$ with the same prenex signature but with a leading $\exists$ added. By the proof of Lemma \ref{t_forall_recurrence}, $t_\forall(r+2)$ is then obtained by tacking another $\forall$ in front. Hence the expressions for $t_\forall(r)$ maintain consistent prenex signatures based on their parity, and $t_\forall(r)$ and $g_\forall(r)$ will inductively have identical prenex signatures as long as we can, simultaneously, inductively establish the theorem.

We thus define the expressions $\Phi_{r-1,\forall}$ for $g_\forall(r-1)$, and $\Upsilon_{r-1,\forall}$ for $t_\forall(r-1)$, inductively with the same prenex signatures, e.g.,
\begin{eqnarray}
\Phi_{r-1,\forall} & = & \forall x_{r-1} \exists x_{r-2} \cdot\cdot\cdot \exists x_1 \phi_{r-1,\forall} \\
\Upsilon_{r-1,\forall} & = & \forall x_{r-1} \exists x_{r-2} \cdot\cdot\cdot \exists x_1 \tau_{r-1,\forall}, 
\end{eqnarray}
where $\phi_{r-1,\forall}$ and $\tau_{r-1,\forall}$ are quantifier-free. In order to form the expression $\Upsilon_r$ for $t(r)$, we must relativize both  $\phi_{r-1,\forall}$ and $\tau_{r-1,\forall}$ so that for the new $x_r$ (``$x$'' in the English language sentence at the beginning of the proof),  $\phi_{r-1,\forall}$ applies for values of $x_{r-1}$ that are greater than $x_r$, while $\tau_{r-1,\forall}$ applies for values of $x_{r-1}$ that are less than $x_r$. Moreover, in the relativized expression for $\phi_{r-1,\forall}$ all variables $x_1,...,x_{r-2}$ must be constrained to be greater than $x_r$, while in the relativized expression for $\tau_{r-1,\forall}$ all variables $x_1,...,x_{r-2}$ must be constrained to be \textit{less} than $x_r$. Let us refer to these relativized versions of $\phi_{r-1,\forall}$ and $\tau_{r-1,\forall}$ as $\phi^{rel}_{r-1,\forall}$ and $\tau^{rel}_{r-1,\forall}$ respectively. With these relative expressions we are able to pull out all of the quantifiers and obtain the expression $\Upsilon_r$ for $t(r)$ as follows:
\begin{align*}
\Upsilon_r = \exists x_r\forall x_{r-1} \exists x_{r-2} \cdot\cdot\cdot \exists x_1(&x_{r-1} > x_r \rightarrow \phi^{rel}_{r-1,\forall}~~\land \\
													& x_{r-1} < x_r \rightarrow \tau^{rel}_{r-1,\forall}).
\end{align*}
This expression establishes that for $r \geq 2$, we have $t(r) \geq g_\forall(r-1) + t_\forall(r-1) + 1$.  In order to establish that $t(r) \leq g_\forall(r-1) + t_\forall(r-1) + 1$ we  show that Duplicator can win multi-structural games on rooted trees of depths $t(r)$ and $t(r+1)$ that are the analogs of the trees in Figure \ref{fig:19_vs_8_christmas}. To have a chance of winning an $r$-round game on such symmetric trees, Spoiler must force play to the longest branch of $B$, in other words, force play to the branch of $B$ of length  $g_\forall(r-1) + t_\forall(r-1) + 2$. %and a branch of $L$ that is of depth $g_\forall(k-1) + t_\forall(k-1) + 1$. 
The only Spoiler move on $L$ that would force such an outcome would be to select the very top node.  If this were an optimal play, then we would have $t(r) \leq t_\forall(r-1) + 1 \leq g_\forall(r-1) + t_\forall(r-1) + 1$. The only other way for Spoiler to force play onto the longest branch of $B$ is for him to play his 1st move directly on $B$. If Spoiler were then to leave a linear order of size at least $g_\forall(r-1) + 1$ above the played move, then Duplicator can make a copy of $L$ and on one copy play a move that leaves the identical tree below the played move to the tree left on $B$ and a linear order of size $g_\forall(r-1)$ above, and on the other copy leaves a liner order of size $g_\forall(r-1) + 1$ above and a tree of depth one less than that left on $B$. Spoiler would then be forced to play next on $L$ and would have to win a $g_\forall(r-1)$ \vs $g_\forall(r-1) + 1$ $(r-1)$-round game playing first on $L$, which is impossible by the definition of $g_\forall(r-1)$. On the other hand, if Spoiler leaves a tree of depth at least $t_\forall(r-1) + 1$ below, then Duplicator wins down there by the parallel argument incorporating the definition of $t_\forall(r-1)$. On a branch of length at least $g_\forall(r-1) + t_\forall(r-1) + 2$, leaving a linear order above of length at least $g_\forall(r-1) + 1$ or a tree of depth $t_\forall(r-1) + 1$ below is unavoidable. The upper bound on $t(r)$ is thus established and so, for $r \geq 2$, we have $t(r) = g_\forall(r-1) + t_\forall(r-1) + 1$.

The fact that $t(r) = g_\forall(r-1) + t_\forall(r-1) + 1 =  g_\forall(r-1) + t_\forall(r-2) + 2$ follows by Lemma \ref{t_forall_recurrence}. If $r$ is odd then $r-1$ is even and, as we have remarked earlier in this proof, then $g_\forall(r-1) = g(r-1)$. The theorem follows.
\end{proof}
\medskip

%We can combine the previous results to prove the following explicit expression for $t()$. The proof can be found in Appendix~\ref{subsec:t_explicit}.

\begin{theorem}\label{thm:t_explicit}
For all $r \geq 1$ we have 
%\begin{align*}
%    t(2r) = \frac{7\cdot4^r}{18} + \frac{4r}{3} - \frac{8}{9}, \\
%    t(2r+1) = \frac{8\cdot4^r}{9} + \frac{4r}{3} - \frac{8}{9}.
%\end{align*}
\begin{align*}
    t(2r) = \frac{7\cdot4^r}{18} + \frac{4r}{3} - \frac{8}{9},~~~~
    t(2r+1) = \frac{8\cdot4^r}{9} + \frac{4r}{3} - \frac{8}{9}.
\end{align*}
\end{theorem}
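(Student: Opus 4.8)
The plan is to read the two closed forms as solutions of the linear recurrences supplied by Theorem~\ref{thm:t_rec}, after first obtaining closed forms for the quantities $g_\forall$ and $g$ that occur on the right-hand sides of those recurrences.

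First I would establish an explicit formula for $g_\forall$ at odd arguments. Composing the two identities of Lemma~\ref{weak_g_induction} gives $g_\forall(2m+1) = 4\,g_\forall(2m-1) + 2$ for $m \geq 2$, with $g_\forall(3) = 4$ (as noted in the proof of that lemma). This is a first-order linear recurrence; its solution has homogeneous part proportional to $4^m$ and particular solution $-2/3$, and matching the initial value yields $g_\forall(2m+1) = \tfrac{7}{6}\cdot 4^{m} - \tfrac{2}{3}$. In the same way, Theorem~\ref{thm:g} gives $g(2s) = 2g(2s-1) = 4g(2s-2) + 2$; one checks that the separately-given value $g(4) = 10$ already satisfies this relation, so it holds for all $s \geq 2$ starting from $g(2) = 2$, and the same method yields $g(2s) = \tfrac{2}{3}\cdot 4^{s} - \tfrac{2}{3}$.

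Next I would substitute these into the recurrence $t(r) = g_\forall(r-1) + t(r-2) + 2$ from Theorem~\ref{thm:t_rec}, splitting on the parity of $r$. If $r = 2s$ with $s \geq 2$, then $r - 1 = 2s-1$ is odd, so $t(2s) = t(2s-2) + \tfrac{7}{6}\cdot 4^{s-1} + \tfrac{4}{3}$; telescoping down to the base value $t(2) = 2$ (Lemma~\ref{lemma:t_of_2}) and summing $\sum_{i=1}^{s-1} 4^i = \tfrac{4^s - 4}{3}$ gives $t(2s) = \tfrac{7\cdot 4^s}{18} + \tfrac{4s}{3} - \tfrac{8}{9}$. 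If $r = 2s+1$ with $s \geq 2$, then $r-1 = 2s$ is even, so Theorem~\ref{thm:t_rec} lets us replace $g_\forall(2s)$ by $g(2s)$, giving $t(2s+1) = t(2s-1) + \tfrac{2}{3}\cdot 4^{s} + \tfrac{4}{3}$; telescoping down to $t(3) = 4$ (Lemma~\ref{lemma:t_of_3}) and summing $\sum_{j=2}^{s} 4^j = \tfrac{4^{s+1} - 16}{3}$ gives $t(2s+1) = \tfrac{8\cdot 4^s}{9} + \tfrac{4s}{3} - \tfrac{8}{9}$. Renaming $s$ as $r$ produces the claimed identities, and since the two closed forms already return $t(2) = 2$ and $t(3) = 4$ at $r = 1$, no separate base case is needed.

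The step that needs the most care is the bookkeeping at the low end, where the values $g(4) = 10$, $t(2) = 2$, $t(3) = 4$, $t(4) = 8$ come from earlier lemmas rather than from the recurrences: I would verify explicitly that each auxiliary recurrence and the recurrence for $t$ is already valid at the first index at which it is invoked (for instance $g(4) = 4g(2) + 2$ and $t(4) = g_\forall(3) + t(2) + 2 = 8$), so that the telescoping sums start from the correct term. Beyond that, the argument is just a routine evaluation of geometric sums, and it is easy to double-check by comparing the formulas against the directly computed values $t(2),\dots,t(10) = 2, 4, 8, 16, 28, 60, 104, 232, 404$.
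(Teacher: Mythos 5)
Your proposal is correct and follows essentially the same route as the paper: it invokes the recurrence $t(r)=g_\forall(r-1)+t(r-2)+2$ from Theorem~\ref{thm:t_rec}, solves the auxiliary recurrence $g_\forall(2m+1)=4g_\forall(2m-1)+2$ (and the corresponding one for $g$ at even arguments) in closed form, and telescopes from the base values $t(2),t(3),t(4)$. The only differences are cosmetic — you write out the odd case and the low-index bookkeeping that the paper dismisses as "analogous," which is a welcome addition rather than a departure.
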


\begin{proof}
As established in Lemmas \ref{lemma:t_of_2}, \ref{lemma:t_of_3} and Corollary \ref{corollary:t_of_4} of Sections \ref{sec:first-t-values} and \ref{app:t_of_4}, we have $t(2)=2, t(3)=4$ and $t(4)=8$.

Let us consider the even case in the statement of the theorem first. By Theorem~\ref{thm:t_rec} we have the for all $r \geq 3$:
\begin{align*}
    t(2r) &= g_{\forall}(2r-1)+t(2r-2)+2\\
    &= t(4)+\sum\limits_{i=2}^{r-1} (g_{\forall}(2i+1)+2)
\end{align*}

It follows from Lemma \ref{weak_g_induction} that $g_\forall(3) = 4$ for $r \geq 2$,
\begin{eqnarray}
g_\forall(2r+1) & = & 2g_\exists(2r) \notag \\
& = &2(2g_\forall(2r-1)+1) \notag \\
& = & 4g_\forall(2r-1) + 2.
\end{eqnarray}

Solving this linear recurrence yields $g_\forall(2r+1) = \frac{7\cdot4^r}{6}-\frac{2}{3}$. Plugging this in and simplifying gives us that for all $r \geq 1$,
$$t(2r) = \frac{7\cdot4^r}{18} + \frac{4r}{3} - \frac{8}{9}.$$
An analogous argument for the odd case gives us, again for all $r \geq 1$,
$$t(2r+1) = \frac{8\cdot4^r}{9} + \frac{4r}{3} - \frac{8}{9}.$$
\end{proof}

\subsection{Comparison of the Growth Rates of the Functions $f,g$ and $t$} \label{subsec:fgt-table}

The following table compares the values of the functions $f,g$ and $t$ for $2 \leq r \leq 10$. Recall that $f(r)$ %(Theorem \ref{thm:f}) 
is the maximum value such that an expression of quantifier rank $r$ can distinguish linear orders of size $f(r)$ or greater from linear orders of size less than $f(r)$, while $g(r)$ 
%(Theorem \ref{thm:g}) 
is the maximum value such that an expression with $r$ quantifiers can distinguish linear orders of size $g(r)$ or greater from linear orders of size less than $g(r)$.

\begin{footnotesize}
\smallskip
\begin{tabular}{r|r|r|r}
 $r$&$f(r)$&$g(r)$&$t(r)$ \\
  \hline 
%  1 & 1 & 1 & 1  \\
  2 & 3 & 2 & 2 \\
  3 & 7 & 4 & 4\\
  4 & 15 & 10 & 8\\
  5 & 31 & 21 & 16\\
  6 & 63 & 42 & 28\\
  7 & 127 & 85 & 60\\
  8 & 255 & 170 & 104 \\
  9 & 511 & 341 & 232 \\
  10 & 1023 & 682 & 404
\end{tabular}
\end{footnotesize}
\smallskip

\section{s-t Connectivity} \label{sec:st-conn}

In this section we explore the number of quantifiers needed to express either directed or undirected $s$-$t$ connectivity (henceforth STCON) in FOL with the binary edge relation $E$, as a function of the number $n$ of edges in a shortest path between the distinguished nodes $s$ and $t$. %Since we will include $s$ and $t$ in this count, $n \geq 2$. 
STCON, also known as \textit{reachability} between labelled nodes $s$ and $t$, refers to the property of graphs that labelled nodes $s$ and $t$ are connected. STCON$(n)$ denotes the property that $s$ and $t$ are connected by a path of length at most $n$ edges.

In Appendix \ref{app:log_2_method}, we show how to describe STCON$(n)$ using $2\log_2(n) + O(1)$ quantifiers. %The sentence we give is a slight variation on the typical sentence used (\cite{Lib12}) to provide a $\log_2(n) + O(1)$ upper bound on the minimum quantifier rank needed to express STCON$(n)$. 
The following theorem generalizes that construction, to improve the number of quantifiers to $3\log_3(n) + O(1)$. A similar argument shows that $K\log_K(n) + O(1)$ quantifiers can be used for any positive integer $K$, although this quantity is minimized for $K=3$.

\begin{theorem} \label{thm:st-con-upper} STCON(n) can be expressed with $3 \log_3(n)+O(1)$ quantifiers.
\end{theorem}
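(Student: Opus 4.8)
The plan is to generalize the standard ``path-halving'' recursion that yields the $2\log_2(n)+O(1)$ bound (Appendix~\ref{app:log_2_method}) to a ``path-thirding'' recursion. Define a family of formulas $\psi_i(u,v)$ asserting that there is a directed path of length at most $3^i$ from $u$ to $v$; the target property STCON$(n)$ is then $\psi_{\lceil \log_3 n\rceil}(s,t)$ together with $O(1)$ extra quantifiers to name $s$ and $t$. The base case $\psi_0(u,v)$ says $u=v \vee E(u,v)$ and uses no quantifiers. For the inductive step, a path of length at most $3^{i+1}$ from $u$ to $v$ can be split into three consecutive subpaths, each of length at most $3^i$, by choosing two intermediate points $p,q$ (with the path running $u \leadsto p \leadsto q \leadsto v$). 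Naively this costs $3$ copies of $\psi_i$, but we collapse the three recursive subcalls into one using the standard trick: introduce two existential quantifiers for $p,q$, then one universally quantified pair of ``endpoint variables'' $(y,z)$ together with a case split forcing $(y,z)$ to range over $(u,p)$, $(p,q)$, and $(q,v)$, so that a single nested copy of $\psi_i(y,z)$ suffices. Thus each level of the recursion adds a constant number $c$ of quantifiers (two existential for $p,q$, a constant number for the universal endpoint-selector and the Boolean case analysis), giving $\mathrm{quants}(\psi_i) \le c\cdot i + O(1)$.

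The key accounting step is to pin down $c$. Reusing variable names across the recursion (as in Proposition~\ref{vars_vs_rank_thm}) keeps the variable count bounded, but what matters here is the number of \emph{quantifier occurrences} per level. Writing the endpoint-selector as $\forall y \forall z\,(\text{three implications})$, one of which recursively invokes $\psi_i(y,z)$, the per-level cost is: $\exists p\,\exists q$ (two quantifiers) plus the universal selector block. With care one can realize the selector with a single block of quantifiers of size $3$ per level on average, so that over $\log_3 n$ levels the dominant term is $3\log_3 n$; the remaining universal/auxiliary quantifiers and the naming of $s,t$ contribute only the additive $O(1)$. Then the numerical comparison $3\log_3 n = 3/\log_2 3 \cdot \log_2 n \approx 1.893\log_2 n$ is immediate, and the remark that $K\log_K n$ is minimized at $K=3$ follows by elementary calculus on $K/\ln K$ over the integers ($K=3$ beats $K=2$ and $K=4$).

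I would present the formula $\psi_i$ explicitly in prenex-like form, analogous to the sentences $T_4$ and $\Upsilon_r$ used earlier for rooted trees, making the implication-guarded case split over $(y,z)\in\{(u,p),(p,q),(q,v)\}$ fully explicit, and then verify by induction on $i$ that (i) $\psi_i(u,v)$ holds in a graph iff there is a $u$-$v$ path of length $\le 3^i$, and (ii) $\mathrm{quants}(\psi_i)\le 3i+O(1)$. The main obstacle is the bookkeeping in (ii): one must be scrupulous that pulling the quantifiers of the single recursive copy of $\psi_i$ out to the front of the case-split does not multiply them, and that the guard formulas (which mention $u,v,p,q,y,z$ but introduce no new quantifiers) genuinely add nothing to the quantifier count — this is exactly the place where the $3$ (rather than $2$) in the leading constant is either saved or lost, and where the comparison to the $2\log_2 n$ baseline has to be done honestly.
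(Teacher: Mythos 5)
Your high-level plan --- tripling the path at each level at a cost of three quantifiers per level, for a total of $3\log_3 n \approx 1.893\log_2 n$, with $K=3$ optimal among integers for $K\log_K n$ --- is the same as the paper's, and the final arithmetic is correct. But there is a genuine gap at precisely the point you flag as ``the key accounting step,'' and your proposed implementation does not close it. You realize the three-way split by existentially choosing two midpoints $p,q$ and then universally quantifying over a \emph{pair} of endpoint variables $(y,z)$, guarded to range over $(u,p)$, $(p,q)$, $(q,v)$, with a single recursive copy of $\psi_i(y,z)$. That selector costs \emph{two} universal quantifiers per level, so each level costs four quantifiers, not three, and the recursion yields only $4\log_3 n \approx 2.52\log_2 n$ --- worse than the $2\log_2 n$ baseline you are trying to beat. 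Your remark that ``with care one can realize the selector with a single block of quantifiers of size $3$ per level on average'' is exactly the missing idea, and it is never supplied.

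The paper's resolution is concrete: it uses a \emph{single} universal variable per level and obtains a ternary branch by comparing that variable against the two distinguished constants $s$ and $t$ (cases $x_3=s$, $x_3=t$, and $x_3\neq s\wedge x_3\neq t$; see (\ref{Tau5}) and (\ref{tau9})--(\ref{ineq_cond})). Correspondingly, it does not recurse top-down on a parametrized subformula $\psi_i(y,z)$: after $k$ levels the $k$ universal variables jointly select one of $3^k$ gaps, and the quantifier-free matrix explicitly lists all $3^k$ guarded consequents, each naming the correct triple of edges on the appropriate pair of already-placed nodes. The matrix grows to size $O(n)$, but the prefix grows by exactly $\forall\exists\exists$ per level, giving $3k-1$ quantifiers for path length $3^k$. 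If you want to keep your recursive presentation, you must either adopt this constant-comparison selector (which forces you to abandon the single shared copy of $\psi_i(y,z)$, since the three branches need different endpoint pairs) or find some other way to encode a ternary choice with one quantifier; as written, the $3$ in your leading constant is lost, not saved.
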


% !TEX root = Multi-structural Games ITCS.tex

\begin{proof}
%In this section %, since we will be considering expressions with different quantifier structures, 
We shall use $\Upsilon_i$ to denote sentences with $i$ quantifiers, and $\tau_j$ to denote quantifier-free expressions, where the subscript $j$ denotes the path length characterized by $\tau_j$.

We start with the following simple expression stating that $s$ and $t$ are connected and $d(s,t) \leq 3$, where $d(s,t)$ denotes the length of the shortest path from $s$ to $t$: 
\begin{equation}
\Upsilon_2 = \exists x_1 \exists x_2(\tau_3 \vee \tau_2 \vee \tau_1),
\end{equation}
where:
\begin{eqnarray}
	\tau_3 &=& E(s,x_1) \land E(x_1, x_2) \land E(x_2, t), \label{tau3} \\
	\tau_2 &=& E(s,x_1) \land E(x_1, t), \label{tau2} \\
	\tau_1 &=& E(s,t).  \label{tau1}
\end{eqnarray}
%The construction of sentences is now very similar to the construction in the prior section. 
We now iteratively add three quantifiers at each stage and slot two nodes between each of the previously established nodes, as in Figure \ref{fig:s-to-x1-to-x2-to-t}. 
\begin{figure} [ht]
\centerline{\scalebox{0.50}{\includegraphics{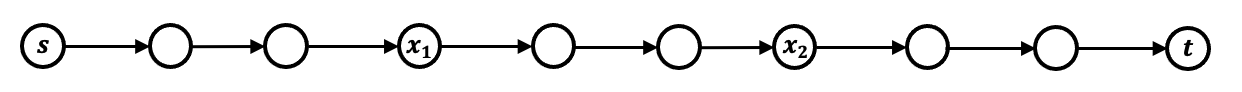}}}
\caption{An illustration of slotting two nodes between each of the pre-established nodes $s, x_1, x_2$ and $t$ in order to express a distance $9$, $s-t$ path, using $5$ quantifiers as in expressions (\ref{Tau5}) and (\ref{tau9}).}
\label{fig:s-to-x1-to-x2-to-t}
\end{figure}
%This can be done in much the same manner as for the case of $2\log_2(n)$ quantifiers (Appendix \ref{app:log_2_method}) but by utilizing one additional antecedent condition in the collection of implications.
%To illustrate, we can express that there is a path of distance at most $9$ from $s$ to $t$, using only $5$ quantifiers:
We express that there is a path of length at most $9$ from $s$ to $t$, using $5$ quantifiers as follows:
\begin{equation} \label{Tau5}
\Upsilon_5 = \exists x_1 \exists x_2 \forall x_3 \exists x_4 \exists x_5 (\tau_9 \vee \tau_8 \vee \cdot\cdot\cdot \vee \tau_1).
\end{equation}
In this case, we just show $\tau_9$. The simplifications required to get from $\tau_8$ down to $\tau_4$ are analogous to those for getting from $\tau_3$ down to $\tau_1$, but where we apply  (\ref{tau3}) -- (\ref{tau1}) separately to each of   (\ref{tau9}) -- (\ref{ineq_cond}).
\begin{align}
\tau_9 = \quad\quad & ((x_3 = s) \rightarrow E(s,x_4) \land E(x_4, x_5) \land E(x_5, x_1))~~\land \label{tau9}\\
				& ((x_3 = t) \rightarrow E(x_1,x_4) \land E(x_4, x_5) \land E(x_5, x_2))~~\land \label{tau9.1}\\
				& ((x_3 \neq s \land x_3 \neq t) \rightarrow E(x_2,x_4) \land E(x_4, x_5) \land E(x_5, t)).  \label{ineq_cond}
\end{align}

Using $8$ quantifiers, we can slot two new nodes between each node established in the prior step, as depicted in Figure \ref{fig:s-to-t-by-3-iter2}. 
\begin{figure} [ht]
\centerline{\scalebox{0.50}{\includegraphics{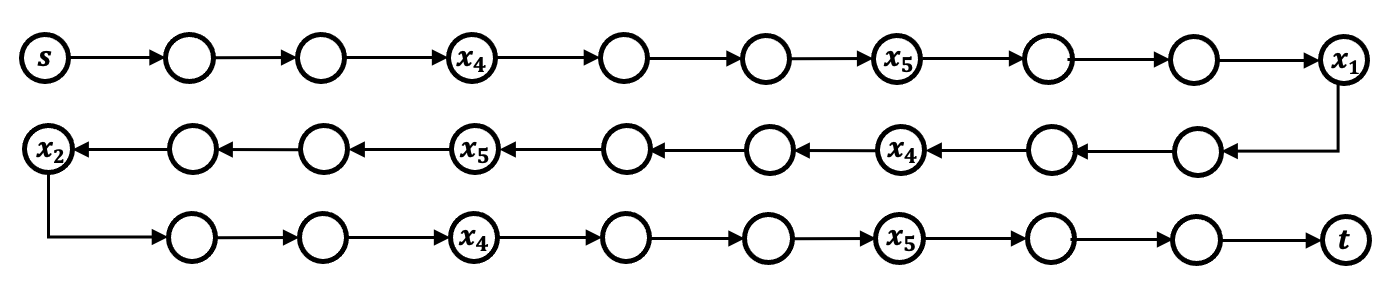}}}
\caption{Slotting two nodes between each of the pre-established nodes $s, x_1, x_2, x_4, x_5$, and $t$ in order to express a distance $27$, $s-t$ path, using $8$ quantifiers as in expressions (\ref{Tau8}) and (\ref{tau27})-(\ref{final_clause}).}
\label{fig:s-to-t-by-3-iter2}
\end{figure}
The associated logical expression is
\begin{equation} \label{Tau8}
\Upsilon_8 = \exists x_1 \exists x_2 \forall x_3 \exists x_4 \exists x_5 \forall x_6 \exists x_7 \exists x_8 (\tau_{27} \vee \tau_{26} \vee \cdot\cdot\cdot \vee \tau_1),~\textrm{and}
\end{equation}
\begin{align}
\tau_{27} = \quad\quad & ((x_3 = s \land x_6 = s) \rightarrow E(s,x_7) \land E(x_7, x_8) \land E(x_8, x_4))~~\land \label{tau27}\\
				& ((x_3 = s \land x_6 = t) \rightarrow E(x_4,x_7) \land E(x_7, x_8) \land E(x_8, x_5))~~\land \\
				& ((x_3 = s \land (x_6 \neq s \land x_6 \neq t)) \rightarrow E(x_5,x_7) \land E(x_7, x_8) \land E(x_8, x_1))~~\land \\ 
				&~~~~~...  \\
				&(((x_3 \neq s \land x_3 \neq t) \land (x_6 \neq s \land x_6 \neq t)) \rightarrow E(x_5,x_7) \land E(x_7, x_8) \land E(x_8, t)). \label{final_clause}
\end{align}
The expression $\tau_{27}$ will have the two ``pivot points'' around the universally quantified variables $x_3$ and $x_6$ and so have $3^2 = 9$ antecedent conditions corresponding to the possible ways the universally quantified variables $x_3$ and $x_6$ can each take the values $s, t$ or neither $s$ nor $t$. The right hand side of each equality condition describes how to fill in the edges in Figure \ref{fig:s-to-t-by-3-iter2} with two new vertices (utilizing the two newest existentially quantified variables, $x_7$ and $x_8$) and three new edges.

In this way we obtain sentences with $3n - 1$ quantifiers that can express STCON instances of path length up to $3^n$.
%$3n+2$ quantifiers that can express STCON instances of path length $3^{n+1}$. 
%Taking base-$3$ logarithms, we see that 
Thus, when $n$ is a power of $3$, we can express STCON instances of length $n$ with $3\log_3(n) -1 \approx 1.893\log_2(n) - 1$ quantifiers, and when $n$ is \textit{not} a power of $3$, with $\lfloor 3\log_3(n) + 2 \rfloor$ quantifiers. The theorem therefore follows. 
\end{proof}

\paragraph*{A Remark on Lower Bounds on Quantifier Rank and hence on Number of Quantifiers}

Lower bounds on the number of quantifiers for $s$-$t$ connectivity follow readily from the literature. The well-known proof that connectivity is not expressible in FOL (\cite[Prop. 6.15]{Imm99} or \cite[Corollary 3.19]{Lib12}) can be used to establish that $s$-$t$ connectivity with path length $n$ is not expressible as a formula of quantifier rank $\log_{2}(n) - c$ for some constant $c$. %We include a proof for completeness in Appendix~\ref{app:lb-proof}.

\begin{theorem}[Immerman, Proposition 6.15~\cite{Imm99}] \label{thm:st-lb}
There exists a constant $c$ such that $s$-$t$ connectivity to path length $n$ is not expressible as a formula of quantifier rank $\log_{2}(n) - c$.
\end{theorem}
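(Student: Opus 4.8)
The plan is to relate an $r$-round Ehrenfeucht--Fra\"{i}ss\'{e} game to $s$-$t$ connectivity via the classic ``path versus cycle'' construction. First I would fix two structures: a directed path $P_N$ from $s$ to $t$ of length $N$, and a disjoint union of two directed cycles (or a single long cycle with $s$ and $t$ placed on it so that they are \emph{not} joined by a short directed path), each of length roughly $N$, with $s$ and $t$ marked appropriately so that in one structure $s$ reaches $t$ within $n$ steps and in the other it does not (or not at all). The key quantitative input is the standard fact that Duplicator wins the $r$-round E-F game on two linear orders (here, paths) of sizes that are both at least $2^r - 1$; this is exactly Theorem~\ref{thm:f} in the excerpt. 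The same pebbling argument, lifted to paths inside graphs where the only relevant local structure near any pebble is an interval of the path, shows that if $N$ is at least roughly $2^r$ then Duplicator survives $r$ rounds, so no sentence of quantifier rank $r$ can separate the ``$s$-$t$ connected within $n$'' instance from the ``not connected'' instance.

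Concretely, the steps in order are: (1) choose the two graphs as above with $n$ and $N$ tied together so that a shortest $s$-$t$ path in the ``yes'' instance has exactly $n$ edges while $N = \Theta(2^r)$; (2) invoke Theorem~\ref{thm:f} (equivalently Theorem~\ref{thm:ef}) to conclude Duplicator wins the $r$-round E-F game on these two structures whenever $r \le \log_2(n) - c$ for a suitable absolute constant $c$; (3) apply the Equivalence Theorem for E-F games (Theorem~\ref{thm:ef}) to deduce that no first-order sentence of quantifier rank at most $\log_2(n) - c$ can be true on the connected instance and false on the disconnected one; (4) observe that any sentence expressing STCON$(n)$ would do exactly that, giving the contradiction and hence the claimed lower bound on $Rank(P)$. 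I would cite Immerman~\cite[Prop.\ 6.15]{Imm99} and Libkin~\cite[Corollary 3.19]{Lib12} for the packaging of this into the nonexpressibility of connectivity, and simply note that the $n$-bounded version of the argument yields the stated quantitative bound.

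The main obstacle is not the pebble-game counting itself --- that is entirely classical --- but setting up the two graphs so that the winning-condition analysis genuinely reduces to the linear-order game of Theorem~\ref{thm:f}. One must ensure that every partial isomorphism of pebbled sub-configurations in the two graphs really is determined by order/distance data along the path, so that Duplicator's winning strategy on linear orders transfers verbatim; subtleties arise from the endpoints $s,t$, from the edge direction, and from the ``wrap-around'' point of the cycle, each of which must be kept at distance more than $2^r$ from any pebble Spoiler can force. Handling these boundary effects is what forces the additive constant $c$ and is the only place where care is needed; once that is in place, the theorem is immediate from Theorems~\ref{thm:ef} and~\ref{thm:f}. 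Since the excerpt states this theorem as a (cited) consequence of known results, I would present the proof at the level of indicating the construction and the two invocations rather than re-deriving the E-F analysis from scratch.
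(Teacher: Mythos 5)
The paper gives no proof of this statement at all---it is quoted verbatim from Immerman (Proposition 6.15 of \cite{Imm99}, see also Corollary 3.19 of \cite{Lib12})---and your sketch reconstructs precisely the standard Ehrenfeucht--Fra\"{i}ss\'{e} argument that those cited sources use: build two path/cycle structures agreeing on everything except whether $s$ reaches $t$ within $n$ steps, show Duplicator survives $r$ rounds whenever the relevant segments have length at least roughly $2^r$ (the lift of Theorem~\ref{thm:f} to the edge relation, where the usual distance-matching strategy preserves adjacency), and conclude via Theorem~\ref{thm:ef}. Your proposal is correct and is essentially the same (cited) approach, including the correct identification of where the additive constant $c$ comes from.
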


Since the quantifier rank is a lower bound on the number of quantifiers, the previous theorem immediately implies a lower bound on the number of quantifiers as well.
%\noindent This result directly establishes the same lower bound for number of quantifiers. 
While we have shown that STCON($n$) can be expressed with $3\log_3(n) + O(1)$ quantifiers, we note that the minimum quantifier \emph{rank} of STCON($n$) is well-known to be lower.
\begin{theorem}[\cite{Lib12}] \label{thm:st-ubqr}
$s$-$t$ connectivity to path length $n$ can be expressed with a formula of quantifier rank $\log_2(n)+O(1)$.
\end{theorem}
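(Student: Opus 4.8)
The plan is to build an explicit recursive family of formulas for bounded-distance reachability and then bound its quantifier rank by two short inductions. For each $m \geq 0$ I would define a formula $\delta_m(x,y)$ over the vocabulary with binary relation $E$ (and the constants $s,t$) asserting ``there is a path of at most $m$ edges from $x$ to $y$,'' via a ``middle vertex'' recursion that splits the budget as evenly as possible. Then STCON$(n)$ is expressed by the sentence $\delta_n(s,t)$, and the whole content of the theorem is the claim that $\qr(\delta_m) = \lceil \log_2 m\rceil$ for $m\geq 1$.

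Concretely, take the base cases $\delta_0(x,y) := (x = y)$ and $\delta_1(x,y) := (x = y) \OR E(x,y)$, both quantifier-free, and for $m \geq 2$ set
\[
\delta_m(x,y) \;:=\; \exists z\,\bigl(\delta_{\lceil m/2\rceil}(x,z) \AND \delta_{\lfloor m/2\rfloor}(z,y)\bigr).
\]
Correctness follows by induction on $m$: if some $x$--$y$ walk has at most $m$ edges, let $z$ be its vertex after $\lceil m/2\rceil$ steps (or $y$ if the walk is shorter), splitting it into an $x$--$z$ walk of length $\leq\lceil m/2\rceil$ and a $z$--$y$ walk of length $\leq\lfloor m/2\rfloor$; conversely, concatenating witnesses for the two conjuncts yields an $x$--$y$ walk of length $\leq m$, and any such walk contains an $x$--$y$ path of length $\leq m$. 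The base cases are immediate.

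For the rank bound, using the clauses $\qr(\neg\phi)=\qr(\phi)$, $\qr(\phi\OR\psi)=\qr(\phi\AND\psi)=\max\{\qr(\phi),\qr(\psi)\}$, and $\qr(\exists z\,\phi)=\qr(\phi)+1$, one gets $\qr(\delta_0)=\qr(\delta_1)=0$ and, for $m\geq 2$,
\[
\qr(\delta_m) \;=\; 1 + \max\bigl(\qr(\delta_{\lceil m/2\rceil}),\,\qr(\delta_{\lfloor m/2\rfloor})\bigr) \;=\; 1 + \qr(\delta_{\lceil m/2\rceil}),
\]
where the second equality uses monotonicity of $m\mapsto\qr(\delta_m)$, itself a one-line induction. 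Unwinding gives $\qr(\delta_m)=\lceil\log_2 m\rceil$, so $\qr(\delta_n(s,t)) = \log_2 n + O(1)$, as required.

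I do not expect a genuine obstacle here; the only point that needs emphasis (and the reason this does not contradict the harder Theorem~\ref{thm:st-con-upper}) is that the quantifier rank of the conjunction $\delta_{\lceil m/2\rceil}(x,z)\AND\delta_{\lfloor m/2\rfloor}(z,y)$ is the \emph{maximum} of the two conjuncts' ranks, not their sum: the quantifiers inside the two copies sit at the same nesting depth, not one inside the other. This is exactly what fails for the number-of-quantifiers measure, where the two recursive copies roughly double the count at every level. The remaining care is purely bookkeeping: using the uneven split $\lceil m/2\rceil,\lfloor m/2\rfloor$ (rather than a doubling scheme that would only bound distance by $2^{\lceil\log_2 n\rceil}$) is what lets the construction express the bound ``at most $n$'' exactly while keeping the rank at $\log_2 n + O(1)$.
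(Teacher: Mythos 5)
Your proposal is correct, and it is exactly the standard divide-and-conquer argument (recursing on a midpoint so that the two halves sit at the same quantifier nesting depth) that the cited reference \cite{Lib12} uses; the paper itself gives no proof of this theorem, only the citation. Your closing remark correctly identifies the crux --- that quantifier rank takes the maximum over the two conjuncts while the quantifier count sums them --- which is precisely why this bound does not transfer to the number-of-quantifiers measure studied in Theorem~\ref{thm:st-con-upper}.
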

%A proof of the above theorem,  which is an adaptation of the $2\log_2(n) + O(1)$ construction of Appendix \ref{app:log_2_method}, is given in Appendix \ref{app:st-con-and-q-rank}. A similar proof is given in \cite{Lib12}.
%\subsection{A few Tidbits on $s$-$t$ Connectivity, L and NL}

\section{Final Comments and Future Directions}\label{sec:fina- comments}

Although progress on M-S games did not come until 40 years after their initial discovery in \cite{Immerman81}, the results of this paper show that these games are quite amenable to analysis, and the more detailed information they give about the requisite quantifier structure has the potential to yield many new and interesting insights.

Theorem~\ref{thm:discrepency} tells us that the number of quantifiers can be more than exponentially larger than the quantifier rank. This shows that the number of quantifiers is a more refined measure than the quantifier rank, and gives an interesting and natural measure of the complexity of a FO formula.  It would be interesting to find explicit examples where the quantifier rank is $k$, but where the required number of quantifiers grows even faster than in our example in the proof of Theorem~\ref{thm:discrepency}. %, say where the required number of quantifiers is double exponential in $k$ (e.g., $2^{2^{p(k)}}$ for some polynomial $p(k)$), or triple exponential in $k$ (e.g., $2^{2^{2^{p(k)}}}$ for some polynomial $p(k)$).  
Ideally, we would even like to find explicit examples where the required number of quantifiers is non-elementary in $k$. 

%We have taken a modest step in extending 
We have extended
the results on the number of quantifiers needed to distinguish linear orders of different sizes \cite{Fagin21} to distinguish rooted trees of different depths. Can this line of attack be carried further to incorporate other structures, say to other structures with induced partial orderings such as finite lattices?

The most immediate question arising from our work is whether one can improve the known upper or lower bounds on the number of quantifiers needed to express $s$-$t$ connectivity. In particular, what is the smallest constant $c \geq 1$ such that $s$-$t$ connectivity (up to path length $n$) is expressible using $c \log_2(n)$ quantifiers? Our Theorem \ref{thm:st-con-upper} shows that $Quants(\text{STCON($n$)})$ is at most $3\log_3(n) + O(1) \approx 1.893 \log_2(n) + O(1)$. The well-known lower bound of $Rank(\text{STCON($n$)}) \geq \log_2(n) - O(1)$ (cited as Theorem \ref{thm:st-lb}) 
yields the only lower bound we know on $Quants(\text{STCON($n$)})$, but we also know the upper bound $Rank(\text{STCON($n$)}) \leq \log_2(n)+O(1)$ (cited as Theorem~\ref{thm:st-ubqr}). 
As these upper and lower bounds for the quantifier rank of STCON$(n)$ essentially match, in order to improve the lower bound on $Quants(\text{STCON($n$)})$ further (by a multiplicative constant), we \emph{cannot} rely on a rank lower bound: we will have to resort to other methods, such as M-S games. %to give a better lower bound.

%In our $3\log_3(n)$ quantifier upper bound on $s$-$t$ connectivity (Section~\ref{sec:3log3}), the resulting first-order formulas have size $\Omega(n)$, whereas the $3 \log_2(n)$-quantifier formulas obtained from the old Stockmeyer-Meyer construction (as mentioned in Section~\ref{sec:st-conn}) have size only $O(\log n)$. Is there an inherent tradeoff between formula size and number of quantifiers for expressing $s$-$t$ connectivity? 
%, which directly implies the same lower bound for number of quantifiers. 
%Since these upper and lower bounds for quantifier rank essentially match, in order to improve the quantifier number lower bound further (by a multiplicative constant), we cannot rely on a quantifier rank lower bound: we will have to resort to other methods, such as multi-structural games. %to give a better lower bound.

Another question is whether we can find other problems with even larger quantifier number lower bounds than logarithmic ones. %(In the case of $s$-$t$ connectivity, we cannot hope to prove a quantifier lower bound  greater than $3.1 \log_3(n)$.) %, we could hope to prove larger lower bounds for other problems. 
Let us stress that substantially larger lower bounds on the number of quantifiers would have major implications for circuit complexity lower bounds. For example, by the standard way of expressing uniform circuit complexity classes in FOL~\cite{Imm99}, a property (over the $<$ relation) that requires $\log^{\alpha(n)}(n)$ quantifiers, where $\alpha(n)$ is an unbounded function of $n$, would imply a lower bound for $\text{uniform}_{\FO}\text{-}\mathsf{NC}$. See Appendix~\ref{app:eq} for an exact statement.

Another interesting direction to push this research is to extend the notion of multi-structural games to 2nd-order logic, first-order logic with counting or to fixed point logic. %, and see if anything interesting emerges.

%A graph is $2$-connected if you can remove any node and the graph remains connected.
%
%\begin{pquestion} If we add a predicate indicating that two nodes of a graph are connected, does this lead us to new interesting questions? In particular, can we then express $2$-connectivity in first-order logic?
%\end{pquestion}
%
%It is unclear how to express properties of the form ``removing a vertex gives....'' or ``adding a vertex with such and such connectivity properties gives....'' in first-order, or even 2nd-order logic. Note though that being 2-connected is in P so in NP and thus by Fagin's Theorem, expressible in $\exists$SO.  How?

%Ron asked a similar (though less ambitious) question:

%\begin{question} Is there a family of graphs on $n$ vertices that requires $n$ quantifiers to characterize?
%\end{question}

%\begin{question} If we extend the new games to FOL + Counting or to fixed point logics do we get anything interesting?
%\end{question}

%\pagebreak
\bibliography{ref}

\begin{thebibliography}{10}

\bibitem{AjtaiFagin90}
Mikl{\'{o}}s Ajtai and Ronald Fagin.
\newblock Reachability is harder for directed than for undirected finite
  graphs.
\newblock {\em J. Symbolic Logic}, 55(1):113--150, 03 1990.
\newblock URL:
  \url{https://projecteuclhttps://www.overleaf.com/project/60009b2b97fcc17482762e00id.org:443/euclid.jsl/1183743189}.

\bibitem{BarnesBRS98}
Greg Barnes, Jonathan~F. Buss, Walter~L. Ruzzo, and Baruch Schieber.
\newblock A sublinear space, polynomial time algorithm for directed
  \emph{s}-\emph{t} connectivity.
\newblock {\em {SIAM} J. Comput.}, 27(5):1273--1282, 1998.
\newblock URL: \url{https://doi.org/10.1137/S0097539793283151}.

\bibitem{BarringtonIS88}
David A.~Mix Barrington, Neil Immerman, and Howard Straubing.
\newblock On uniformity within {NC}\({}^{\mbox{1}}\).
\newblock In {\em Proceedings: Third Annual Structure in Complexity Theory
  Conference, Georgetown University, Washington, D. C., USA, June 14-17, 1988},
  pages 47--59. {IEEE} Computer Society, 1988.
\newblock URL: \url{https://doi.org/10.1109/SCT.1988.5262}.

\bibitem{BeameIP98}
Paul Beame, Russell Impagliazzo, and Toniann Pitassi.
\newblock Improved depth lower bounds for small distance connectivity.
\newblock {\em Comput. Complex.}, 7(4):325--345, 1998.
\newblock URL: \url{https://doi.org/10.1007/s000370050014}.

\bibitem{ChenOST16}
Xi~Chen, Igor~Carboni Oliveira, Rocco~A. Servedio, and Li{-}Yang Tan.
\newblock Near-optimal small-depth lower bounds for small distance
  connectivity.
\newblock In {\em Proceedings of the 48th Annual {ACM} {SIGACT} Symposium on
  Theory of Computing, {STOC} 2016, Cambridge, MA, USA, June 18-21, 2016},
  pages 612--625. {ACM}, 2016.
\newblock URL: \url{https://doi.org/10.1145/2897518.2897534}.

\bibitem{Clarke81}
Edmund~M. Clarke and E.~Allen Emerson.
\newblock Design and synthesis of synchronization skeletons using branching
  time temporal logic.
\newblock In {\em Workshop on the Logic of Programs, Lecture Notes in Computer
  Science}, volume 131, pages 52--71, 1981.

\bibitem{Comon99}
Hubert Comon, Max Dauchet, R\'{e}mi Gilleroz, Florent Jacquemard, Denis
  Luguiez, Sophie Tison, and Marc Tommasi.
\newblock Tree automata techniques and applications.
\newblock
  http://www.eecs.harvard.edu/~shieber/Projects/Transducers/Papers/comon-tata.pdf,
  1999.

\bibitem{dawgrokre+07}
A.~Dawar, M.~Grohe, S.~Kreutzer, and N.~Schweikardt.
\newblock Model theory makes formulas large.
\newblock In L.~Arge, C.~Cachin, T.~Jurdzi\'nski, and A.~Tarlecki, editors,
  {\em ICALP07}, volume 4596, pages 913--924. Springer, 2007.

\bibitem{Ehr61}
Andrzej Ehrenfeucht.
\newblock An application of games to the completeness problem for formalized
  theories.
\newblock {\em Fundamenta Mathematicae}, 49:129--141, 1961.

\bibitem{Fagin21}
Ronald Fagin, Jonathan Lenchner, Kenneth~W. Regan, and Nikhil Vyas.
\newblock Multi-structural games and number of quantifiers.
\newblock In {\em 2021 36th Annual ACM/IEEE Symposium on Logic in Computer
  Science (LICS)}, pages 1--13, 2021.

\bibitem{Fra54}
Roland Fra{\"i}ss{\'e}.
\newblock Sur quelques classifications des syst{\`e}mes de relations.
\newblock {\em Universit{\'e} d’Alger, Publications Scientifiques, S{\'e}rie
  A}, 1:35--182, 1954.

\bibitem{GopalanLM03}
Parikshit Gopalan, Richard~J. Lipton, and Aranyak Mehta.
\newblock Randomized time-space tradeoffs for directed graph connectivity.
\newblock In {\em {FST} {TCS} 2003: Foundations of Software Technology and
  Theoretical Computer Science, 23rd Conference, Mumbai, India, December 15-17,
  2003, Proceedings}, volume 2914 of {\em Lecture Notes in Computer Science},
  pages 208--216. Springer, 2003.
\newblock URL: \url{https://doi.org/10.1007/978-3-540-24597-1\_18}.

\bibitem{Harary58}
Frank Harary.
\newblock Note on {Carnap's} relational asymptotic relative frequencies.
\newblock {\em J. Symb. Log.}, 23(3):257--260, 1958.

\bibitem{Immerman81}
Neil Immerman.
\newblock Number of quantifiers is better than number of tape cells.
\newblock {\em J. Comput. Syst. Sci.}, 22(3):384--406, 1981.

\bibitem{Imm99}
Neil Immerman.
\newblock {\em Descriptive complexity}.
\newblock Graduate Texts in Computer Science. Springer, 1999.
\newblock URL: \url{https://doi.org/10.1007/978-1-4612-0539-5}.

\bibitem{KushR20}
Deepanshu Kush and Benjamin Rossman.
\newblock Tree-depth and the formula complexity of subgraph isomorphism.
\newblock In {\em 61st {IEEE} Annual Symposium on Foundations of Computer
  Science, {FOCS} 2020, Durham, NC, USA, November 16-19, 2020}, pages 31--42.
  {IEEE}, 2020.
\newblock URL: \url{https://doi.org/10.1109/FOCS46700.2020.00012}.

\bibitem{Lib12}
Leonid Libkin.
\newblock {\em Elements of Finite Model Theory}.
\newblock Texts in Theoretical Computer Science. Springer, 2012.
\newblock URL: \url{https://www.springer.com/gp/book/9783540212027}.

\bibitem{PAPA:1995}
Christos Papdimitriou.
\newblock {\em Computational Complexity}.
\newblock Addison-Wesley Publishing Company, Reading, MA USA, 1995.

\bibitem{REINGOLD:2005}
Omer Reingold.
\newblock Undirected st-connectivity in log-space.
\newblock {\em Proceedings of the 37th annual ACM symposium on the theory of
  computing}, pages 376--385, 2005.

\bibitem{Rossman14}
Benjamin Rossman.
\newblock Formulas vs. circuits for small distance connectivity.
\newblock In {\em Symposium on Theory of Computing, {STOC} 2014, New York, NY,
  USA, May 31 - June 03, 2014}, pages 203--212. {ACM}, 2014.
\newblock URL: \url{https://doi.org/10.1145/2591796.2591828}.

\bibitem{Savitch70}
Walter~J. Savitch.
\newblock Relationships between nondeterministic and deterministic tape
  complexities.
\newblock {\em J. Comput. Syst. Sci.}, 4(2):177--192, 1970.
\newblock URL: \url{https://doi.org/10.1016/S0022-0000(70)80006-X}.

\end{thebibliography}

\section*{Appendix}

\appendix

\section{The Number of Sentences in Vocabulary $V$ with $k$ Quantifiers is at Most Doubly Exponential in $k$} \label{app:double_exp_upper_bd}

The double exponential bound is obtained as follows. If a first-order sentence has $k$ quantifiers, then it can be written as $Q_1 x_1 \ldots Q_k x_k \phi$, where each $Q_i$ is a quantifier (either $\forall$ or $\exists$), and where $\phi$ is a quantifier-free formula in so-called \textit{full disjunctive normal form}, in other words, a disjunction of conjunctions, where each possible atomic formula in $V$ appears, either negated or not negated, in each of the conjunctions. The number of possible initial quantifier sequences $Q_1 x_1 \ldots Q_k x_k$ is $2^k$. If the vocabulary $V$ has $c$ relation symbols, each of arity at most $r$, then the number $A$ of atomic formulas is at most $c k^r$. So the number $C$ of conjunctions, which each contain either the positive or negated form of each atomic formula, is $2^A$.  A disjunction of these conjunctions corresponds to a selection of a subset of them and there are therefore at most $2^C$ of these. Hence, in  total, the number of sentences with $k$ quantifiers is at most $2^k 2^C = 2^k 2^{2^A} = 2^{(k+{2^{ck^r}})}< 2^{2^ {k + ck^r}}$.
This gives us a double exponential upper bound.  
Although it is not needed for our purposes, we note that a slight modification of this argument gives a double exponential lower bound, even when all of the quantifiers are existential.

\section{Expressing STCON(n) Using $2\log_{2}(n) + O(1)$ Quantifiers} \label{app:log_2_method}
% !TEX root = Multi-structural Games ITCS.tex
%Firstly, let us establish that we can express $s$-$t$ connectivity (henceforth STCON)
%using $2\log_2(n)$ quantifiers. We will do this through a sequence of expressions that we will inductively explain how to form. 
We start by showing that we can describe STCON in the case where the number of edges, $n = 2$, using a single quantifier:
\begin{equation} \label{Phi1}
\Phi_1 = \exists x ((E(s,x) \land E(x,t)) \vee E(s,t)).
\end{equation}
Here, and in subsequent expressions, the index of $\Phi_i$ refers to the number of quantifiers in the expression.  To understand the more complicated cases it is useful to write expression (\ref{Phi1}) in the following form:
\begin{equation} \label{Phi1.1}
\Phi_1 = \exists x (\phi_2 \vee \phi_1),
\end{equation}
where $\phi_2 = E(s,x) \land E(x,t)$ is the distance-$2$ part of the unquantified expression, and  $\phi_1 =  E(s,t)$ is the distance-$1$ part of the unquantified expression.

Analogously, for the case $n = 4$, we have
\begin{equation} \label{Phi3}
\Phi_3 = \exists x \forall y \exists z(\phi_4 \vee \phi_3 \vee \phi_2 \vee \phi_1),
\end{equation}
where $\phi_1$ and $\phi_2$ are as in (\ref{Phi1.1}), and
\begin{align}
\phi_4 =\quad\quad & y=s \rightarrow E(s,z) \land E(z,x)~~~\land \label{phi4} \\
		&y=t \rightarrow E(x,z) \land E(z,t),	\notag \\
\phi_3 =\quad\quad & y=s \rightarrow E(s,z) \land E(z,x)~~~\land  \label{phi3}\\
		&y=t \rightarrow E(x,t). \notag			
\end{align}

To understand what the above sentence is saying, consider for the moment the sentence (\ref{Phi3}), but without the disjuncts for $\phi_3, \phi_2$ and $\phi_1$, which express that $s$ and $t$ are connected at the respective distances $3,2$ and $1$. In the sentence $\exists x \forall y \exists z \phi_4$, remember that we have $5$ nodes. In this sentence, we are declaring the existence of an element $x$ that is the central node in a path from $s$ to $t$, as depicted in Figure \ref{fig:s-to-x-to-t}.
\begin{figure} [h]
\centerline{\scalebox{0.60}{\includegraphics{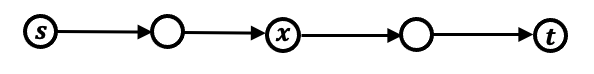}}}
\caption{The five node path from $s$ to $t$, with middle node $x$.}
\label{fig:s-to-x-to-t}
\end{figure}
Now $x$ is fixed, but depending on what $y$ is, $z$ can play different roles. Thus, if $y = s$ we use $z$ to guarantee a ``bridge'' from $s$ to $x$, in the sense of there being edges $E(s,z), E(z,x)$, and in case $y=t$, we use $z$ to guarantee a ``bridge'' from $x$ to $t$ in the analogous sense that there are edges $E(x,z), E(z,t)$. While the existentially quantified variable $x$ has a fixed interpretation, the universally quantified variable $y$ allows us to ``pivot'' in either of two directions and in so doing, the existentially quantified variable $z$ can play exactly two roles.

Returning to the expression (\ref{phi3}), $\phi_3$ drops an arbitrary one of the aforementioned ``bridges,'' and is true if and only if $d(s,t) = 3$. $\phi_2$ and $\phi_1$ were defined to support (\ref{Phi1.1}) and remain unchanged.
%, saying, respectively, that $d(s,t) = 2$ and $d(s,t) = 1$. 
Thus, in (\ref{Phi3}), $\Phi_3$ says that $d(s,t)$ is either $1,2,3$ or $4$ -- hence that $d(s,t) \leq 4$, as claimed.

As we introduce, inductively, successive quantifier alternations, the universal quantifier will serve to provide more ``pivot points'' to enable exponentially more of these length-$2$ bridges, as we shall see. In turn, the existentially quantified variable following each universal quantifier will be committed to the midpoint associated with each ``gap.'' To see how this plays out in the case of $\Phi_5$, which expresses STCON when $d(s,t) \leq 8$, we have
\begin{equation} \label{Phi5}
\Phi_5 = \exists x_1 \forall x_2 \exists x_3 \forall x_4 \exists x_5(\phi_8 \vee \phi_7 \vee \phi_6 \vee \phi_5 \vee \phi_4 \vee \phi_3 \vee \phi_2 \vee \phi_1)
\end{equation}
where we have replaced the earlier variables $x,y$ and $z$ by $x_1, x_2$ and $x_3$.  In a picture, the analog of the prior Figure \ref{fig:s-to-x-to-t} is Figure \ref{fig:s-to-x3-to-x1-to-x3-to-t}. 
\begin{figure} [h]
\centerline{\scalebox{0.60}{\includegraphics{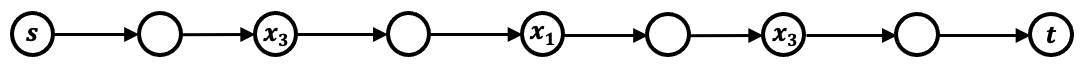}}}
\caption{The $9$ node, distance $8$, path from $s$ to $t$. The locations of $x_1$ and $x_3$ are ``committed'' as a result of the first three quantifiers, leaving four ``gaps'' that are filled as a result of the four possible ``pivots'' associated with the second universal quantifier -- i.e., the universal quantifier quantifying over $x_4$.}
\label{fig:s-to-x3-to-x1-to-x3-to-t}
\end{figure}
The first universal quantifier enables a first pivot, as we saw in the expression $\exists\forall\exists\phi_4$, allowing two possible placements of $z$ (now labelled $x_3$), while the second universal quantifier enables a second pivot, which allows for four possible locations for $x_5$. The full expression is as follows:

\begin{align}
\phi_8 = \quad\quad& (x_2 = s \land x_4 = s) \rightarrow E(s,x_5) \land E(x_5, x_3)~~\land \label{c1} \\
			       & (x_2 = s \land x_4 = t) \rightarrow E(x_3,x_5) \land E(x_5, x_1)~~\land \label{c2} \\
			       & (x_2 = t \land x_4 = s) \rightarrow E(x_1,x_5) \land E(x_5, x_3)~~\land \label{c3} \\
			       & (x_2 = t \land x_4 = t) \rightarrow E(x_3,x_5) \land E(x_5, t) \label{c4}.
\end{align}
Condition (\ref{c1}) establishes the ``bridge'' from $s$ to $x_3$, condition (\ref{c2}) establishes the ``bridge'' from $x_3$ to $x_1$, and so on. Now, for $\phi_7$, we replace the right hand side of (\ref{c4}) with $E(x_3,t)$;  for $\phi_6$, in addition to the replacement (\ref{c4}), we will replace the right hand side of (\ref{c3}) with $E(x_1, x_3)$, and analogously for $\phi_5$, where we will additionally replace the right hand side of  (\ref{c2}) with $E(x_3, x_1)$. The expressions for $\phi_1$ through $\phi_4$ remain as previously described for $\Phi_1$ and $\Phi_3$ (but with the change of variables $x \mapsto x_1, y \mapsto x_2, z \mapsto x_3$.

Now, suppose we have defined $\Phi_{2n+1} = \exists x_1 \forall x_2 \cdot\cdot\cdot \forall x_{2n} \exists x_{2n+1}(\phi_{2^{n+1}} \vee \cdot\cdot\cdot \phi_1)$. Then, as we consider $\Phi_{2n+3}$, the analog of Figure \ref{fig:s-to-x3-to-x1-to-x3-to-t} is Figure \ref{fig:s-to-long-to-t},
\begin{figure} [h]
\centerline{\scalebox{0.50}{\includegraphics{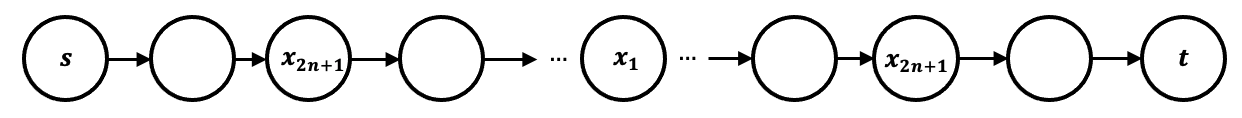}}}
\caption{The $2^{n+2} + 1$ node, distance $2^{n+2}$, path from $s$ to $t$. The locations associated with $x_1, x_3,...,x_{2n+1}$ are ``committed'' as a result of the first $2n+1$ quantifiers, leaving $2^{n+1}$ ``gaps'' that are filled as a result of the $2^{n+1}$ possible ``pivots'' associated with the final universal quantifier -- i.e., the universal quantifier quantifying over $x_{2n+2}$, which in turn determine the possible locations for $x_{2n+3}$ -- the variable associated with the final existential quantifier.}
\label{fig:s-to-long-to-t}
\end{figure}
and we can define
\begin{equation} \label{Phi2n+3}
\Phi_{2n+3} = \exists x_1 \forall x_2 \cdot\cdot\cdot \forall x_{2n+2} \exists x_{2n+3}(\phi_{2^{n+2}} \vee \cdot\cdot\cdot \phi_1),
\end{equation}
where
\begin{align}
\phi_{2^{n+2}} = \quad& (x_2 = s \land x_4 = s \land \cdot\cdot\cdot \land x_{2n+2} = s) \rightarrow E(s,x_{2n+3}) \land E(x_{2n+3}, x_{2n+1})~~\land  \label{Phi2n+3.1}\\
			       &(x_2 = s \land x_4 = s \land \cdot\cdot\cdot \land x_{2n+2} = t) \rightarrow E(x_{2n+1},x_{2n+3}) \land E(x_{2n+3}, x_{2n-1})~~\land   \label{Phi2n+3.2}\\
			       & \cdot\cdot\cdot \notag \\
			      &(x_2 = t \land x_4 = t \land \cdot\cdot\cdot \land x_{2n+2} = t) \rightarrow E(x_{2n+1},x_{2n+3}) \land E(x_{2n+3}, t). \label{Phi2n+3.n}
\end{align}

\medskip

Thus, for $n \geq 0$, with $2n+1$ quantifiers we can describe an STCON instance of distance $2^{n+1}$. Hence, taking logs to the base $2$, we see that we can express STCON on a graph with $n$ vertices using $2\log_2(n) + O(1)$ quantifiers, for a very small constant $O(1)$.

%\subsection{Expressing $s$-$t$ Connectivity to Path Length $n$ Using $3\log_{3}(n) + O(1)$ Quantifiers - Proof of Theorem %\ref{thm:st-con-upper}} \label{subsec:3log3}
%\input{3log_3(n)}

%\subsection{Proof of Theorem \ref{thm:st-ubqr}: $s$-$t$ connectivity to path length $n$ can be expressed with a formula of quantifier rank $\log_2(n)+O(1)$.} \label{app:st-con-and-q-rank}
%\input{st-con-and-q-rank}

\section{Equivalence of First-Order Logic and Uniform Circuit Complexity Classes}\label{app:eq}
\textbf{Notation:}  By $\FO[\{R_i\}]$ we mean the set of first-order logic sentences of constant size (independent of the size of the structure) using relations $\{R_i\}$. By $\text{uniform}_{\FO}\text{-}\mathcal{C}$ we refer to the first-order uniform version of a complexity class $\mathcal{C}$. (Informally, the connection language of circuits in this class is definable by a first-order sentence. See Definition 5.17 in~\cite{Imm99} for an exact definition.) We will always assume that our formulas have equality as a logical relation, hence whenever we have the $<$ relation we will also have the $\leq$ relation. A single binary string will be defined by a unary relation $\mathbf{1}(x)$ over the domain $[n] := \{1,\ldots,n\}$  which is true if and only if the $x^\textrm{th}$ position of the binary string is a $1$.  Each such relation corresponds directly to a unique $n$-bit string. In our first-order formulas over binary strings, we also allow the following relations over the domain $[n]$, with fixed interpretations:
\begin{enumerate}[(i)]
    \item $x < y$: binary relation which is true if and only if position $x$ occurs strictly before position $y$.
    \item $\text{BIT}(x, y)$: binary relation which is true if and only if the $y^\textrm{th}$ bit of $x$ is 1. 
\end{enumerate}
%When we consider the $<$ and the $\mathbf{1}$ relation we will refer to it as 
We say that $\FO[<]$ is the set of first-order formulas over binary strings (represented by the relation $\mathbf{1}$) with the $<$ relation. $\FO[<, \text{BIT}]$ is the set of first-order formulas over binary strings (represented by $\mathbf{1}$) with both the $<$ and BIT relation.

\begin{definition}[(Definition 4.24~\cite{Imm99})]\label{def:fotn}
Let $b : {\mathbb N} \rightarrow {\mathbb N}$. $\FO[\{R_i\}][b(n)]$ is the set of first-order logic formulas $\Phi$ (with the set of relations $\{R_i\}$) of the form
$$\Phi(x_1, x_2, \ldots, c_k) = (\Psi_1(x_1,\ldots,x_k))^{} \Psi_2(x_1, x_2, \ldots, x_k)$$
where $\Psi_1$, $\Psi_2$ are \emph{strings of logical symbols}, $|\Psi_1|, |\Psi_2|, k$ are all $O(1)$ (independent of the size of the structure), and $(\Psi_1(x_1,\ldots,x_n))^{}\Psi_2(x_1, x_2, \ldots, x_k)$ denotes the concatenation of $\Psi_1$ for $b(n)$ times followed by $\Psi_2$. \end{definition}

An important point in the above definition is that $\Psi_1, \Psi_2$ are \emph{strings of logical symbols} (e.g. $\exists$, $\forall$, $x$, $\vee$, $\wedge$, etc.): they need not be well-formed formulas themselves, but the concatenation $\Phi$ (as described above) must be well-formed.

To make more sense of this definition, we give a specific example. In the following example, we use $.$ after a quantifier in the following manner: By $(\exists x . \psi_1(x)) \psi_2(x)$ we mean $\exists x (\psi_1(x) \rightarrow \psi_2(x))$ and by $(\forall x . \psi_1(x)) \psi_2(x)$ we mean $\forall x (\psi_1(x) \rightarrow \psi_2(x))$.

\textbf{Example: }We can express $s$-$t$ connectivity in a graph of size $n$ as 
$$R(s, t) = (\Psi_1(s, t))^{\ceil{\log_2(n)+1}} \Psi_2(s, t)$$
where 
$$ \Psi_1(s, t) = (\exists z) (\forall a, b . ((a=s \land b=z) \vee (a=z \land b=t))) (\exists s, t. (s=a \land t=b))$$ and 
$$\Psi_2(s, t) = E(s, t) \vee (s = t).$$
%Note that $\FO = \FO[O(1)]$. 

Note that every formula in $\FO[\{R_i\}][b(n)]$ can be written as a formula with at most $O(b(n))$ quantifiers. 

From Barrington, Immerman and Straubing~\cite{BarringtonIS88} it is known that over the structure of binary strings, $ \text{uniform}_{\FO}\text{-}\AC^0 = \FO[<, \text{BIT}]$ where $\text{uniform}_{\FO}\text{-}\AC^0$ refers to first-order uniform $\AC^0$. In fact, for all polynomially bounded and first-order time constructible functions $t(n)$ (Theorem 5.22~\cite{Imm99}),
$\text{uniform}_{\FO}\text{-}\AC[t(n)]$ equals $\FO[<, \text{BIT}][t(n)]$,
where $\text{uniform}_{\FO}\text{-}\AC[t(n)]$ refers to first-order uniform $\AC^0$ circuits of depth $O(t(n))$.

\begin{lemma}[Theorem 5.22~\cite{Imm99}] \label{lem:fo-ac}
$\text{uniform}_{\FO}\text{-}\AC[t(n)] = \FO[<, \text{BIT}][t(n)]$.
\end{lemma}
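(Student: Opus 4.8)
The plan is to establish the two inclusions of the claimed equality separately, following the classical circuit/formula translation while carefully bookkeeping depth, size, and first-order uniformity.

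First I would handle $\text{uniform}_{\FO}\text{-}\AC[t(n)] \subseteq \FO[<,\text{BIT}][t(n)]$. Take a first-order uniform family $(C_n)$ of $\AC$ circuits of depth $O(t(n))$ and size $n^{O(1)}$. Since there are only $n^{O(1)}$ gates, each gate can be named by a tuple of $O(1)$ elements of $[n]$, and by hypothesis the connection language — whether a gate is an $\land$-gate, an $\lor$-gate, an input literal, or a constant, and which gates feed which — is defined by fixed first-order formulas over $<$ and $\text{BIT}$. I would then build a formula that ``walks down'' from the output gate toward the inputs, spending one copy of a fixed block $\Psi_1$ of logical symbols per circuit layer; after the $O(t(n))$ layers the walk reaches the input level, where $\Psi_2$ reads off the relevant bit using $\mathbf{1}$ together with the (first-order definable) labelling of the input gates. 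The one real subtlety is that a single block $\Psi_1$ must simultaneously realize an $\lor$-step (an $\exists$ over which child to descend into) and an $\land$-step (a $\forall$ over all children), because the gate type varies along the walk; as in the Example in this appendix, I would write $\Psi_1$ containing \emph{both} an $\exists$ and a $\forall$ quantifier, each guarded by the connection-language predicate for the current gate's type (using the ``$.$'' implication form of the appendix), so that exactly one of the two is active. Then I would verify that $(\Psi_1)^{b(n)}\Psi_2$, with $b(n) = O(t(n))$, is a well-formed first-order formula in $O(1)$ variables over $<$, $\text{BIT}$, $\mathbf{1}$.

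For the reverse inclusion $\FO[<,\text{BIT}][t(n)] \subseteq \text{uniform}_{\FO}\text{-}\AC[t(n)]$, I would take $\Phi = (\Psi_1)^{b(n)}\Psi_2$ with $|\Psi_1|,|\Psi_2|$ and the number of bound variables $k$ all $O(1)$, and translate each quantifier to an unbounded fan-in gate: $\exists$ to an $\lor$-gate with one input per value in $[n]$, $\forall$ to an $\land$-gate likewise, negations to $\lnot$-gates, and each atomic subformula over $<$, $\text{BIT}$, $\mathbf{1}$ to the obvious constant-depth subcircuit (comparison and bit extraction are first-order uniform $\AC^0$). Indexing a gate by its position in $\Phi$ together with the current values of the $k$ bound variables gives $O(t(n))\cdot n^{O(1)} = n^{O(1)}$ gates (using that $t(n)$ is polynomially bounded), so the size is polynomial, and the depth is $O(|\Psi_1|)\cdot b(n) + O(|\Psi_2|) = O(t(n))$. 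First-order uniformity of the resulting family holds because the wiring pattern and gate-type assignment are described by a single fixed first-order sentence (the shape of $\Phi$ is fixed, and $<$, $\text{BIT}$ are available) and $b(n)$ is first-order time constructible.

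The hard part will be the first direction: compressing the ``descend one layer'' operation into one fixed, well-formed string of logical symbols that uniformly implements both $\land$- and $\lor$-gates, keeps the variable count $O(1)$, and still parses correctly after being self-concatenated $b(n)$ times and spliced onto $\Psi_2$. The second direction is comparatively routine; the only things to double-check are that the circuit stays polynomial size and that first-order uniformity of the connection language is preserved by the translation.
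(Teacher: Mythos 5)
The paper does not prove this statement at all: it is imported verbatim as Theorem 5.22 of Immerman's book \cite{Imm99}, and the surrounding text only records the hypotheses (that $t(n)$ be polynomially bounded and first-order time constructible) under which the equivalence holds. So there is nothing in the paper to compare your argument against line by line; what you have written is, in effect, a reconstruction of Immerman's own textbook proof, and it follows the standard route correctly. Your second direction (formula $\to$ circuit) is routine and your accounting of size, depth, and uniformity is right. For the first direction you have correctly identified the crux --- a single self-concatenating block $\Psi_1$ that implements both gate types via guarded $\exists$ and $\forall$, with the current gate name re-bound each iteration as in the appendix's Example --- but two standard technicalities are missing from the sketch and would need to be supplied in a full write-up: (i) the circuit must be put in negation normal form (or negations confined to the inputs), since a fixed quantifier block cannot toggle polarity mid-walk; and (ii) the circuit must be layered/padded so that every path from the output to an input has length exactly $b(n)$, since $\Psi_1$ is iterated a fixed number of times regardless of which branch the walk descends. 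Both fixes are routine and preserve first-order uniformity, so I would call these omissions rather than errors; with them included, your proof is the same as the one the paper is citing.
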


Note that the above equivalence has the BIT operator, which we did not use in the rest of the paper. The rest of the section shows how to express $\text{uniform}_{\FO}\text{-}\NC$ functions (uniform circuits with $\plog(n)$ depth) without the BIT relation.

\begin{theorem}
Every circuit in $\text{uniform}_{\FO}\text{-}\NC$ over binary strings has an equivalent formula with $\plog(n)$ quantifiers using only the $<$ and $\mathbf{1}$ relations.
\end{theorem}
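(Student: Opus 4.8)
The plan is to chain together the equivalences already in hand and then strip the BIT relation out by substitution. First I would observe that $\text{uniform}_{\FO}\text{-}\NC$ is exactly $\bigcup_{c}\text{uniform}_{\FO}\text{-}\AC[\log^c n]$: bounded fan-in circuits of polylogarithmic depth are a special case of unbounded fan-in circuits of polylogarithmic depth, and converting unbounded fan-in back to bounded fan-in only multiplies the depth by $O(\log n)$, which is still polylogarithmic. Applying Lemma~\ref{lem:fo-ac} with $t(n)=\log^c n$ to each term gives $\text{uniform}_{\FO}\text{-}\NC=\bigcup_c \FO[<,\text{BIT}][\log^c n]$. By Definition~\ref{def:fotn} and the remark following Lemma~\ref{lem:fo-ac}, any circuit in $\text{uniform}_{\FO}\text{-}\NC$ thus has an equivalent formula $\Phi$ of the iterated form $(\Psi_1)^{t(n)}\Psi_2$ with $t(n)=\plog(n)$, $|\Psi_1|,|\Psi_2|=O(1)$, and in particular $\Phi$ can be written with $O(t(n))=\plog(n)$ quantifiers; the only thing standing between this and the claimed statement is that $\Phi$ may mention $\text{BIT}$.

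To remove $\text{BIT}$, count occurrences: since $|\Psi_1|,|\Psi_2|=O(1)$ and $\Psi_1$ is iterated $t(n)$ times, $\Phi$ contains only $O(t(n))=\plog(n)$ occurrences of an atom of the form $\text{BIT}(u,v)$. Hence it suffices to produce, over the domain $[n]$, a first-order formula $\beta(u,v)$ in the single relation $<$ (with equality) that defines $\text{BIT}$ and uses only $\plog(n)$ quantifiers: replacing each occurrence of $\text{BIT}(u,v)$ in $\Phi$ by a fresh copy of $\beta(u,v)$ yields an equivalent formula over $<$ and $\mathbf{1}$ alone with $\plog(n)+\plog(n)\cdot\plog(n)=\plog(n)$ quantifiers, as required. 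The point is that we may spend a polylogarithmic budget on each individual BIT-lookup precisely because there are only polylogarithmically many of them.

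The technical heart is the construction of $\beta$. Viewing the elements of $[n]$ as the integers $1,\dots,n$, the relation $\text{BIT}(u,v)$ is determined once the binary representations of all elements are known, and these are computed by a level-synchronous divide-and-conquer over the order of depth $O(\log n)$ (split a range into two halves, recurse on shifted copies, emit one more bit per element at each level). Because every level processes all elements at once — using quantifiers in the ``unbounded fan-in'' manner to gather global information in a single step — this is an $O(\log n)$-depth computation whose connection information is itself order-definable, so it translates into an $\FO[<]$ formula of iterated form $(\Theta_1)^{O(\log n)}\Theta_2$ with $|\Theta_1|,|\Theta_2|=O(1)$, hence $O(\log n)$ quantifiers, from which $\text{BIT}(u,v)$ is read off by an $O(1)$-quantifier wrapper. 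Equivalently, one may invoke the classical identity $\FO[<,\text{BIT}]=\FO[<,+,\times]$ and instead define the graphs of $+$ and $\times$ on $[n]$ over $<$ alone — addition by recursive doubling, multiplication by a balanced iterated-addition scheme — each again as a short iterated $\FO[<]$ formula of polylogarithmic quantifier count; in the language of the paper, this amounts to showing that $\FO[<][\plog(n)]$ already contains the numerical relations that separate it from $\FO[<,\text{BIT}][\plog(n)]$.

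The main obstacle I anticipate is exactly this last step: defining $+$, $\times$, and $\text{BIT}$ over the bare order relation within a polylogarithmic quantifier budget. The naive recursions fail — a length-$v$ chain of halvings cannot be named with $\plog(n)$ quantifiers, and the divide-and-conquer for interval-size equality branches into a full binary tree with $n$ leaves — so the computation must be organized level-synchronously, with each of the $O(\log n)$ rounds using only $O(1)$ quantifiers and acting on all elements simultaneously; one must then verify that the per-round operation (essentially the only delicate point, since it involves comparing sizes of intervals with no $+$ available) is genuinely expressible by a bounded block of $\FO[<]$ quantifiers in that synchronized setup. Everything else — choosing a gate numbering that makes the $\NC$ connection language simple, and the bookkeeping of the substitution into $\Phi$ — is routine given Lemma~\ref{lem:fo-ac} and Definition~\ref{def:fotn}.
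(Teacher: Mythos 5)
Your proposal follows essentially the same route as the paper: apply Lemma~\ref{lem:fo-ac} to pass to an $\FO[<,\text{BIT}][\plog(n)]$ formula, use the iterated form of Definition~\ref{def:fotn} to bound the number of BIT occurrences by $\plog(n)$, and substitute each occurrence with an $\FO[<]$ definition of BIT costing $O(\log n)$ quantifiers, for a total of $\plog(n)$. The one step you treat as the ``technical heart'' --- defining BIT over the bare order with $O(\log n)$ quantifiers --- the paper simply cites as a known fact (Exercise 4.18 in Immerman), so your level-synchronous divide-and-conquer sketch, while pointed in the right direction, is re-deriving a standard result rather than filling a genuine gap.
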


\begin{proof}
By Lemma~\ref{lem:fo-ac}, every circuit in $\text{uniform}_{\FO}\text{-}\NC$ has an equivalent $\FO[<, \text{BIT}][\plog(n)]$ formula.
By Definition~\ref{def:fotn}, every formula $\Phi(x_1, x_2,\ldots,x_k) \in \FO[<, \text{BIT}][\plog(n)]$ can be expressed as 
$$(\Psi_1(x_1, x_2, \ldots, x_k))^{\plog(n)} \Psi_2(x_1, x_2, \ldots, x_k)$$
where $\Psi_1$, $\Psi_2$ are strings, and $|\Psi_1|, |\Psi_2|, k$ are all bounded by constants. Thus the BIT relation occurs at most $\plog(n)$ times in the entire formula $\Phi$. It is well-known that $\text{BIT} \in \FO[<][\log n]$ (Exercise 4.18 in~\cite{Imm99}): that is, BIT can be expressed with a first-order formula of $O(\log n)$ size. Replacing each occurrence of BIT with the equivalent formula from $\FO[<][\log n]$ yields a formula with at most $O(\plog(n) \cdot \log(n) = \plog(n))$ quantifiers. Hence every circuit in $\text{uniform}_{\FO}\text{-}\NC$ has an equivalent formula with $\plog(n)$ quantifiers over only the $<$ relation.
\end{proof}

\end{document}